\documentclass[pdfa,a4paper,UKenglish,cleveref, autoref, thm-restate]{lipics-v2021}

\pdfoutput=1 
\hideLIPIcs  

\title{Monads and Distributive Laws in Substructural Contexts (Extended Version)} 

\author{Soichiro Fujii}{National Institute of Informatics, Tokyo, Japan}{s.fujii.math@gmail.com}{https://orcid.org/0000-0002-4109-3472}{}

\author{{Yun Chen} Tsai}{National Institute of Informatics, Tokyo, Japan \and SOKENDAI (The Graduate University for Advanced Studies), Kanagawa, Japan}{yctsai@nii.ac.jp}{https://orcid.org/0009-0003-7705-9609}{}

\author{Yo\`av Montacute}{National Institute of Informatics, Tokyo, Japan}{montacute@nii.ac.jp}{https://orcid.org/0000-0001-9814-7323}{Supported by ACT-X, Grant No.\ JPMJAX24CR, JST, Japan.}

\author{Ichiro Hasuo}{National Institute of Informatics, Tokyo, Japan \and SOKENDAI (The Graduate University for Advanced Studies), Kanagawa, Japan \and Imiron Co., Ltd., Tokyo, Japan}{i.hasuo@acm.org}{https://orcid.org/0000-0002-8300-4650}{}

\authorrunning{S. Fujii, Y.\,C. Tsai, Y. Montacute, and I. Hasuo} 

\Copyright{Soichiro Fujii, Yun Chen Tsai, Yo\`av Montacute, and Ichiro Hasuo} 

\ccsdesc{Theory of computation~Categorical semantics}
\ccsdesc{Theory of computation~Probabilistic computation}
\ccsdesc{Theory of computation~Program semantics}

\keywords{Monad, distributive law,  operad, category theory, effect} 

\category{} 

\relatedversion{This paper is the extended version of \cite{FTMH}.} 

\funding{All authors are supported by ASPIRE Grant No.\ JPMJAP2301, JST, Japan.}

\acknowledgements{We thank the anonymous referees of \cite{FTMH} for their helpful comments.}

\nolinenumbers 

\EventEditors{Claudia Faggian and Joost-Pieter Katoen}
\EventNoEds{2}
\EventLongTitle{41st Annual Symposium on Logic in Computer Science (LICS 2026)}
\EventShortTitle{LICS 2026}
\EventAcronym{LICS}
\EventYear{2026}
\EventDate{July 20--23, 2026}
\EventLocation{Lisbon, Portugal}
\EventLogo{}
\SeriesVolume{380}
\ArticleNo{76}

\usepackage{tikz}
\usetikzlibrary{arrows,arrows.meta,decorations.pathreplacing,decorations.markings,shapes,calc,cd,nfold}
\tikzset{labelsize/.style={font=\scriptsize}}
\usepackage{mathtools}
\usepackage{proof}
\usepackage[export]{adjustbox}
\newtheorem{notation}[theorem]{Notation}

\crefname{theorem}{Thm.}{Thms}
\crefname{section}{\S}{\S\S}
\crefname{definition}{Def.}{Defs}
\crefname{notation}{Notation}{Notations}
\crefname{example}{Ex.}{Exs}
\crefname{remark}{Rem.}{Rems}
\crefname{problem}{Problem}{Problems}
\crefname{equation}{}{}
\crefname{construction}{Construction}{Constructions}
\crefname{corollary}{Cor.}{Cors}
\crefname{proposition}{Prop.}{Props}
\crefname{lemma}{Lem.}{Lems}
\crefname{appendix}{Appendix}{Appendices}
\crefname{claim}{Claim}{Claims}
\crefname{figure}{Fig.}{Figs}

\newcommand{\cat}[1]{{\mathcal #1}} 
\newcommand{\mnd}[1]{{\mathbb #1}} 
\newcommand{\mon}[1]{{\mathsf #1}} 
\newcommand{\sr}[1]{{\mathsf #1}} 
\newcommand{\Set}{\mathbf{Set}}
\newcommand{\Monzero}{\mathbf{Mon}_0}
\newcommand{\Mon}{\mathbf{Mon}}
\newcommand{\SRing}{\mathbf{SRing}}

\DeclareMathOperator{\supp}{\mathrm{supp}}
\newcommand{\NN}{\mathbb{N}}

\newcommand{\RRnonneg}{\mathbb{R}_{\geq 0}}
\newcommand{\PPfin}{\mnd P_{\mathrm{fin}}}
\newcommand{\Pfin}{P_{\mathrm{fin}}}
\newcommand{\PPfinne}{\mnd P_{\mathrm{fin}}^{+}}
\newcommand{\Pfinne}{P_{\mathrm{fin}}^{+}}
\newcommand{\mult}{\mathrm{mult}}
\newcommand{\multzero}{\mathrm{mult,0}}
\newcommand{\Finj}{\mathbf{F}_{\mathrm{inj}}}
\newcommand{\Fid}{\mathbf{F}_{\mathrm{id}}}
\newcommand{\F}{\mathbf{F}}
\newcommand{\Fsurj}{\mathbf{F}_{\mathrm{surj}}}
\newcommand{\Fbij}{\mathbf{F}_{\mathrm{bij}}}
\newcommand{\Fmonoinj}{\mathbf{F}_{\mathrm{m,inj}}}

\newcommand{\EEnd}{\mathbb{E}\mathbf{nd}}
\newcommand{\End}{\mathrm{End}}

\newcommand{\Lan}{\mathrm{Lan}}
\newcommand{\nameof}[1]{#1}
\newcommand{\vN}[1]{#1} 
\newcommand{\aff}{\mathrm{aff}}
\newcommand{\Ctxt}{\mathbf{W}}
\newcommand{\W}{\mathbf{W}}

\renewcommand{\phi}{\varphi}
\renewcommand{\epsilon}{\varepsilon}

\newcommand{\id}{\mathrm{id}}
\newcommand{\ido}{\mathsf{id}} 
\newcommand{\subst}{\mathsf{subst}}
\newcommand{\str}{\tau}
\newcommand{\comm}{\psi}
\newcommand{\iso}{\mathrm{iso}}
\mathchardef\mhyphen="2D

\newcommand{\Mnd}[1]{\mathrm{Mnd}(#1)}
\newcommand{\Opd}[2]{{#1}\mhyphen\mathrm{Opd}(#2)}
\newcommand{\Rf}[2]{\mathrm{Rf}_{#1}(#2)}
\newcommand{\Mndcat}[1]{\mathbf{Mnd}(#1)}
\newcommand{\Opdcat}[1]{{#1}\mhyphen\mathbf{Opd}}

\newcommand{\op}{\mathrm{op}}

\begin{document}

\maketitle

\begin{abstract}
We present a categorical theory of monads and distributive laws \emph{in substructural contexts}. In the study of distributive laws, the roles of (the absence of) structural rules for variable contexts have been recognized; our theory formalizes these substructural situations using Tronin's \emph{verbal categories} $\mathbf W$, in a uniform and presentation-independent manner. We introduce the classes of \emph{$\mathbf W$-operadic monads} (those \emph{defined} via the structural rules in $\mathbf W$) and of \emph{$\mathbf W$-commutative monads} (those \emph{invariant} under the structural rules in $\mathbf W$). We give a canonical construction of a distributive law $ST\to TS$ of monads on $\mathbf{Set}$; it is applicable when $S$ is $\mathbf W$-operadic and $T$ is $\mathbf W$-commutative (under mild conditions). This accounts for many known and new distributive laws. Even when $S$ fails to be $\mathbf W$-operadic, we can \emph{refine} $S$ and force $\mathbf W$-operadicity; this captures Varacca and Winskel's construction of indexed valuations.
\end{abstract}

\section{Introduction}
\label{sec:intro}

\subparagraph*{Monads}
In program semantics and, more generally, in system behavior modeling,  the categorical notion of \emph{monad} has been successfully used as a uniform interface to various computational \emph{effects}. Examples of effects include non-functional \emph{side effects} in functional programming~\cite{Moggi91a}, additional behavioral layers for reactive systems such as (possibilistic or probabilistic) nondeterminism~\cite{Bartels04,Jacobs04c}, and so on. The mathematical theory of monads---together with a closely related theory of \emph{universal algebra}---not only offers unified foundations for effects, but also enables novel programming principles such as \emph{effect handlers}~\cite{PlotkinP13}. 

\newpage

\subparagraph*{Distributive laws}
In this context, \emph{distributive laws} between monads have attracted attention. Originally a mathematical construct~\cite{Beck-distributive-laws}, distributive laws enable us to \emph{combine effects}. Specifically, given monads $S$ and $T$ on the category $\Set$ of sets (in this paper, we focus on monads on $\Set$), a distributive law $\delta\colon ST\to TS$ of $S$ over $T$ equips the composite $TS$ with a monad structure, with the consequence that  unified monadic frameworks for programming, static analysis, verification, etc.\ accommodate the coexistence of two different layers of effects modeled by $S$ and $T$.

The question of obtaining  a distributive law has intrigued many theoreticians, too. This was initiated by Plotkin's observation that there is no distributive law $DP\to PD$ of the (finitely supported) probability distribution monad $D$ over the powerset monad $P$ (see \cite{Varacca-Winskel-dist-law}). 
This observation sparked much work: both on the positive side (obtaining distributive laws~\cite{Dahlqvist-Parlant-Silva-layer-by-layer,Parlant-thesis,GoyP20}) and the negative side (proving ``no-go'' theorems~\cite{KlinS18,ZwartM22}).

\subparagraph*{Two approaches: weak distributive laws and refinement}
On the positive side, towards obtaining a distributive law, there have mainly been two technical approaches. The first ``weak distributive law'' approach,
when a distributive law $ST\to TS$ does not exist,  finds it sufficient to have a \emph{weak distributive law} $\delta\colon ST\to TS$. Here ``weak'' means $\delta$'s compatibility with the unit $\eta^{S}$ of $S$ is dropped. A weak distributive law $\delta$ induces a combination of $S$ and $T$, but this time it is not given simply by the composite $TS$. Roughly speaking, this approach modifies $T$: concretely, for  $S=D$ and $T=P$, the combination essentially replaces $T=P$ with the convex powerset monad over the Eilenberg--Moore category $\Set^D$ \cite{Garner-Vietoris-monad,GoyP20,BonchiS22,Aristote25,TixKP05}.

The second ``refinement'' approach  modifies $S$ instead. It finds a reason for the failure of a distributive law $\delta\colon ST\to TS$ in \emph{equations} in (an equational presentation of) $S$, and thus proposes to use a \emph{refinement} $\widehat{S}$ of $S$---obtained by dropping the problematic equations---in place of $S$. This approach is initiated 
 in~\cite{Varacca-Winskel-dist-law} for specific monads (variations of $D$ and $P$---see \cref{subsec:IV-revisited}).
Generalization of this approach is sought in~\cite{Dahlqvist-Parlant-Silva-layer-by-layer,Parlant-thesis}. 
We follow this line of work.

\subparagraph*{Categorical theory of substructural contexts for monads and distributive laws}
In this paper, we introduce a general  framework for the aforementioned ``refinement'' approach to  distributive laws, centered around the idea of \emph{substructural contexts}. Our categorical framework uses \emph{verbal categories}~\cite{Tronin-abstract-clones-and-operads} for modeling substructural contexts.

The relevance of structural rules (exchange, weakening, and contraction) to algebraic theories (and thus monads) is widely acknowledged, explicitly or implicitly~\cite{Curien-operads-clones,Jacobs-weakening-contraction,Leinster-higher-operads}. In particular, \cite{Dahlqvist-Parlant-Silva-layer-by-layer,Parlant-thesis,Manes-Mulry-monad-compositions-1} study the roles of  structural rules in distributive laws. Our framework here offers a unifying categorical picture parametrized by a verbal category $\W$. 

We sketch our theory in the coming paragraphs.

\subparagraph*{Substructural contexts and verbal categories (\cref{sec:notions-of-operad})}
Restricting structural rules---\emph{exchange} (E), \emph{weakening} (W), and \emph{contraction} (C)---in variable contexts is a common tool in logic. 
In the categorical study of algebraic theories, operads, and binding syntax, it is common to model the application of structural rules with 1) morphisms of $\F$, the category of finite ordinals and functions between them \cite{FiorePlotkinTuri-AbstractSyntax}, and 2) their string diagram presentation \cite{Curien-operads-clones,Leinster-higher-operads}. For example, the context transformation
from $x_0,x_1,x_2\vdash p(x_0,x_1,x_2)$ to 
$x_0,x_1,x_2,x_3\vdash p(x_1,x_3,x_1)$---obtained with E, W, and C for a term $p$---is represented by 
the function $\alpha\colon \vN{3}\to\vN{4}$
with $\alpha(0)=\alpha(2)=1$ and $\alpha(1)=3$;
the corresponding string diagram (in which we have $\alpha(i)=j$ iff $i$ on the right is connected to $j$ on the left) is depicted on the left of \eqref{eqn:string-diag} below.
\begin{equation}\label{eqn:string-diag}
\begin{tikzpicture}[baseline=-\the\dimexpr\fontdimen22\textfont2\relax ]
\node[labelsize] at (-1.5,0.3) {$0$};
\node[labelsize] at (-1.5,0.1) {$1$};
\node[labelsize] at (-1.5,-0.1) {$2$};
\node[labelsize] at (-1.5,-0.3) {$3$};
\node[labelsize] at (-0.3,0.25) {$0$};
\node[labelsize] at (-0.3,0) {$1$};
\node[labelsize] at (-0.3,-0.25) {$2$};
\draw[thick] (-0.5,0) .. controls (-0.7,0) and (-0.8,-0.3) .. (-1.3,-0.3);
\draw[thick] (-0.5,-0.25) .. controls (-0.7,-0.25) and (-0.8,0.1) .. (-1,0.1);
\draw[thick] (-0.5,0.25) .. controls (-0.7,0.25) and (-0.8,0.1) .. (-1,0.1);
\draw[thick] (-1,0.1)--(-1.3,0.1);
\draw[thick] (-0.95,0.3) -- (-1.3,0.3);
\draw[thick] (-0.95,-0.1) -- (-1.3,-0.1);
\filldraw [fill=black] (-0.95,-0.1) circle [radius=1.5pt];
\filldraw [fill=black] (-0.95,0.1) circle [radius=1.5pt];
\filldraw [fill=black] (-0.95,0.3) circle [radius=1.5pt];
\end{tikzpicture}
\qquad\qquad\qquad\qquad
    \begin{tikzpicture}[baseline=-\the\dimexpr\fontdimen22\textfont2\relax ]
\draw[thick] (0,0.1) .. controls (0.3,0.1) and (0.5,-0.1) .. (0.7,-0.1);
\draw[thick] (0,-0.1) .. controls (0.3,-0.1) and (0.5,0.1) .. (0.7,0.1);
\end{tikzpicture}\qquad\qquad
\begin{tikzpicture}[baseline=-\the\dimexpr\fontdimen22\textfont2\relax ]
\draw[thick] (-0.35,0)-- (-0.7,0);
\filldraw [fill=black] (-0.35,0) circle [radius=1.5pt];
\end{tikzpicture}
\qquad\qquad
    \begin{tikzpicture}[baseline=-\the\dimexpr\fontdimen22\textfont2\relax ]
\draw[thick] (0,0.1) .. controls (-0.2,0.1) and (-0.3,0) .. (-0.4,0);
\draw[thick] (0,-0.1) .. controls (-0.2,-0.1) and (-0.3,0) .. (-0.4,0);
\draw[thick] (-0.4,0)-- (-0.7,0);
\filldraw [fill=black] (-0.35,0) circle [radius=1.5pt];
\end{tikzpicture}
\end{equation}
The structural rules E, W, and C correspond to the three ``parts'' of the string diagram as on the right of \eqref{eqn:string-diag}, respectively (see also \cite[\S~2]{Curien-operads-clones} for a related discussion). 

Conveniently, considering a \emph{substructural context}---i.e.\ prohibiting the use of certain structural rules---corresponds to singling out a wide subcategory of $\F$. Concretely, allowing the rules designated below on the left corresponds to the subcategories of $\F$ on the right.
\begin{equation}
\label{eqn:Ctxt-logically}
\begin{tikzpicture}[x=5.5ex,y=5ex,baseline=-\the\dimexpr\fontdimen22\textfont2\relax ]
    \node(01) at (0,0.6) {$\emptyset$};
    \node(02) at (2,0.6) {$\{\mathrm E\}$};
    \node(03) at (4.5,0.6) {$\{\mathrm E,\mathrm C\}$};
    \node(11) at (0,-0.6) {$\{\mathrm W\}$};
    \node(12) at (2,-0.6) {$\{\mathrm E,\mathrm W\}$};
    \node(13) at (4.5,-0.6) {$\{\mathrm E,\mathrm W,\mathrm C\}$};
    \draw [->] (01) to (02);
    \draw [->] (02) to (03);
    \draw [->] (11) to (12);
    \draw [->] (12) to (13);
    \draw [->] (01) to (11);
    \draw [->] (02) to (12);
    \draw [->] (03) to (13);
\end{tikzpicture}
\qquad\qquad
\begin{tikzpicture}[x=5ex,y=5ex,baseline=-\the\dimexpr\fontdimen22\textfont2\relax ]
    \node(01) at (0,0.6) {$\Fid$};
    \node(02) at (2,0.6) {$\Fbij$};
    \node(03) at (4,0.6) {$\Fsurj$};
    \node(11) at (0,-0.6) {$\Fmonoinj$};
    \node(12) at (2,-0.6) {$\Finj$};
    \node(13) at (4,-0.6) {$\F$};
    \draw [->] (01) to (02);
    \draw [->] (02) to (03);
    \draw [->] (11) to (12);
    \draw [->] (12) to (13);
    \draw [->] (01) to (11);
    \draw [->] (02) to (12);
    \draw [->] (03) to (13);
\end{tikzpicture}
\end{equation}
Here, 
     $\Fid$ has only identity functions as morphisms; 
     $\Fmonoinj$ has monotone injections (with respect to the total order $0<1<\dots<n-1$); 
     $\Fbij$ has bijections;
     $\Finj$ has injections;
     $\Fsurj$ has surjections. 
(The settings $\{\mathrm C\}$ and $\{\mathrm W,\mathrm C\}$ are ill-behaved and thus omitted; see \cref{ex:verbal-cats-nonex}.)

The notion of \emph{verbal category}~\cite{Tronin-abstract-clones-and-operads} axiomatizes (wide) subcategories of $\F$ suited for modeling substructural contexts. Examples include the six categories in~\eqref{eqn:Ctxt-logically} (see also \cref{ex:verbal-cats}).

\subparagraph*{$\W$-operads and their monads (\cref{sec:notions-of-operad})}
The notion of \emph{$\W$-operad} in~\cite{Tronin-abstract-clones-and-operads} embodies the idea of ``algebraic theories in the $\W$-substructural context''. Its definition follows the same format---with $\ido$ and $\subst$---as for (symmetric or non-symmetric) operads. 

For our theory, we introduce the notion of \emph{$\W$-operadic monad}, as those which are induced by a $\W$-operad via a left Kan extension. This formalizes the idea of ``a monad in the $\W$-substructural context''. (For example, $\F$-operadic monads coincide with finitary monads.)

\subparagraph*{$\W$-commutative monads (\cref{sec:notions-of-commutative-monad})}
Our  construction of a distributive law $ST\to TS$ (\cref{thm:main-detailed}) requires, very roughly speaking, that
\begin{itemize}
 \item $S$ is a monad defined using the structural rules in $\W$, and
 \item $T$'s effect is invariant under the structural rules in $\W$.
\end{itemize}
We just formalized the former as $\W$-operadic monads; for the latter, we introduce the notion of \emph{$\W$-commutative monad}.
The definition of $\W$-commutativity is by a suitable diagram, one similar to those used in~\cite{Parlant-thesis,Dahlqvist-Parlant-Silva-layer-by-layer,Parlant-Rot-Silva-Bas}. It unifies some known notions such as
(ordinary) \emph{commutativity}~\cite{Kock-monads-on-SMCC}, \emph{affinity}~\cite{Kock-bilinearity}, and \emph{relevance}~\cite{Jacobs-weakening-contraction}.

\subparagraph*{Canonical distributive laws and application (\cref{sec:dist-law,sec:monad-to-operad,sec:applications})} 
One of our main results in this paper is \cref{thm:main-detailed} where, as announced above, we present a general construction of a distributive law $\delta\colon ST\to TS$, 
assuming that
\begin{itemize}
 \item $S$ is a $\W$-operadic monad, and
 \item 
 $T$ is a $\W$-commutative monad.
\end{itemize}
The construction is categorical and canonical, along the coend description of (the left Kan extension used in) a $\W$-operadic monad. 

The construction of this canonical $\delta$ is uniform for any verbal category $\W$. An additional condition is required for $\delta$ to be compatible with the multiplication $\mu^{T}$ of $T$: it is  that $\W$ accommodates the exchange rule (E)---unless $S$ is so lean that it is generated by operations of arity $\le 1$. See \cref{thm:main-detailed} for a precise statement.

Given the last condition (of accommodating (E)), it is natural for us to focus on the following four verbal categories.
\begin{equation}\label{eq:fourVerbalCatforIntersection}
\begin{tikzpicture}[x=5ex,y=5ex,baseline=-\the\dimexpr\fontdimen22\textfont2\relax ]
    \node(02) at (1.5,-0.75) {$\Fbij$};
    \node(03) at (3,0) {$\Fsurj$};
    \node(12) at (0,0) {$\Finj$};
    \node(13) at (1.5,0.75) {$\F$};
    \draw [->] (02) to (03);
    \draw [->] (12) to (13);
    \draw [->] (02) to (12);
    \draw [->] (03) to (13);
\end{tikzpicture}
\end{equation}
We also show the following \emph{monotonicity} results:
\begin{itemize}
 \item $\W$-operadicity is up-closed (i.e.\ if $S$ is $\W$-operadic and $\W\subseteq \W'$, then $S$ is $\W'$-operadic, \cref{extension-of-operad}(2)); and
 \item $\W$-commutativity is down-closed (\cref{rmk:commutativity-monotonicity}).
\end{itemize}
Therefore, for the application of the general construction in \cref{thm:main-detailed}, we can look for $\W$ in the intersection of the upper set of $S$'s $\W$-operadicity and the lower set of $T$'s $\W$-commutativity. 

This ``find an intersection'' strategy has two usages. The first usage is negative: given a pair $(S,T)$, the empty intersection of the above two sets shows that the canonical distributive law in \cref{thm:main-detailed} does not work. While this does not prove a no-go result (other distributive laws may exist), it is often easy to check and thus a useful red flag. Of course, the intersection is empty for the combinations with known no-go results.

The second use is a positive one: when the intersection is empty, we can modify $S$ or $T$ for a nonempty intersection, so that the canonical distributive law works. While modifying $T$ is possible---techniques for forcing $\W$-commutativity on a monad are studied for some $\W$, see e.g.~\cite{Jacobs-weakening-contraction,CaretteLZ23,Lindner-affine-part}---we are  interested in modifying $S$ in this paper. This modification of $S$ must extend the upper set of $\W$-operadicity downwards, so that a new monad relies on fewer structural rules. 

To do so, we introduce a categorical construction, namely  the \emph{$\W$-operadic refinement} $\Rf{\W}{S}$ of a monad $S$ (\cref{sec:monad-to-operad}). It is the monad induced, in the Kan extension-way described in \cref{sec:notions-of-operad}, by the $\W$-operad given by \emph{restricting} $S$. The refinement $\Rf{\W}{S}$ is $\W$-operadic by construction. We show that the \emph{indexed valuation monad} $IV$ in~\cite{Varacca-Winskel-dist-law}---the first example of the ``refinement'' approach for distributive laws---is indeed an instance of $\W$-operadic refinement.

\subparagraph*{Contributions and organization} 
We introduce a  theory of substructural contexts for monads and distributive laws. It is  categorical and uniform, unifying previous observations for specific substructural contexts~\cite{Parlant-thesis,Dahlqvist-Parlant-Silva-layer-by-layer,Parlant-Rot-Silva-Bas,Kock-monads-on-SMCC,Kock-bilinearity,Jacobs-weakening-contraction} and the axiomatization by verbal categories $\W$~\cite{Tronin-abstract-clones-and-operads}. The theory has the following main components.
\begin{itemize}
 \item After reviewing verbal categories~\cite{Tronin-abstract-clones-and-operads}, we introduce the classes of \emph{$\W$-operadic}  and \emph{$\W$-commutative} monads (\cref{sec:notions-of-operad,sec:notions-of-commutative-monad}). 
 \item We present a  construction of a \emph{canonical distributive law} $\delta\colon ST\to TS$ for  $\W$-operadic $S$ and $\W$-commutative $T$ (\cref{thm:main-detailed}).
 \item We present the notion of \emph{$\W$-operadic refinement} of a monad (\cref{sec:monad-to-operad}). It can be used to force $\W$-operadicity required by \cref{thm:main-detailed}.
 \item We discuss many  examples of  general constructs. These include various ``multiset monads'' (\cref{ex:MS,ex:AS,ex:Fbij-operadic,ex:Finj-operadic,ex:Fsurj-operadic}) and new distributive laws
 (e.g.\ one that resolves a problem in~\cite{Tsai-et-al-concur},
 see \cref{ex:IVAndTsaiEtAlCONCUR25})
 obtained by forcing $\W$-operadicity (\cref{sec:applications}). 
\end{itemize}
Most proofs are deferred to \cref{sec:apx-proof}.

\subparagraph*{Related work} 
Many related works have been discussed in the context of our motivation, or will be discussed later in a suitable technical context. In particular, the relationship to closely related works~\cite{Manes-Mulry-monad-compositions-1,Parlant-thesis,Dahlqvist-Parlant-Silva-layer-by-layer} is discussed further in \cref{subsec:EMKl}. 

Here are some other related works. A big motivation for distributive laws comes from   the combination of (possibilistic) nondeterminism and probabilistic branching. This combination is observed e.g.\ in Markov decision processes (MDPs), a fundamental model for probabilistic model checking and reinforcement learning. Categorical studies of this specific combination include~\cite{Jacobs21,KozenS24,OngMK25,MioV20,MioSV21}.

Once program/system semantics is defined by monads, reasoning about it with \emph{program logic} is the next step. Recent works in this direction include~\cite{ZilbersteinKST25,LiellCockS25,ChenMM24,ChenMM25,MatacheS19,SimpsonV20}.

\section{Preliminaries}\label{sec:prelim}
The set of all natural numbers (or finite ordinals) is denoted by $\NN=\{0,1,2,\dots\}$.
Each $n\in\NN$ is identified with the $n$-element set $\{0,1,\dots,n-1\}$. 
The category of sets and functions between them is denoted by $\Set$.
It has a symmetric monoidal structure $(1,\times)$ via products.
For any set $X$, we denote by $!_X\colon X\to \vN 1$ the unique function from $X$ to $1$.

Since our focus in this paper is on monads on $\Set$, by a \emph{monad} we shall always mean a \emph{monad on $\Set$}.
Specifically, a \emph{monad} $\mnd T$ (on $\Set$) is a triple $\mnd T= (T,\eta,\mu)$ where $T\colon \Set \to \Set$ is a functor (called the \emph{functor part} of $\mnd T$) and $\eta\colon \id_{\Set}\to T$ and $\mu\colon TT\to T$ are natural transformations (called the \emph{unit} and \emph{multiplication} of $\mnd T$, respectively), such that $\mu\circ T\eta=\id_T=\mu\circ \eta T$ and $\mu\circ T\mu=\mu\circ \mu T$ hold.
Here are some examples of monads.

\begin{example}[$C$-exception monad $\mnd E^C$]\label{ex:exception-monad}
    Any set $C$ has a unique monoid structure $(C,e_C\colon \vN 0\to C,\nabla_C\colon C+C\to C)$ with respect to coproducts. 
    This induces a monad $\mnd E^{C}$ called the \emph{$C$-exception monad}. Its functor part is $C+ (-)\colon \Set\to \Set$, and its unit and multiplication are $\eta_X=e_C+X\colon X\to C+X$ and 
    $\mu_X=\nabla_C+X\colon C+C+X\to C+X$ for each set $X$. \lipicsEnd
\end{example}

\begin{example}[$C$-reader monad $\mnd R^C$]\label{ex:reader-monad}
    Any set $C$ also admits a unique comonoid structure $(C,!_C\colon C\to \vN 1,\Delta_C\colon C\to C\times C)$ with respect to products. 
    It induces the \emph{$C$-reader monad} $\mnd R^C$ in the same way as in \cref{ex:exception-monad}: its functor part
    is $(-)^C\colon \Set\to \Set$, and its unit and multiplication
    are $\eta_X=X^{!_C}\colon X\to X^C$ and $\mu_X=X^{\Delta_C}\colon (X^C)^C\cong X^{C\times C}\to X^C$.
    \lipicsEnd
 \end{example}

\begin{example}[$\mon M$-writer monad $\mnd W^{\mon M}$]\label{ex:writer-monad}
    Let $\mon M=(M,1\in M, \cdot\colon M\times M\to M)$ be a monoid (with respect to products). 
    It induces the monad $\mnd W^{\mon M}$ called the \emph{$\mon M$-writer monad}, whose functor part is $M\times (-)\colon \Set\to \Set$, and whose unit and multiplication are defined by $\eta_X(x)=(1,x)$ and $\mu_X(m,n,x)=(mn,x)$ for each set $X$, $x\in X$, and $m,n\in M$.
    \lipicsEnd
\end{example}

Recall that a \emph{semiring} is a tuple $\sr S=(S, 0\in S, +\colon S\times S\to S, 1\in S, \cdot\colon S\times S\to S)$ where $S$ is a set, $(S,0,+)$ is a commutative monoid, $(S,1,\cdot)$ is a monoid, and 
such that $\cdot\colon (S,0,+)\times (S,0,+)\to (S,0,+)$ is a bihomomorphism of monoids, i.e., for all $x,y,z\in S$,
 $x\cdot 0=0$, $x\cdot (y+z)= x\cdot y + x\cdot z$,
 $0\cdot z=0$, and $(x+y) \cdot z= x\cdot z + y\cdot z$ hold.
The semiring $\sr S$ is \emph{commutative} if $(S,1,\cdot)$ is a commutative monoid.

\begin{example}[${\sr S}$-multiset monad $\mnd M^{\sr S}$]\label{ex:MS}
    Let $\sr S=(S, 0, +, 1, \cdot)$ be a semiring. We define the monad $\mnd M^{\sr S}$ called the \emph{$\sr S$-multiset monad}. The functor part $M^{\sr S}$ of $\mnd M^{\sr S}$ maps each set $X$ to the set $M^{\sr S}X$ of all functions $p\colon X\to S$ such that the set $\{\,x\in X\mid p(x)\neq 0\,\}$
    (called the \emph{support} $\supp p$ of $p$)
    is finite, and maps each function $f\colon X\to Y$ to the function $M^{\sr S}f\colon M^{\sr S}X\to M^{\sr S}Y$ defined for each $p\in M^{\sr S}X$ by
    \[
    \bigl((M^{\sr S}f)(p)\bigr)(y)=\textstyle\sum_{x\in f^{-1}(y)\,\cap\,\supp p}p(x).
    \]
    We can also regard $p\in M^{\sr S}X$ as a formal linear combination 
    \begin{equation}\label{eqn:formal-linear-comb}
        \textstyle\sum_{x\in \supp p}p(x)\cdot x
    \end{equation}
    of elements of $X$ with coefficients from $\sr S$; from this viewpoint, the function $M^{\sr S}f$ maps \eqref{eqn:formal-linear-comb} to
    $\textstyle\sum_{x\in \supp p}p(x)\cdot f(x)$.

    The unit $\eta_X$ and multiplication $\mu_X$ of $\mnd M^{\sr S}$ at $X\in\Set$ are given as follows.
 For each $x\in X$, $\eta_X(x)\in M^{\sr S}X$ is defined by
\begin{math}
    \bigl(\eta_X(x)\bigr)(x')= 1
\end{math}
if $x'=x$, and
$=0$ otherwise.
    That is, $\eta_X(x)$ is $x$ regarded as a trivial linear combination.
 For each $\phi\in  M^{\sr S}M^{\sr S}X$, $\mu_X(\phi)\in M^{\sr S}X$ is defined by 
\begin{math}
     \bigl(\mu_X(\phi)\bigr)(x)=
    \textstyle\sum_{p\in\supp\phi}\phi(p)\cdot  p(x)
\end{math}.
    This flattens a ``linear combination of linear combinations'' $\phi$ into a linear combination. 

    Here are some instances of $\mnd M^{\sr S}$.
    \begin{bracketenumerate}
        \item When $\sr S$ is the semiring $\sr N=(\NN,0,+,1,\cdot)$ of natural numbers, $\mnd M^{\sr N}$ is called the \emph{(finite) multiset monad} $\mnd M=(M,\eta,\mu)$. For a set $X$, $MX$ is the set of all finite multisets over $X$. 
        \item When $\sr S$ is the semiring $\sr 2=(\{0,1\},0,\vee,1,\wedge)$, $\mnd M^{\sr 2}$ is called the \emph{finite powerset monad} $\PPfin=(\Pfin,\eta,\mu)$. 
        For each set $X$, $\Pfin X$ is the set of all finite subsets of $X$. 
        \item When $\sr S$ is the semiring $\sr R_{\geq 0}=(\RRnonneg,0,+,1,\cdot)$ of nonnegative real numbers, $\mnd M^{\sr R_{\geq 0}}$ is called the \emph{(finite) valuation monad} (see e.g.\ \cite{Varacca-Winskel-dist-law}) and is written as $\mnd V=(V,\eta,\mu)$.
    \end{bracketenumerate}
    Note that the semirings $\sr N$, $\sr 2$, and $\sr R_{\geq 0}$ are all commutative.
    \lipicsEnd
\end{example}

\begin{example}[affine ${\sr S}$-multiset monad $\mnd A^{\sr S}$]\label{ex:AS}
    Let $\sr S=(S, 0, +, 1, \cdot)$ be a semiring. We define the monad $\mnd A^{\sr S}=(A^{\sr S},\eta,\mu)$ called the \emph{affine $\sr S$-multiset monad}.
    This is the submonad of the $\sr S$-multiset monad $\mnd M^{\sr S}$ such that, for each set $X$, $A^{\sr S} X$ consists of all formal \emph{affine} combinations of elements of $X$, i.e., $p\in M^{\sr S} X$ with $\sum_{x\in \supp p}p(x)=1$ (cf.\ \cref{eqn:formal-linear-comb}). ($\mnd A^{\sr S}$ is the \emph{affine part} of $\mnd M^{\sr S}$ in the sense of \cite{Lindner-affine-part}.)
    Here are the instances with $\sr S=\sr 2$ (\cref{ex:MS}(2)) and $\sr R_{\geq 0}$ (\cref{ex:MS}(3)).
    \begin{bracketenumerate}
        \item $\mnd A^{\sr 2}$ is the \emph{nonempty finite powerset monad} $\PPfinne$. 
        \item $\mnd A^{\sr R_{\geq 0}}$ is the \emph{(finitely supported) probability distribution monad} $\mnd D$ (cf.\ \cite[\S~16]{Kock-comm-mnd-distribution}).
        \lipicsEnd
    \end{bracketenumerate}
\end{example}

\section{\texorpdfstring{$\W$}{W}-Operadic Monads}\label{sec:notions-of-operad}
In this section, we will recall \emph{$\W$-operads} \cite[Def.~2]{Tronin-abstract-clones-and-operads}, where $\W$ is any \emph{verbal category} in the sense of \cite[Def.~1]{Tronin-abstract-clones-and-operads}.
We then define monads induced by
$\W$-operads, and the ``change of base'' constructions for $\W$-operads.

\subsection{Verbal Categories}\label{subsec:verbal-cat}
Towards a formal definition of verbal categories (\cref{def:verbal-cat}), we start with an informal introduction. Recall from \cref{sec:intro} that $\F$ is the category of finite ordinals and all functions between them, and that
string diagrams represent morphisms in $\F$ (see \eqref{eqn:string-diag}).

A verbal category $\W$ is defined to be a wide subcategory of $\F$---meaning that $\mathop{\mathrm{ob}}\W=\mathop{\rm ob} \F$---that is closed under what we call the \emph{verbal substitution} operation $\star$ on morphisms of $\F$. 
We sketch the verbal substitution operation $\star$. Its type is as follows.
\begin{displaymath}
 \vcenter{\infer{\beta\star\vec{\alpha}\colon \sum_{i'\in n'} m'_{\beta(i')}\to \sum_{i\in n}m_i}{
  \beta\colon n'\to n \qquad
  &
  \quad
  \alpha_{0}\colon m'_{0}\rightarrow m_0 
  \qquad\cdots\qquad
  \alpha_{n-1}\colon m'_{n-1}\rightarrow m_{n-1} 
}} 
\end{displaymath}
Here, $\vec \alpha= (\alpha_i)_{i\in n}$.
Its concrete definition (\cref{def:operad-of-functions}) is illustrated as follows. 
Let $\beta\colon 4\to 3$, $\alpha_{0}\colon 2\to 2$, 
$\alpha_{1}\colon 3\to 2$, and
$\alpha_{2}\colon 2\to 1$, as depicted below.
\begin{equation*}
\begin{tikzpicture}[baseline=-\the\dimexpr\fontdimen22\textfont2\relax ]
\draw[dashed] (0,1) -- (1,1) -- (1,0.5) -- (0,0.5) -- cycle;
\draw[dashed] (0,0.25) -- (1,0.25) -- (1,-0.25) -- (0,-0.25) -- cycle;
\draw[dashed] (0,-0.5) -- (1,-0.5) -- (1,-1) -- (0,-1) -- cycle;
\draw[thick,red] (0,0.85) -- (0.5,0.85);
\filldraw [red] (0.5,0.85) circle [radius=1.5pt];
\filldraw [red] (0.5,0.65) circle [radius=1.5pt];
\draw[thick,red] (0,0.65) -- (1,0.65);
\draw[thick,red] (0.45,0.65) .. controls (0.65,0.65) and (0.8,0.85) .. (1,0.85);
\node at (-0.3,0.75) {$\alpha_0$};
\draw[thick,teal] (0,0.15) -- (1,0.15);
\draw[thick,teal] (0.45,0.15) .. controls (0.65,0.15) and (0.8,-0.15) .. (1,-0.15);
\draw[thick,teal] (0,-0.15) .. controls (0.2,-0.15) and (0.8,0) .. (1,0);
\filldraw [teal] (0.5,0.15) circle [radius=1.5pt];
\node at (-0.3,0) {$\alpha_1$};
\draw[thick,blue] (0,-0.75)--(0.5,-0.75);
\draw[thick,blue] (0.45,-0.75) .. controls (0.65,-0.75) and (0.8,-0.65) .. (1,-0.65);
\draw[thick,blue] (0.45,-0.75) .. controls (0.65,-0.75) and (0.8,-0.85) .. (1,-0.85);
\filldraw [blue] (0.5,-0.75) circle [radius=1.5pt];
\node at (-0.3,-0.75) {$\alpha_2$};
\draw[dashed] (1.2,1) -- (2.2,1) -- (2.2,-1) -- (1.2,-1) -- cycle;
\draw[thick,red] (1.2,0.75) -- (2.2,0.75);
\draw[thick,red] (1.65,0.75) .. controls (1.85,0.75) and (2,-0.25) .. (2.2,-0.25);
\draw[thick,teal] (1.2,0)--(1.7,0);
\draw[thick,teal] (1.65,0) .. controls (1.85,0) and (2,0.25) .. (2.2,0.25);
\draw[thick,teal] (1.65,0) .. controls (1.85,0) and (2,-0.75) .. (2.2,-0.75);
\draw[thick,blue] (1.2,-0.75)--(1.7,-0.75);
\filldraw [red] (1.7,0.75) circle [radius=1.5pt];
\filldraw [blue] (1.7,-0.75) circle [radius=1.5pt];
\filldraw [teal] (1.7,0) circle [radius=1.5pt];
\node at (2.5,0) {$\beta$};
\end{tikzpicture}
\end{equation*}
The morphism $\beta\star\vec{\alpha}$ is then obtained  by 1) finding the correspondence between $\alpha_{i}$ and the $i$-th end
(on the left) 
of $\beta$, for each $i\in 3$ (we used colors to highlight this), 2) ``duplicating'' wires in $\beta$ so that the numbers of wires match between $\beta$ and $\vec{\alpha}$, and 3) connecting them. 
\begin{equation*}
\beta\star\vec{\alpha}\quad=\quad
\begin{tikzpicture}[baseline=-\the\dimexpr\fontdimen22\textfont2\relax ]
\draw[dashed] (0,1) -- (1,1) -- (1,0.5) -- (0,0.5) -- cycle;
\draw[dashed] (0,0.25) -- (1,0.25) -- (1,-0.25) -- (0,-0.25) -- cycle;
\draw[dashed] (0,-0.5) -- (1,-0.5) -- (1,-1) -- (0,-1) -- cycle;
\draw[thick,red] (0,0.85) -- (0.5,0.85);
\filldraw [red] (0.5,0.85) circle [radius=1.5pt];
\filldraw [red] (0.5,0.65) circle [radius=1.5pt];
\draw[thick,red] (0,0.65) -- (1,0.65);
\draw[thick,red] (0.45,0.65) .. controls (0.65,0.65) and (0.8,0.85) .. (1,0.85);
\draw[thick,teal] (0,0.15) -- (1,0.15);
\draw[thick,teal] (0.45,0.15) .. controls (0.65,0.15) and (0.8,-0.15) .. (1,-0.15);
\draw[thick,teal] (0,-0.15) .. controls (0.2,-0.15) and (0.8,0) .. (1,0);
\filldraw [teal] (0.5,0.15) circle [radius=1.5pt];
\draw[thick,blue] (0,-0.75)--(0.5,-0.75);
\draw[thick,blue] (0.45,-0.75) .. controls (0.65,-0.75) and (0.8,-0.65) .. (1,-0.65);
\draw[thick,blue] (0.45,-0.75) .. controls (0.65,-0.75) and (0.8,-0.85) .. (1,-0.85);
\filldraw [blue] (0.5,-0.75) circle [radius=1.5pt];
\draw[dashed] (1.2,1) -- (2.2,1) -- (2.2,-1) -- (1.2,-1) -- cycle;
\draw[thick,red] (1,0.85) -- (2.2,0.85);
\draw[thick,red] (1,0.65) -- (2.2,0.65);
\draw[thick,red] (1.65,0.65) .. controls (1.85,0.65) and (2,-0.35) .. (2.2,-0.35);
\draw[thick,red] (1.65,0.85) .. controls (1.85,0.85) and (2,-0.15) .. (2.2,-0.15);
\filldraw [red] (1.7,0.85) circle [radius=1.5pt];
\filldraw [red] (1.7,0.65) circle [radius=1.5pt];
\draw[thick,teal] (1,0.15)--(1.7,0.15);
\draw[thick,teal] (1,0)--(1.7,0);
\draw[thick,teal] (1,-0.15)--(1.7,-0.15);
\draw[thick,teal] (1.65,0) .. controls (1.85,0) and (2,0.25) .. (2.2,0.25);
\draw[thick,teal] (1.65,0) .. controls (1.85,0) and (2,-0.75) .. (2.2,-0.75);
\draw[thick,teal] (1.65,0.15) .. controls (1.85,0.15) and (2,0.4) .. (2.2,0.4);
\draw[thick,teal] (1.65,0.15) .. controls (1.85,0.15) and (2,-0.6) .. (2.2,-0.6);
\draw[thick,teal] (1.65,-0.15) .. controls (1.85,-0.15) and (2,0.1) .. (2.2,0.1);
\draw[thick,teal] (1.65,-0.15) .. controls (1.85,-0.15) and (2,-0.9) .. (2.2,-0.9);
\draw[thick,blue] (1,-0.65)--(1.7,-0.65);
\draw[thick,blue] (1,-0.85)--(1.7,-0.85);
\filldraw [blue] (1.7,-0.65) circle [radius=1.5pt];
\filldraw [blue] (1.7,-0.85) circle [radius=1.5pt];
\filldraw [teal] (1.7,0) circle [radius=1.5pt];
\filldraw [teal] (1.7,0.15) circle [radius=1.5pt];
\filldraw [teal] (1.7,-0.15) circle [radius=1.5pt];
\end{tikzpicture}
\quad=\quad
\begin{tikzpicture}[baseline=-\the\dimexpr\fontdimen22\textfont2\relax ]
\draw[thick] (0,0.85) -- (0.5,0.85);
\filldraw [fill=black] (0.5,0.85) circle [radius=1.5pt];
\filldraw [fill=black] (0.5,0.65) circle [radius=1.5pt];
\draw[thick] (0,0.65) -- (1,0.65);
\draw[thick] (0.45,0.65) .. controls (0.65,0.65) and (0.8,0.85) .. (1,0.85);
\draw[thick] (0,0.15) -- (1,0.15);
\draw[thick] (0.45,0.15) .. controls (0.65,0.15) and (0.8,-0.15) .. (1,-0.15);
\draw[thick] (0,-0.15) .. controls (0.2,-0.15) and (0.8,0) .. (1,0);
\filldraw [fill=black] (0.5,0.15) circle [radius=1.5pt];
\draw[thick] (0,-0.75)--(0.5,-0.75); 
\filldraw [fill=black] (0.5,-0.75) circle [radius=1.5pt];
\draw[thick] (1,0.85) -- (2.2,0.85);
\draw[thick] (1,0.65) -- (2.2,0.65);
\draw[thick] (1.65,0.65) .. controls (1.85,0.65) and (2,-0.35) .. (2.2,-0.35);
\draw[thick] (1.65,0.85) .. controls (1.85,0.85) and (2,-0.15) .. (2.2,-0.15);
\filldraw [fill=black] (1.7,0.85) circle [radius=1.5pt];
\filldraw [fill=black] (1.7,0.65) circle [radius=1.5pt];
\draw[thick] (1,0.15)--(1.7,0.15);
\draw[thick] (1,0)--(1.7,0);
\draw[thick] (1,-0.15)--(1.7,-0.15);
\draw[thick] (1.65,0) .. controls (1.85,0) and (2,0.25) .. (2.2,0.25);
\draw[thick] (1.65,0) .. controls (1.85,0) and (2,-0.75) .. (2.2,-0.75);
\draw[thick] (1.65,0.15) .. controls (1.85,0.15) and (2,0.4) .. (2.2,0.4);
\draw[thick] (1.65,0.15) .. controls (1.85,0.15) and (2,-0.6) .. (2.2,-0.6);
\draw[thick] (1.65,-0.15) .. controls (1.85,-0.15) and (2,0.1) .. (2.2,0.1);
\draw[thick] (1.65,-0.15) .. controls (1.85,-0.15) and (2,-0.9) .. (2.2,-0.9);
\filldraw [fill=black] (1.7,0) circle [radius=1.5pt];
\filldraw [fill=black] (1.7,0.15) circle [radius=1.5pt];
\filldraw [fill=black] (1.7,-0.15) circle [radius=1.5pt];
\end{tikzpicture}
\end{equation*}

In order to define $\star$ formally, we fix specific coproduct diagrams in the category $\F$.

\begin{definition}[chosen coproducts in $\F$]\label{notation-kappa}
 Let $n\in\NN$, $\vec{m}=(m_{i}\in\NN)_{i\in n}$, and  $m=\sum_{i\in n}m_{i}$. 
    For each $i\in \vN n$, we define the function
 \begin{displaymath}
  \iota_{\vec m,i}\colon m_{i} \to m
 \end{displaymath}
 by 
 $\iota_{\vec m, i}(j)=\bigl(\sum_{k\in \vN i}m_k\bigr)+j$ for each $j\in \vN{m_i}$. 
The family $(\iota_{\vec{m},i}\colon m_{i}\to m )_{i\in n}$ exhibits $m$ as a coproduct of $\vec{m}$ in $\F$. \lipicsEnd
\end{definition}

\begin{definition}[$\beta\star\vec{\alpha}$; cf.\ {\cite[\S~2.2]{Szawiel-Zawadowski-monads-of-regular-theories} and \cref{W-operad-of-monoids}}]\label{def:operad-of-functions}
    Given morphisms 
    $\bigl(\,\beta\colon \vN{n'}\to \vN n,\, \vec \alpha=(\alpha_i\colon \vN{m'_i}\to \vN{m_i}\,)_{i\in \vN n}\bigr)$
    in $\F$, 
    define $m'=\sum_{i'\in\vN{n'}}m'_{\beta(i')}$ and $m=\sum_{i\in\vN n}m_i$.
    We define
    \[
    \beta\star\vec \alpha\colon \vN{m'}\to \vN{m}
    \]
    as the unique morphism in 
    $\F$ making the following square commute for each $i'\in\vN{n'}$.
    \begin{equation*}
\begin{tikzpicture}[baseline=-\the\dimexpr\fontdimen22\textfont2\relax ]
    \node(01) at (0,0.5) {$\vN{m'_{\beta(i')}}$};
    \node(02) at (2,0.5) {$\vN{m_{\beta(i')}}$};
    \node(11) at (0,-0.5) {$\vN{m'}$};
    \node(12) at (2,-0.5) {$\vN m$};
    \draw [->] (01) to node[auto,labelsize] {$\alpha_{\beta(i')}$} (02);
    \draw [->] (11) to node[auto,labelsize] {$\beta\star\vec\alpha$} (12);
    \draw [->] (01) to node[auto,swap,labelsize] {$\iota_{\vec m'\beta,i'}$} (11);
    \draw [->] (02) to node[auto,labelsize] {$\iota_{\vec m,\beta(i')}$} (12);
\end{tikzpicture}
\end{equation*}
Here, $\vec m'\beta=(m'_{\beta(i')})_{i'\in\vN{n'}}$ and $\vec m=(m_i)_{i\in\vN n}$.\lipicsEnd
\end{definition}

\begin{definition}[verbal category]\label{def:verbal-cat}
    A wide subcategory $\W$ of $\F$ is called a \emph{verbal category} \cite[Def.~1]{Tronin-abstract-clones-and-operads} if it is closed under the verbal substitution operation $\beta\star\vec{\alpha}$ of \cref{def:operad-of-functions}.
    We say that a verbal category is \emph{symmetric} if it contains $\Fbij$ (see \cref{eqn:Ctxt-logically}).\lipicsEnd
\end{definition}

To see the equivalence of the above definition of verbal categories to the original \cite[Def.~1]{Tronin-abstract-clones-and-operads}, consider the cases of $\beta\star\vec\alpha$ where $\beta$ or each of $\alpha_i$ is an identity map.
We note that verbal categories are called \emph{faithful cartesian clubs} in \cite[Def.\ 2.6.3]{Shulman-catlog}, and that symmetric verbal categories are called \emph{contextual subcategories of $\F$} in \cite{Lenke-Milius-Urbat-subst}.

\begin{example}\label{ex:verbal-cats}
As already mentioned, the six categories in \cref{eqn:Ctxt-logically} are verbal categories
\cite[p.~748]{Tronin-abstract-clones-and-operads}.
For every positive integer $n$, the verbal category $\Fsurj^{(n)}$ generated by $\mathrm{swap}\colon 2\to 2$ and $!_n\colon n\to 1$ (i.e.,
\begin{tikzpicture}[baseline=-\the\dimexpr\fontdimen22\textfont2\relax ]
\draw[thick] (0,0.1) .. controls (0.3,0.1) and (0.5,-0.1) .. (0.7,-0.1);
\draw[thick] (0,-0.1) .. controls (0.3,-0.1) and (0.5,0.1) .. (0.7,0.1);
\end{tikzpicture}
and
\begin{tikzpicture}[baseline=-\the\dimexpr\fontdimen22\textfont2\relax ]
\draw[thick] (0,0.2) .. controls (-0.2,0.2) and (-0.3,0) .. (-0.4,0);
\draw[thick] (0,0.1) .. controls (-0.2,0.1) and (-0.3,0) .. (-0.4,0);
\draw[thick] (0,-0.2) .. controls (-0.2,-0.2) and (-0.3,0) .. (-0.4,0);
\draw[thick] (-0.4,0)-- (-0.7,0);
\filldraw [fill=black] (-0.35,0) circle [radius=1.5pt];
\node[rotate=90] at (0,-0.04) {$\scriptscriptstyle\cdots$};
\draw [decorate,decoration={brace,amplitude=4pt}]
  (0.1,0.25) -- (0.1,-0.25) node[midway,xshift=9pt]{$n$};
\end{tikzpicture})
consists of all surjections $\alpha\colon\vN \ell\to\vN m$ such that each fiber $\alpha^{-1}(i)$ has cardinality $k(n-1)+1$ for some $k\in \NN$;
this yields an infinite family of verbal categories, with $\Fsurj^{(1)}=\Fbij$ and $\Fsurj^{(2)}=\Fsurj$. Intuitively, $\Fsurj^{(n)}$ models contexts with  exchange and ``contraction with arity $n$''.\lipicsEnd
\end{example}

\begin{example}\label{ex:verbal-cats-nonex}
As non-examples, we note that the wide subcategory of $\F$ consisting of all monotone (resp.\ surjective monotone) maps (corresponding to the set $\{\mathrm W,\mathrm C\}$ (resp.\ $\{\mathrm C\}$) of structural rules; cf.\ \cref{eqn:Ctxt-logically}) is not a verbal category.\footnote{This shows that the third example in \cite[p.\ 748]{Tronin-abstract-clones-and-operads} is incorrect.} 
This is because the function $!_2\colon \vN 2\to \vN 1$ is monotone but 
$!_2\star(\id_{\vN 2})\colon \vN 4\to \vN 2$ is not,
as the picture 
\begin{tikzpicture}[baseline=-\the\dimexpr\fontdimen22\textfont2\relax ]
\draw[dashed] (0,0.25) -- (1,0.25) -- (1,-0.25) -- (0,-0.25) -- cycle;
\draw[thick] (0,0.07) -- (1.7,0.07);
\draw[thick] (0,-0.07) -- (1.7,-0.07);
\filldraw [fill=black] (1.7,0.07) circle [radius=1.5pt];
\filldraw [fill=black] (1.7,-0.07) circle [radius=1.5pt];
\draw[dashed] (1.2,0.25) -- (2.2,0.25) -- (2.2,-0.25) -- (1.2,-0.25) -- cycle;
\draw[thick] (1.65,0.07) .. controls (1.85,0.07) and (2,0.21) .. (2.2,0.21);
\draw[thick] (1.65,0.07) .. controls (1.85,0.07) and (2,-0.07) .. (2.2,-0.07);
\draw[thick] (1.65,-0.07) .. controls (1.85,-0.07) and (2,-0.21) .. (2.2,-0.21);
\draw[thick] (1.65,-0.07) .. controls (1.85,-0.07) and (2,0.07) .. (2.2,0.07);
\end{tikzpicture} 
shows
(cf.~\cite[\S~2]{Curien-operads-clones}).\lipicsEnd
\end{example}

\subsection{\texorpdfstring{$\W$}{W}-Operads}
A formal definition (\cref{def:Ctxt-operad}) of $\W$-operads may look complicated. Illustration by string diagram, which will follow, should be more intuitive.
\begin{definition}[{$\W$-operad \cite[Def.~2]{Tronin-abstract-clones-and-operads}}]\label{def:Ctxt-operad}
    Let $\W$ be a verbal category.
    A \emph{$\W$-operad} $\mnd O=(O,\ido,\subst)$ consists of the following.
    \begin{itemize}
        \item A functor $O\colon \W\to \Set$. 
        For each object $\vN n\in \W$, the set $O\vN n$ is denoted by $O_n$.
        For each morphism $\alpha\colon \vN m\to\vN n$ in $\W$ and $p\in O_m$, the element $(O\alpha)p\in O_n$ is also denoted by $\alpha p$.
        \item An element $\ido\in O_1$.
        \item For each $n\in \NN$ and $\vec m=(m_i\in\NN)_{i\in\vN n}$, a function 
        \begin{equation}\label{eqn:subst}
            \subst_{n,\vec m}\colon O_n\times \textstyle\prod_{i\in\vN n}O_{m_i}\to O_m,
        \end{equation}
        where $m=\sum_{i\in\vN n}m_i$.
        Its value at $(q,\vec p)\in O_n\times \prod_{i\in\vN n}O_{m_i}$ is denoted by $\subst_{n,\vec m}(q,\vec p)$.
    \end{itemize}
    These data are subject to the following axioms.
    \begin{itemize}
        \item The compatibility of the $\subst$ functions \eqref{eqn:subst} with 
        the action by morphisms in $\W$: for any $\beta\colon \vN{n'}\to\vN{n}$ and $\vec \alpha=(\alpha_i\colon \vN{m_i'}\to\vN{m_i})_{i\in\vN n}$ in $\W$, the following diagram commutes. 
        \begin{equation}\label{eqn:subst-compatibility}
\begin{tikzpicture}[baseline=-\the\dimexpr\fontdimen22\textfont2\relax ]
    \node(01) at (0,0.5) {$O_{n'}\times \prod_{i\in\vN{n}}O_{m'_{i}}$};
    \node(02) at (5.75,0.5) {$O_{n'}\times \prod_{i'\in\vN{n'}}O_{m'_{\beta(i')}}$};
    \node(03) at (9,0.5) {$O_{m'}$};
    \node(11) at (0,-0.5) {$O_{n}\times \prod_{i\in\vN{n}}O_{m_{i}}$};
    \node(13) at (9,-0.5) {$O_m$};
    \draw [->] (01) to node[auto,labelsize] {$1\times\prod_{i'\in\vN{n'}}\pi_{\beta(i')}$} (02);
    \draw [->] (02) to node[auto,labelsize] {$\subst_{n',\vec m\beta}$} (03);
    \draw [->] (11) to node[auto,labelsize] {$\subst_{n,\vec{m}}$} (13);
    \draw [->] (01) to node[auto,swap,labelsize] {$O\beta\times \prod_{i\in\vN n}O\alpha_i$} (11);
    \draw [->] (03) to node[auto,labelsize] {$O(\beta\star\vec\alpha)$} (13);
\end{tikzpicture}
\end{equation}
    Here, $\pi_{i}\colon \prod_{i\in\vN n}O_{m'_i}\to O_{m'_i}$ is the $i$-th projection (and hence $\pi_{\beta(i')}$ is the $\beta(i')$-th projection), $\vec m\beta =(m_{\beta(i')})_{i'\in n'}$, and $\beta\star\vec{\alpha}$ is from \cref{def:operad-of-functions}.
        \item The unit laws: for each natural number $m$ and each $p\in O_m$, we have 
        \[\subst_{1,(m)}(\ido,(p))=p=\subst_{m,(1)_{i\in\vN m}} (p,(\ido)_{i\in\vN m}).\]
        \item The associativity law:
        \begin{equation*}
            \subst_{n,\vec m}\bigl(\,r,\,\bigl(\,\subst_{m_j,\vec\ell^{(j)}}(q_j,\vec p^{(j)})\,\bigr)_{j\in\vN n}\bigr)
           =\ \subst_{m,\vec \ell}\bigl(\,\subst_{n,\vec m} (r,\vec q),\, \vec p\,\bigr),
        \end{equation*}
        for all 
        $n\in \NN$, 
        $\vec m=(m_j\in\NN)_{j\in n}$, 
        $\vec \ell= (\vec \ell^{(j)})_{j\in n}=\bigl((\ell^{(j)}_i\in\NN)_{i\in m_j}\bigr)_{j\in n}$,
        $r\in O_n$, $\vec q=(q_j\in O_{m_j})_{j\in n}$, and $\vec p=(\vec p^{(j)})_{j\in n}=\bigl((p^{(j)}_i\in O_{\ell_{i}^{(j)}})_{i\in m_j}\bigr)_{j\in n}$, where $m=\sum_{j\in n}m_j$.\lipicsEnd
    \end{itemize}
\end{definition}

Note that the unit and associativity laws in \cref{def:Ctxt-operad} are 
identical to those for (symmetric or non-symmetric) operads and multicategories (see e.g.~\cite[Def.~2.1.1]{Leinster-higher-operads}).

The idea is that a $\W$-operad $\mnd O$ encodes 
an algebraic theory in the ``substructural context'' corresponding to $\W$.
More specifically,
an element $p$ of $O_n$ represents an $n$-ary term (modulo provable equality) in the algebraic theory, and can be depicted as the  ``gadget'' 
\begin{tikzpicture}[baseline=-\the\dimexpr\fontdimen22\textfont2\relax, x=6ex,y=6ex]
 \draw[thick] (0,0) -- (-0.5,-0.5) -- (-0.5,0.5) -- cycle;
 \draw[thick] (-0.5,0.3) -- (-0.8,0.3);
 \draw[thick] (-0.5,-0.3) -- (-0.8,-0.3);
 \draw[thick] (0.3,0) -- (0,0);
 \node at (-0.35,0) {$p$};
 \node[rotate=90] at (-0.7,0) {$\cdots$};
 \draw [decorate,decoration={brace,mirror,amplitude=5pt}]
  (-0.9,0.4) -- (-0.9,-0.4) node[midway,xshift=-9pt]{$n$};
\end{tikzpicture} 
with $n$ input wires and  one output wire.
The action of morphisms 
of $\W$ on the functor $O$ amounts to rearranging the input wires. For example, if $\W=\F$ and $\alpha\colon \vN 3\to \vN 4$ is the function depicted on the left of \eqref{eqn:string-diag},
then 
$\alpha p\in O_4$ can be obtained from 
$p\in O_3$ as follows.
\begin{equation*}
\begin{tikzpicture}[baseline=-\the\dimexpr\fontdimen22\textfont2\relax,x=7ex,y=6ex]
\draw[thick] (0,0) -- (-0.5,-0.5) -- (-0.5,0.5) -- cycle;
\draw[thick] (-0.5,0.3) -- (-0.8,0.3);
\draw[thick] (-0.5,-0.3) -- (-0.8,-0.3);
\draw[thick] (-0.5,0.1) -- (-0.8,0.1);
\draw[thick] (-0.5,-0.1) -- (-0.8,-0.1);
\draw[thick] (0.3,0) -- (0,0);
\node at (-0.3,0) {$\alpha p$};
\end{tikzpicture}
\quad =\quad 
\begin{tikzpicture}[baseline=-\the\dimexpr\fontdimen22\textfont2\relax ]
\draw[thick] (0,0) -- (-0.5,-0.5) -- (-0.5,0.5) -- cycle;
\draw[thick] (-0.5,0) .. controls (-0.7,0) and (-0.8,-0.3) .. (-1.3,-0.3);
\draw[thick] (-0.5,-0.25) .. controls (-0.7,-0.25) and (-0.8,0.1) .. (-1,0.1);
\draw[thick] (-0.5,0.25) .. controls (-0.7,0.25) and (-0.8,0.1) .. (-1,0.1);
\draw[thick] (-1,0.1)--(-1.3,0.1);
\draw[thick] (-0.95,0.3) -- (-1.3,0.3);
\draw[thick] (-0.95,-0.1) -- (-1.3,-0.1);
\draw[thick] (0.3,0) -- (0,0);
\node at (-0.35,0) {$p$};
\filldraw [fill=black] (-0.95,-0.1) circle [radius=1.5pt];
\filldraw [fill=black] (-0.95,0.1) circle [radius=1.5pt];
\filldraw [fill=black] (-0.95,0.3) circle [radius=1.5pt];
\end{tikzpicture}
\end{equation*}
Syntactically, the $4$-ary term $\alpha p$ is defined from the $3$-ary term $p$ as 
\begin{math}\label{eqn:alpha-p-syntactically}
(\alpha p)(x_0,x_1,x_2,x_3)= p(x_1,x_3,x_1).
\end{math}
We depict $\ido\in O_1$ as a wire
(i.e. \begin{tikzpicture}[baseline=-\the\dimexpr\fontdimen22\textfont2\relax,x=6ex,y=6ex]
\draw[thick] (0,0) -- (-0.7,-0.3) -- (-0.7,0.3) -- cycle;
\draw[thick] (-0.7,0) -- (-1,0);
\draw[thick] (0.3,0) -- (0,0);
\node at (-0.5,0) {$\ido$};
\end{tikzpicture}
 $=$
\begin{tikzpicture}[baseline=-\the\dimexpr\fontdimen22\textfont2\relax ]
\draw[thick] (-0.1,0) -- (0.8,0);
\end{tikzpicture})
and $\subst_{n,\vec m}(q,\vec p)$
as follows.
\begin{equation}\label{eqn:substStringDiagram}
\scalebox{.8}{\begin{tikzpicture}[baseline=-\the\dimexpr\fontdimen22\textfont2\relax]
\draw[thick] (2,0) -- (1,1) -- (1,-1) -- cycle;
\node at (1.4,0) {$q$};
\node[rotate=90] at (0.8,0) {$\cdots$};
\draw[thick] (2.5,0) -- (2,0);
\draw[thick] (0.5,0.7) -- (-0.5,0.3) -- (-0.5,1.1) -- cycle;
\draw[thick] (-0.5,1) -- (-0.8,1);
\draw[thick] (-0.5,0.4) -- (-0.8,0.4);
\draw[thick] (1,0.7) -- (0.5,0.7);
\node at (-0.2,0.7) {$p_0$};
\node[rotate=90] at (-0.7,0.7) {$\cdots$};
\draw [decorate,decoration={brace,mirror,amplitude=5pt}]
  (-0.9,1.1) -- (-0.9,0.3) node[midway,xshift=-12pt]{$m_0$};
\draw[thick] (0.5,-0.7) -- (-0.5,-1.1) -- (-0.5,-0.3) -- cycle;
\draw[thick] (-0.5,-0.4) -- (-0.8,-0.4);
\draw[thick] (-0.5,-1) -- (-0.8,-1);
\draw[thick] (1,-0.7) -- (0.5,-0.7);
\node at (-0.12,-0.7) {$p_{n-1}$};
\node[rotate=90] at (-0.7,-0.7) {$\cdots$};
\draw [decorate,decoration={brace,mirror,amplitude=5pt}]
  (-0.9,-0.3) -- (-0.9,-1.1) node[midway,xshift=-17pt]{$m_{n-1}$};
  \draw [decorate,decoration={brace,mirror,amplitude=5pt}]
  (-1.9,1.1) -- (-1.9,-1.1) node[midway,xshift=-11pt]{$m$};
\end{tikzpicture}}
\end{equation}
Syntactically, $\ido$ corresponds to the variable $x_0$ regarded as a unary term, and the term $\subst_{n,\vec m}(q,\vec p)$ is obtained by suitably \emph{substituting} $\vec p$ for the variables in $q$.
For example,
\[
\subst_{2,(2,1)}\bigl(\,q(x_0,x_1),\,\bigl(p_0(x_0,x_1),\,p_1(x_0)\bigr)\,\bigr)= q\bigl(\,p_0(x_0,x_1),\,p_1(x_2)\,\bigr);
\]
note that the variable $x_0$ in the term $p_1$ is renamed to $x_2$.
The unit and associativity laws for $\W$-operads will be trivial from these pictorial or syntactic viewpoints;
the compatibility axiom \cref{eqn:subst-compatibility} can be understood in an analogous manner, too. 
See \cite[Chap.~2]{Leinster-higher-operads} for an extended discussion with many pictures in the case where $\W=\Fid$ or $\Fbij$.
We will discuss examples of $\W$-operads in \cref{subsec:examples-of-W-operads}.

The general notion of $\W$-operad specializes to the following:
\begin{itemize}
    \item $\Fid$-operads coincide with \emph{non-symmetric operads} \cite{Leinster-higher-operads},
    \item $\Fbij$-operads coincide with \emph{symmetric operads} \cite{Kelly-operads-of-May}, 
    \item $\Fsurj$-operads coincide with \emph{regular operads} \cite{Szawiel-Zawadowski-monads-of-regular-theories}, and 
    \item $\F$-operads coincide with \emph{(abstract) clones} \cite[Thm.\ on p.~751]{Tronin-abstract-clones-and-operads}.
\end{itemize}
The
classes of $\W$-operads for various $\W$ can also be regarded as a generalization of the \emph{Boom hierarchy}, whose relevance to distributive laws has been noted; see \cite[\S~7]{King-Wadler-combining-monads} and \cite[Chap.~6]{Zwart-thesis}.

\begin{remark}\label{rmk:operads-as-monoid-objects} 
    For any verbal category $\W$, the functor category $[\W,\Set]$ is known to have a monoidal structure $(J,\bullet)$ called the \emph{substitution monoidal structure}, with the inclusion functor $J\colon \W\to \Set$ as the unit object; see \cite{Lenke-Milius-Urbat-subst,Tanaka-Power-pseudo-distributive,Tanaka-Power-substructural} or \cite[\S~4]{Curien-operads-clones}.
    Given $P,Q\in[\W,\Set]$ and $m\in\W$, an element of $(Q\bullet P)_m$ is a tuple 
    \begin{equation}\label{eqn:Q-bullet-P}
        \bigl(\,q\in Q_n,\,(p_i\in P_{m_i})_{i\in n},\, \alpha\in \W(\textstyle\sum_{i\in n}m_i,m) \,\bigr)
    \end{equation}
    modulo a suitable equivalence relation (see \cite{Curien-operads-clones} for a coend formula).
 The data \cref{eqn:Q-bullet-P} can be depicted as follows.
\[
\scalebox{.8}{\begin{tikzpicture}[baseline=-\the\dimexpr\fontdimen22\textfont2\relax]
\draw[thick] (2,0) -- (1,1) -- (1,-1) -- cycle;
\node at (1.4,0) {$q$};
\node[rotate=90] at (0.8,0) {$\cdots$};
\draw[thick] (2.5,0) -- (2,0);
\draw[thick] (0.5,0.7) -- (-0.5,0.3) -- (-0.5,1.1) -- cycle;
\draw[thick] (-0.5,1) -- (-1,1);
\draw[thick] (-0.5,0.4) -- (-1,0.4);
\draw[thick] (1,0.7) -- (0.5,0.7);
\node at (-0.2,0.7) {$p_0$};
\node[rotate=90] at (-0.7,0.7) {$\cdots$};
\draw[thick] (0.5,-0.7) -- (-0.5,-1.1) -- (-0.5,-0.3) -- cycle;
\draw[thick] (-0.5,-0.4) -- (-1,-0.4);
\draw[thick] (-0.5,-1) -- (-1,-1);
\draw[thick] (1,-0.7) -- (0.5,-0.7);
\node at (-0.12,-0.7) {$p_{n-1}$};
\node[rotate=90] at (-0.7,-0.7) {$\cdots$};
  \draw[] (-1,1.1) -- (-2,1.1) -- (-2,-1.1) -- (-1,-1.1) -- cycle;
  \node at (-1.5,0) {$\alpha$};
  \draw[thick] (-2,0.9) -- (-2.3,0.9);
  \node[rotate=90] at (-2.2,0) {$\cdots$};
  \draw[thick] (-2,-0.9) -- (-2.3,-0.9);
  \draw [decorate,decoration={brace,mirror,amplitude=5pt}]
  (-2.5,1.1) -- (-2.5,-1.1) node[midway,xshift=-11pt]{$m$};
\end{tikzpicture}}
\]
    Using the substitution monoidal structure $(J,\bullet)$, $\W$-operads can be identified with monoid objects in $([\W,\Set],J,\bullet)$.

    While the substitution monoidal structure provides a nice conceptual viewpoint (cf.\ \cref{rmk:Lan-restriction-lax-monoidal}), its definition requires some work and is not necessary for the purposes of this paper.
    Hence, we give an exposition that does not rely on the substitution monoidal structure.
    \lipicsEnd
\end{remark}

\subsection{The Monad Induced by a \texorpdfstring{$\W$}{W}-Operad}\label{subsec:from-operad-to-monad}
In this section, we show that every $\Ctxt$-operad 
$\mnd O$ induces a monad $\Mnd{\mnd O}$ on $\Set$.
Intuitively, $\Mnd{\mnd O}$ is the (finitary) monad corresponding to the algebraic theory expressed by $\mnd O$.
One way of making this precise is to proceed as follows.
Define the notion of \emph{$\mnd O$-algebra} in $\Set$ (as in \cite[Def.~1.1]{Tronin-operads-varieties-polylinear}), show that the forgetful functor $\mathbf{Alg}(\mnd O)\to\Set$ from the category $\mathbf{Alg}(\mnd O)$ of $\mnd O$-algebras is monadic, and then let $\Mnd{\mnd O}$ be the induced monad on $\Set$.

Instead, here we use left Kan extensions 
to define $\Mnd{\mnd O}$. This is useful for describing the canonical distributive laws later.

\begin{definition}[$\Mnd{\mnd O}$]
 Let $\W$ be a verbal category and $\mnd O=(O,\ido,\subst)$ be a $\W$-operad. The monad $\Mnd{\mnd O}$ \emph{induced by} $\mnd O$ is 
\begin{equation}\label{eqn:MndO-data}
 \Mnd{\mnd O}=(\Lan_JO,\eta,\mu),
\end{equation}
using the unit and multiplication described below.\lipicsEnd
\end{definition}

Here, $\Lan_JO$ in \eqref{eqn:MndO-data} is 
the left Kan extension of
 $O$ along the inclusion functor $J$:
 \begin{equation}
 \label{eqn:LanJO}
\begin{tikzpicture}[baseline=-\the\dimexpr\fontdimen22\textfont2\relax ]
      \node(01) at (0,0.5) {$\Ctxt$};
      \node(02) at (1.5,0.5) {$\Set$};
      \node(11) at (0,-0.5) {$\Set$};
      \draw [->] (01) to node[auto,labelsize] {$O$} (02);
      \draw [->] (01) to node[auto,swap,labelsize] {$J$} (11);
      \draw [->] (11) to node[auto,swap,labelsize] {$\Lan_JO$} (02);
      \node[rotate=-90] at (0.5,0.1) {$\Rightarrow$};
\end{tikzpicture}
\end{equation}
Concretely, for each $X\in\Set$, the set $(\Lan_JO)X$ is the coend
\begin{equation}\label{eqn:LanJO-coend}
    \textstyle\int^{\vN n\in \Ctxt} O_n\times X^n
\end{equation}
(cf.\ e.g.\ \cite[Prop.\ 4.1]{Hyland-Power}), which can be constructed as follows. First form the set 
\begin{equation}\label{eqn:LanJO-coproduct}
    \textstyle\sum_{n\in\NN} O_n\times  X^n,
\end{equation}
whose element we write as $\bigl(p,(x_i)_{i\in\vN n}\bigr)$ where $p\in O_n$ and $x_i\in X$ (with $n\in\NN$), or sometimes as $(p,\vec x)$.
The set $(\Lan_JO)X$ is the quotient of the set \cref{eqn:LanJO-coproduct}
under the smallest equivalence relation $\sim$ on \cref{eqn:LanJO-coproduct} such that 
\begin{equation}\label{eqn:sim-generating}
\bigl(p,(x_{\alpha(i)})_{i\in\vN m}\bigr)\sim
\bigl(\alpha p,(x_j)_{j\in\vN n}\bigr)
\end{equation}
for each morphism $\alpha\colon\vN m\to \vN n$ in $\Ctxt$, $p\in O_m$, and $(x_j)_{j\in\vN n}\in X^n$.
(Writing $(x_{\alpha(i)})_{i\in\vN m}$ as $\vec x \alpha$,
\cref{eqn:sim-generating} can be expressed concisely as 
$(p,\vec x\alpha)\sim
(\alpha p,\vec x)$.)
The $\sim$-equivalence class containing $(p,\vec x)$ is denoted by $[p,\vec x]$.
Given a function $f\colon X\to Y$ between sets, the function 
$(\Lan_JO)f\colon (\Lan_JO)X\to (\Lan_JO)Y$
maps each 
$\bigl[p,(x_i)_{i\in \vN n}\bigr]\in (\Lan_JO)X$ to $\bigl[p,(fx_i)_{i\in \vN n}\bigr]$. 

The unit $\eta$ of the monad $\Mnd{\mnd O}$ at $X\in \Set$ is the function 
$\eta_X\colon X\to (\Lan_JO)X$
mapping each $x\in X$ to $[\ido\in O_1,x\in X^1]\in (\Lan_JO)X$.
The naturality of $\eta$ is clear. 

The multiplication $\mu$ of $\Mnd{\mnd O}$ at $X$ is 
$\mu_X\colon (\Lan_JO)(\Lan_JO)X\to (\Lan_JO)X$
given by
\begin{align*}
    \bigl[\,q\in O_n,\,\bigl([p_j\in O_{m_j},\,(x_{i}^{(j)})_{i\in m_j}&\in X^{m_j}]\bigr)_{j\in\vN n}\in \bigl((\Lan_JO)X\bigr)^n\,\bigr]\\
    &\longmapsto \quad \bigl[\,\subst_{n,\vec m}(q,\vec p)\in O_m,\, \bigl((x_i^{(j)})_{i\in m_j}\bigr)_{j\in n}\in X^m\,\bigr],
\end{align*}
where $\vec m=(m_j)_{j\in \vN n}$, $m=\sum_{j\in \vN n}m_j$, and $\vec p=(p_j)_{j\in \vN n}$.
 Well-definedness of $\mu_X$ follows from \cref{eqn:subst-compatibility}, while the naturality of $\mu$ is clear. This definition of $\mu$ crucially uses the $\subst$ operation of $\mnd O$---its depiction in~\cref{eqn:substStringDiagram} with a string diagram should illustrate how $\mu$ works.

The monad axioms for $\Mnd{\mnd O}=(\Lan_JO,\eta,\mu)$ follow from the unit and associativity axioms for the $\Ctxt$-operad $\mnd O$.

It is not hard to see that the Eilenberg--Moore category of the monad $\Mnd{\mnd O}$ coincides with the category $\mathbf{Alg}(\mnd O)$ of $\mnd O$-algebras defined in \cite[Def.~1.1]{Tronin-operads-varieties-polylinear}. 
This shows that $\Mnd{\mnd O}$ can indeed be obtained by
the construction sketched in the beginning of \cref{subsec:from-operad-to-monad}.
One can also see the coincidence by endowing the set $(\Lan_JO)X$ with 
the free $\mnd O$-algebra structure over $X$, as in \cite[Thm.~1.4]{Tronin-operads-varieties-polylinear}.

\begin{definition}[$\W$-operadicity]
 A monad $\mnd S$ (on $\Set$) is \emph{$\Ctxt$-operadic} if there exists a $\Ctxt$-operad $\mnd O$ with $\Mnd{\mnd O}\cong\mnd S$.
 When moreover such a $\Ctxt$-operad $\mnd O$ is unique up to isomorphism, 
 we say that
 $\mnd S$ is \emph{uniquely $\Ctxt$-operadic}.\lipicsEnd
\end{definition}

\begin{remark}\label{W-operadic-monad-intrinsically}
    Intrinsic characterization of $\W$-operadicity is known for some $\W$.
    \begin{bracketenumerate}
        \item See \cite[Def.~3.1 and Thms~10.1 and 2]{Weber-generic} for a characterization of $\Fid$-operadic monads (also known as \emph{strongly analytic monads}). 
        \item See \cite[Def.~3.2 and Thms~10.10 and 11]{Weber-generic} for a characterization of 
        $\Fbij$-operadic monads (also known as \emph{analytic monads}).
        \item See \cite[\S~2.4]{Szawiel-Zawadowski-monads-of-regular-theories} for a characterization of $\Fsurj$-operadic monads (also known as \emph{semi-analytic monads} or \emph{finitary taut monads}).
        \item A monad is $\F$-operadic iff it is finitary (folklore).\lipicsEnd
    \end{bracketenumerate}
\end{remark}

We can simplify the coend formula \cref{eqn:LanJO-coend} in some cases.
\begin{proposition}\label{LanJO-simplification}
\begin{bracketenumerate}
    \item When $\W=\Fid$, \cref{eqn:LanJO-coend} coincides with \cref{eqn:LanJO-coproduct}.
    \item When $\W=\Fbij$, \cref{eqn:LanJO-coend} is $\sum_{n\in \NN}O_n\times_{\Fbij(n,n)}X^n$, where $O_n\times_{\Fbij(n,n)}X^n$ is the quotient of $O_n\times X^n$ with respect to the following equivalence relation \cite{Joyal-analytic-functor}:
    \[
    \bigl\{\, \bigl((p,\vec x\alpha),(\alpha p,\vec x)\bigr)\,\big\vert\, p\in O_n,\, \vec x\in X^n,\, \alpha\in \Fbij(n,n)\,\bigr\}.
    \]
    \item When $\W=\Fsurj$, \cref{eqn:LanJO-coend} is $\sum_{S\in \Pfin X}O_{|S|}$, where $|S|$ is the cardinality of $S$. \qed
\end{bracketenumerate}
\end{proposition}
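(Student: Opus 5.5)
All three parts will be proved directly from the explicit description of $(\Lan_JO)X$ as the quotient of $\sum_{n}O_n\times X^n$ by the equivalence relation $\sim$ generated by \cref{eqn:sim-generating}. The first step is to note that a generating pair $(p,\vec x\alpha)\sim(\alpha p,\vec x)$ only relates elements whose arities are the domain and codomain of a morphism $\alpha$ of $\W$. So the task in each case is to identify the equivalence classes.

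For (1), the only morphisms of $\Fid$ are identities. Every generating pair is therefore trivial, $\sim$ is equality, and the quotient map from \cref{eqn:LanJO-coproduct} is a bijection.

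For (2), every morphism of $\Fbij$ is an automorphism $\alpha\colon n\to n$, so $\sim$ never relates different arities. The coend thus splits as $\sum_n (O_n\times X^n)/\sim_n$. The relation $\sim_n$ is the equivalence relation generated by the displayed set of pairs. This set is already an equivalence relation: it is the orbit relation of the $\Fbij(n,n)$-action on $O_n\times X^n$ given by $\alpha\cdot(p,\vec x)=(\alpha p,\vec x\alpha^{-1})$. Reflexivity comes from $\alpha=\id$, symmetry from taking $\alpha^{-1}$, and transitivity from composition. This requires checking the contravariance $\vec x(\alpha\beta)=(\vec x\alpha)\beta$ together with the functoriality of $O$.

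For (3), I would define a map $\Phi\colon \sum_n O_n\times X^n\to\sum_{S\in\Pfin X}O_{|S|}$ and show that it is a complete invariant for $\sim$. Given $(p,\vec x)$ with $\vec x\in X^n$, let $S=\{x_0,\dots,x_{n-1}\}$ and let $k=|S|$. Choose the enumeration $e\colon k\to S$ that lists $S$ in order of first occurrence in $\vec x$. Let $\sigma\colon n\to k$ be the surjection with $x_i=e(\sigma(i))$, and set $\Phi(p,\vec x)=(S,\sigma p)$.

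Three checks remain.
\begin{itemize}
\item $(p,\vec x)\sim(\sigma p,e)$ holds by \cref{eqn:sim-generating}, since $\vec x=e\sigma$.
\item $\Phi$ is invariant under the generating relation. Given a surjection $\alpha\colon m\to n$, the tuples $\vec x\alpha$ and $\vec x$ have the same underlying set $S$. Their first-occurrence enumerations may differ by a permutation, however. Hence I would first build an auxiliary invariant $\Phi'$ valued in $\sum_S O_{|S|}/\mathrm{Aut}$, or else fix an arbitrary bijection $e_S\colon |S|\to S$ for each $S$ independently of $\vec x$.
\item $\Phi$ is injective on classes.
\end{itemize}

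The clean choice is the fixed $e_S$. Because $e_S$ is injective, the surjection $\sigma$ with $\vec x=e_S\sigma$ is uniquely determined by $\vec x$. Invariance then follows from $\vec x\alpha=e_S(\sigma\alpha)$ and $(\sigma\alpha)p=\sigma(\alpha p)$. For injectivity, if $\Phi(p,\vec x)=\Phi(p',\vec x')$, then both elements are $\sim$ to the same $(q,e_S)$. Surjectivity is witnessed by $(q,e_S)\mapsto(S,q)$.

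The main obstacle is exactly this choice of enumeration in (3): the bijection with $\sum_S O_{|S|}$ is non-canonical. The argument works only because distinct elements of $e_S$ make the factorization through a surjection unique. Without that uniqueness one would only get the quotient by $\Fbij(k,k)$, and the fixed choice $e_S$ is precisely what kills that residual symmetry.
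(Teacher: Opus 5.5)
Your proof is correct. For part (3), the only part the paper actually proves, you take a more hands-on route than the paper does.

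The paper writes $(\Lan_JO)X$ as the weighted colimit $\Set(J-,X)\star O$. It then observes that the weight decomposes as $\Set(Jn,X)\cong\sum_{S\in\Pfin X}\Fsurj(n,|S|)$ naturally in $n\in\Fsurj$, i.e.\ that $J\colon\Fsurj\to\Set$ is a left multi-adjoint. It finishes with two standard facts: a coproduct in the weight becomes a coproduct of weighted colimits, and the Yoneda isomorphism $\Fsurj(-,k)\star O\cong O_k$.

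Your argument is essentially this computation done by hand on the explicit quotient:
\begin{itemize}
\item the unique factorization $\vec x=e_S\sigma$ is the isomorphism \cref{eqn:J-left-multiadj}, which likewise silently depends on a chosen bijection $|S|\cong S$;
\item your invariance check $\vec x\alpha=e_S(\sigma\alpha)$ is its naturality in $n$;
\item the step $(p,\vec x)\sim(\sigma p,e_S)$, together with injectivity on classes, is the Yoneda reduction.
\end{itemize}

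What your version buys:
\begin{itemize}
\item it is self-contained;
\item it treats all three parts uniformly from the presentation \cref{eqn:sim-generating};
\item it verifies explicitly that the relation in (2) is already an equivalence relation, namely the orbit relation of $\alpha\cdot(p,\vec x)=(\alpha p,\vec x\alpha^{-1})$, which the paper only cites;
\item it makes the non-canonical choice of enumeration visible.
\end{itemize}

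What the paper's version buys: it is shorter, and it reuses standard weighted-colimit identities instead of re-verifying them. It also isolates the one structural fact behind the simplification, left multi-adjointness of $J$, so the same argument applies verbatim to any $\W$ whose inclusion into $\Set$ has that property.
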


Here is an observation with potential consequences on distributive laws (see \cref{rmk:choice-of-operad}).

\begin{proposition}\label{uniqueness-of-operads-inducing-monads}
    When 
$\Ctxt$ is $\Fbij$, $\Fsurj$, or $\F$, 
any $\Ctxt$-operadic monad is uniquely $\Ctxt$-operadic. 
    When
$\Ctxt$ is $\Fid$, $\Fmonoinj$, or $\Finj$, 
there exists a $\Ctxt$-operadic monad which is not uniquely $\Ctxt$-operadic.\qed
\end{proposition}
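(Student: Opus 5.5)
The plan is to recover a $\W$-operad $\mnd O$ from the monad $\mnd S=\Mnd{\mnd O}$, both as a functor and with its $\ido$ and $\subst$, for the positive cases; for the negative cases, exhibit explicit counterexamples.

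\textbf{Recovering $O$ for $\W=\Fbij,\Fsurj,\F$.}
\begin{itemize}
\item For $\W=\F$: $\Lan_JO$ restricted along $J$ is $O$ again, since $J$ is fully faithful. Hence $O_n\cong S(n)$ naturally in $n\in\F$. Moreover $\ido$ and $\subst$ are recovered as $\eta_1(0)$ and via $\mu$ applied to elements of $S(n)$ whose arguments lie in $S(m)^n$ (Kleisli substitution). So any $\mnd O,\mnd O'$ with $\Mnd{\mnd O}\cong\Mnd{\mnd O'}$ are isomorphic.
\item For $\W=\Fsurj$: by \cref{LanJO-simplification}(3), $S X\cong\sum_{A\in\Pfin X}O_{|A|}$. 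So $O_n$ is recovered as the subset of $S(n)$ of elements whose ``support'' is all of $n$, i.e.\ elements not in the image of $S(\iota)$ for any proper inclusion $\iota$. These are exactly the classes $[p,\id_n]$. The action of surjections is recovered from $S$ on maps, and $\ido$, $\subst$ again via $\eta,\mu$.
\item For $\W=\Fbij$: by \cref{LanJO-simplification}(2), $O_n$ is recovered as the set of elements $[p,\vec x]\in S(n)$ with $\vec x$ injective, equivalently those in no image $S(f)$ for non-bijective $f$. These are canonically in bijection with $O_n$, since $\vec x$ can be normalised to $\id_n$ using the $\Fbij$-action.
\end{itemize}

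\textbf{Transport of structure.} In each positive case the remaining step is to check that a monad isomorphism $\Mnd{\mnd O}\cong\Mnd{\mnd O'}$ restricts to an isomorphism of these subsets. This holds because it is natural, so it preserves the intrinsic characterization (``not in the image of a smaller set''). It must also be compatible with the $\W$-action, $\ido$ and $\subst$; that compatibility comes from naturality and preservation of $\eta,\mu$. The intrinsic characterization of the $O_n$ is the main work here; the rest is routine.

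\textbf{Counterexamples for $\W=\Fid,\Fmonoinj,\Finj$.} Here the coend identifies too little: distinct operads can yield the same functor, e.g.\ symmetric structure hidden in $O$ versus absorbed in $X^n$. For $\Fid$, I would compare two operads both inducing the commutative-monoid (multiset) monad $\mnd M$ or similar:
\begin{itemize}
\item $O_n=1$ gives the free monoid (list) monad;
\item $O_n=\Fbij(n,n)$ as the non-symmetric operad underlying the symmetric operad for monoids (permutations with block substitution) induces $\sum_n n!\times X^n\cong$ lists again.
\end{itemize}
So both give isomorphic functors, and I would check that the monad structures match. A cleaner candidate: the operad with $O_n=S_n$ induces, via $\Fid$, $\sum_n S_n\times X^n$. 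This is isomorphic to the list monad $\sum_n X^n\times S_n$ composed with the writer-like permutation factor, but not to the terminal operad's monad, so this choice must be adjusted.

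The main obstacle is finding a genuine pair of non-isomorphic operads with isomorphic monads. I would try the simplest case: operads concentrated in arity $0$ and $1$, where $\Lan_J O\, X = O_0 + O_1\times X$ for $\Fid$. For $\Finj$ the injection $0\to1$ identifies $O_0$ into $O_1$, giving $O_1\times X$ plus the leftover part. I would then search for two different monoid-with-constants structures that give isomorphic monads, e.g.\ writer-exception combinations $M\times X + C$, where the operad's $O_0$ can be varied while $O_0+O_1X$ remains isomorphic via a non-operadic isomorphism. The same example would handle $\Fmonoinj$ and $\Finj$ after adapting it to weakening.
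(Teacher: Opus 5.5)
\textbf{The second half of the statement is not proved.} You never exhibit two non-isomorphic $\W$-operads with isomorphic induced monads for any of $\Fid$, $\Fmonoinj$, $\Finj$, and the places you propose to search cannot contain one.

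For $\Fid$, an operad concentrated in arities $\le 1$ induces $O_0+O_1\times X$. Every natural isomorphism $O_0+O_1\times X\cong O'_0+O'_1\times X$ is a pair of bijections $O_0\cong O'_0$, $O_1\cong O'_1$, and compatibility with $\eta,\mu$ turns it into an operad isomorphism. Your other $\Fid$ candidate fails too, since $\sum_n n!\times X^n\not\cong\sum_n X^n$.

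For $\Fmonoinj$ and $\Finj$, operads concentrated in arities $\le 1$ do not even exist: $\ido\in O_1$ and the injections $1\to n$ force $O_n\neq\emptyset$ for all $n\ge 1$. That forced higher arity is exactly what the missing counterexample exploits. For a monoid $\mon M$, take $O_0=\emptyset$, $O_1=M$, $O_n=1$ for $n\ge 2$, with $\subst$ trivial except $\subst_{1,(1)}$, which is the multiplication of $\mon M$. The two (monotone) injections $1\to 2$ give $[m,(x)]\sim[\ast,(x,y)]\sim[m',(y)]$. So $(\Lan_JO)X$ is $\emptyset$ or $1$ according as $X$ is empty or not. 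This functor carries a unique monad structure, independent of $\mon M$.

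For $\Fid$ there is no weakening. A natural isomorphism of $\sum_n O_n\times X^n$ is an arity-wise bijection twisted by a permutation of each operation's arguments, and a counterexample must exploit these twists. This is genuinely delicate; the paper cites Leinster's ``Are operads algebraic theories?'' rather than constructing one.

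\textbf{The positive half takes a different route from the paper, and the $\Fbij$ case is wrong as written.} The paper uses two facts: $\Lan_J\colon[\W,\Set]\to[\Set,\Set]$ is faithful and full on isomorphisms (Joyal for $\Fbij$, Szawiel--Zawadowski for $\Fsurj$, full faithfulness of $J$ for $\F$), and $\Lan_J$ is strong monoidal. Hence a monad isomorphism lifts uniquely to an operad isomorphism. You instead, in effect, identify $\mnd O$ with the image of the unit $\mnd O\to\Opd{\W}{\Mnd{\mnd O}}$ and characterize that image intrinsically.

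This works, and is more elementary, for $\F$ and for $\Fsurj$. In the $\Fsurj$ case, excluding images of $S(\iota)$ for proper inclusions $\iota$ cuts out exactly the summand $O_n$ of $\sum_{A\subseteq n}O_{|A|}$.

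For $\Fbij$ it breaks, for two reasons:
\begin{itemize}
\item ``$\vec x$ injective'' is not intrinsic, and it also admits lower-arity elements such as constants.
\item The set of elements of $S(n)$ in the image of no $S(f)$ with $f$ non-bijective is empty for $n\ge 1$. Indeed, every $x\in S(n)$ equals $S(f)\bigl(S(s)x\bigr)$ for a split epimorphism $f\colon n+1\to n$ with section $s$.
\end{itemize}
Excluding only non-surjective $f$ detects that $\vec x$ is surjective, not that the arity is exactly $n$. Detecting the arity intrinsically is precisely the content of Joyal's appendix (or of a generic-element argument), and you would need to supply it.
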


\begin{remark}\label{rmk:Lan-restriction-lax-monoidal}
    Recall \cref{rmk:operads-as-monoid-objects}.
    The left Kan extension and restriction along the inclusion functor $J\colon\W\to\Set$ induces a \emph{lax monoidal} adjunction
\begin{equation}\label{eqn:LanJ-JSet-adjunction}
\begin{tikzpicture}[baseline=-\the\dimexpr\fontdimen22\textfont2\relax ]
      \node(0) at (0,0) {$([\Set,\Set],1,\circ)$};
      \node(1) at (4,0) {$([\Ctxt,\Set],J,\bullet)$,};
      \draw [<-,transform canvas={yshift=.4em}] (0) to node[yshift=-2pt,auto,labelsize] {$\Lan_J$} (1);
      \draw [->,transform canvas={yshift=-.4em}] (0) to node[auto,swap,labelsize] {$[J,\Set]$} (1);
      \node[rotate=90] at (2,0) {$\vdash$};
\end{tikzpicture}
\end{equation}
where the monoidal structure $(1,\circ)$ on $[\Set,\Set]$ is given by composition of endofunctors (cf.\ \cite[\S~5]{Curien-operads-clones}).
It then follows that the adjunction \cref{eqn:LanJ-JSet-adjunction} lifts to the categories of monoid objects:
\begin{equation*}
\begin{tikzpicture}[baseline=-\the\dimexpr\fontdimen22\textfont2\relax ]
      \node(0) at (0,0) {$\Mon([\Set,\Set],1,\circ)$};
      \node(1) at (5.5,0) {$\Mon([\Ctxt,\Set],J,\bullet)$.};
      \draw [<-,transform canvas={yshift=.4em}] (0) to node[yshift=-2pt,auto,labelsize] {$\Mon(\Lan_J)$} (1);
      \draw [->,transform canvas={yshift=-.4em}] (0) to node[auto,swap,labelsize] {$\Mon([J,\Set])$} (1);
      \node[rotate=90] at (2.75,0) {$\vdash$};
\end{tikzpicture}
\end{equation*}
Note that $\Mon([\Set,\Set],1,\circ)$ is the category $\Mndcat{\Set}$ of monads on $\Set$, whereas $\Mon([\Ctxt,\Set],J,\bullet)$ is the category $\Opdcat{\W}$ of $\Ctxt$-operads.
Our construction $\Mnd{-}$ is $\Mon(\Lan_J)$. 
We discuss the functor $\Mon([J,\Set])$ in \cref{sec:monad-to-operad};
see in particular \cref{Mnd-W-Opd-adjunction}.
\lipicsEnd
\end{remark}

\subsection{Change of Base}\label{subsec:from-operad-to-operad}

When one verbal category $\Ctxt$ is a subcategory of another $\Ctxt'$, we obtain two-way constructions between $\Ctxt$-operads and $\Ctxt'$-operads---this is much like the one in \cref{rmk:Lan-restriction-lax-monoidal} between $\Ctxt$-operads and monads. We describe these constructions, together with their consequence on the relationship with monads.

Let $\Ctxt$ and $\Ctxt'$ be
 verbal categories such that $\Ctxt$  is a subcategory of $\Ctxt'$; this fixes the inclusion functor $K\colon\Ctxt\to\Ctxt'$. Firstly, given a $\Ctxt'$-operad ${\mnd O}'$, it is easy to see that the \emph{restriction} of ${\mnd O}'$ along $K$ is a $\Ctxt$-operad. Its underlying functor is given by $\Ctxt\xrightarrow{K}\Ctxt'\xrightarrow{O'}\Set$. 

Conversely, given a $\Ctxt$-operad $\mnd O$, we define its \emph{extension} ${\mnd O}'=(O',\ido^{\mnd O'},\subst^{\mnd O'})$ along $K$ which is a $\Ctxt'$-operad. Its underlying functor is given, much like in \eqref{eqn:LanJO},  by the left Kan extension $O'=\Lan_KO$. Concretely, for any $n'\in \W'$, we have 
\[
O'_{n'}=
(\Lan_KO)_{n'}\cong \textstyle\int^{\vN{n}\in \Ctxt}O_{n}\times \Ctxt'(\vN{n},\vN{n'}),
\]
whose element is denoted by $[\,p\in O_{n},\,\alpha\in \Ctxt'(\vN{n},\vN{n'})\,]$.

The unit $\ido^{\mnd O'}$ for the extension $\mnd O'$ is $[\,\ido^{\mnd O}\in O_1,\,\id_{\vN 1}\in\Ctxt'(\vN 1,\vN 1)\,]\in O'_1$.
For $n'\in \NN$ and $\vec m'=(m'_{i'}\in \NN)_{i'\in n'}$, 
the substitution operation 
\[\subst^{\mnd O'}_{n',\vec m'}\colon O'_{n'}\times \textstyle\prod_{i'\in n'}O'_{m'_{i'}}\to O'_{m'}\] 
for $\mnd O'$ (with $m'=\sum_{i'\in n'} m'_{i'}$) is given by
\begin{align*}
    \bigl(\,[q\in O_{n},\,\beta\in \W'(n,n')],\,
    \bigl([p_{i'}&\in O_{m_{i'}},\,\alpha_{i'}\in \W'(m_{i'},m'_{i'})]\bigr)_{i'\in\vN n'}\,\bigr)\\
    &\longmapsto \quad \bigl[\,\subst^{\mnd O}_{n,\vec m\beta}(q,\vec p\beta)\in O_{m},\, \beta\star\vec\alpha\in \W'(m,m')\,\bigr],
\end{align*}
where $\vec m\beta=(m_{\beta(i)})_{i\in n}$, $\vec p\beta=(p_{\beta(i)})_{i\in n}$, $m=\sum_{i\in n}m_{\beta(i)}$, and $\vec \alpha=(\alpha_{i'})_{i'\in n'}$. The use of $\star$, from \cref{def:operad-of-functions},  is crucial here.

Regarding the relationship to \cref{subsec:from-operad-to-monad}, we have the following.
\begin{proposition}\label{extension-of-operad}
\begin{bracketenumerate}
 \item 
 Let $K\colon\Ctxt\to\Ctxt'$ be an inclusion  of verbal categories and ${\mnd O}'$ be the extension of a $\Ctxt$-operad ${\mnd O}$ along $K$. Then ${\mnd O'}$ is a $\Ctxt'$-operad and 
 $\Mnd{{\mnd O}'}\cong \Mnd{{\mnd O}}$. 
 \item Consequently, any $\Ctxt$-operadic monad is $\Ctxt'$-operadic, too. \qed
\end{bracketenumerate}
\end{proposition}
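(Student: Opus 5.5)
The plan is to prove part (1) in two steps: first check that $\mnd O'$ is a $\W'$-operad, then exhibit a monad isomorphism $\Mnd{\mnd O'}\cong\Mnd{\mnd O}$. Part (2) is then immediate: if $\mnd S\cong\Mnd{\mnd O}$ for a $\W$-operad $\mnd O$, the extension $\mnd O'$ along $K$ is a $\W'$-operad with $\Mnd{\mnd O'}\cong\mnd S$.

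For the operad axioms, I first show that $\subst^{\mnd O'}$ is well defined on coend classes. Changing the representative of $[q,\beta]$ along a morphism $\gamma$ of $\W$, i.e.\ $[\gamma q,\beta]=[q,\beta\gamma]$, is handled by the compatibility axiom \eqref{eqn:subst-compatibility} of $\mnd O$ with $\gamma$ and identities. The needed identity is
\[
(\beta\gamma)\star\vec\alpha=(\beta\star\vec\alpha)\circ\bigl(\gamma\star(\id)\bigr),
\]
which holds in $\F$. Changing the representative of $[p_{i'},\alpha_{i'}]$ along $\gamma_{i'}$ uses the compatibility axiom with $\beta=\id$, together with the identity $\id\star(\vec\alpha\gamma)=(\id\star\vec\alpha)\circ(\id\star\vec\gamma)$. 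For the remaining axioms:
\begin{itemize}
\item The action of $\W'$ is by postcomposition on the second component, so compatibility \eqref{eqn:subst-compatibility} for $\mnd O'$ reduces to the interchange law $(\beta'\beta)\star(\vec\alpha'\vec\alpha)=(\beta'\star\vec\alpha')\circ(\beta\star\vec\alpha\beta')$, suitably reindexed.
\item The unit laws reduce to $\id_1\star(\alpha)=\alpha$ and $\beta\star(\id_1)_i=\beta$, together with the unit laws of $\mnd O$.
\item Associativity reduces to associativity of $\subst^{\mnd O}$ and the associativity of $\star$: $(\beta\star\vec\alpha)\star\vec\gamma=\beta\star(\alpha_i\star\vec\gamma_i)_i$, with the appropriate reindexing.
\end{itemize}
All these identities about $\star$ are verified in $\F$ by chasing the coproduct injections $\iota$ of \cref{notation-kappa}. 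I expect the bookkeeping of reindexings here, especially in associativity and compatibility, to be the main obstacle, even though each individual identity is elementary.

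For the monad isomorphism, the underlying functors agree by composition of left Kan extensions: $\Lan_{J'}\Lan_KO\cong\Lan_{J'K}O=\Lan_JO$. Concretely, I define
\[
\Phi_X\colon(\Lan_{J'}O')X\to(\Lan_JO)X,\qquad [[p,\alpha],\vec x]\mapsto[p,\vec x\alpha],
\]
with inverse $[p,\vec x]\mapsto[[p,\id],\vec x]$. Checking that $\Phi_X$ is well defined is routine: the relation $([p,\alpha],\vec x\alpha')\sim([p,\alpha'\alpha],\vec x)$ maps to equal elements, and the $\W$-relation inside $[p,\alpha]$ maps to the $\W$-relation of $\Lan_JO$. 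The two composites are identities because $[[p,\alpha],\vec x]=[[p,\id],\vec x\alpha]$. Naturality in $X$ is clear. Preservation of the unit is immediate. For the multiplication, I evaluate both sides on a representative. On the $\mnd O'$ side the substituted term is $[\subst(q,\vec p\beta),\beta\star\vec\alpha]$ applied to the concatenated variables. Its image under $\Phi$ is $[\subst(q,\vec p\beta),\text{(concatenation)}\circ(\beta\star\vec\alpha)]$, and the definition of $\beta\star\vec\alpha$ via $\iota$ shows this equals the multiplication of $\Mnd{\mnd O}$ applied to $[q,(\Phi[p_{i'},\ldots])\beta]$. That element is in turn the image of the original element.
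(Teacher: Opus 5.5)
Your proposal is correct. The paper states this proposition without a written proof. Your direct verification is the check that its explicit $\star$-formula for $\subst^{\mnd O'}$ is designed for. Likewise, your inverse $[p,\vec x]\mapsto[[p,\id],\vec x]$ is the identification $\Lan_{J'}\Lan_K O\cong\Lan_J O$ via the Kan-extension unit $\sigma_n(p)=[p,\id_n]$, which the paper relies on in its proof of \cref{invariance-wrt-cob}. The paper's analogy with \cref{rmk:Lan-restriction-lax-monoidal} hints at a more conceptual route through the substitution monoidal structures, lifting $\Lan_K$ to monoids. Your argument avoids constructing $\bullet$, at the price of the $\star$-bookkeeping.

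Two small precisions.

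First, when you change the representative in the $p_{i'}$-slot, the identity you actually need is $\beta\star(\alpha_{i'}\gamma_{i'})_{i'}=(\beta\star\vec\alpha)\circ\bigl(\id_n\star(\gamma_{\beta(i)})_{i\in n}\bigr)$, not only its $\beta=\id$ case. This general form is a special case of the interchange law you already state for the compatibility axiom, so nothing new is required.

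Second, both well-definedness steps need $\gamma\star\vec\id$ and $\id_n\star(\gamma_{\beta(i)})_{i\in n}$ to lie in $\W$ itself. Only then do they act on $O$ and move across the coend over $\W$. This is where closure of $\W$ under $\star$, and not just closure of $\W'$, is used.
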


    From the viewpoint of algebraic theories, a $\W$-operad $\mnd O$ and its \emph{extension} $\mnd O'$ express the same algebraic theory.
    In contrast, the \emph{restriction} construction may modify the theory.

\subsection{Examples}\label{subsec:examples-of-W-operads}
Here we look at examples of $\W$-operads and $\W$-operadic monads. 
In general, for any verbal category $\W$, the functor $\Delta_{\vN 1}^{\W}\colon \W\to\Set$ constant at the terminal object $\vN 1\in\Set$ has a unique structure of a $\W$-operad; we call it the \emph{terminal $\W$-operad} and write $\Delta_{1}^{\W}$ for it.

\begin{example}[$\W=\Fid$]\label{ex:Fid-operadic}
Using \cref{LanJO-simplification}(1), we obtain the following.
\begin{itemize}
 \item The \emph{list monad} $(-)^{*}$  (corresponding to the theory of monoids) is induced by the terminal $\Fid$-operad $\Delta_{1}^{\Fid}$.
 \item The $C$-exception monad $\mnd E^C$ (\cref{ex:exception-monad}) is induced by an $\Fid$-operad; specifically ${\mnd O}=(O,\ido,\subst)$ with $O_{0}=C$, $O_{1}=\{\ido\}$, and $O_{2}=O_{3}=\cdots=\emptyset$. 
 \item The $\mon M$-writer monad $\mnd W^{\mon M}$ (\cref{ex:writer-monad}) is induced by 
an $\Fid$-operad  ${\mnd O}=(O,\ido,\subst)$ with $O_{0}=\emptyset$, $O_{1}=M$ (where $\ido$ is the unit $1\in M$ of the monoid $\mon M$), and $O_{2}=O_{3}=\cdots=\emptyset$. 
\end{itemize}
Hence these monads are $\Fid$-operadic.
See \cite[Chap.~2]{Leinster-higher-operads} for more examples.
It follows from \cref{W-operadic-monad-intrinsically}(1) that
every $\Fid$-operadic monad is \emph{cartesian} in the sense of \cite[Def.~4.1.1]{Leinster-higher-operads}.\lipicsEnd
\end{example}

We revisit the verbal substitution operation $\star$ (\cref{def:operad-of-functions}). 
\begin{proposition}\label{W-operad-of-monoids}
 Let $\W$ be a verbal category and $\mnd O^{\mathrm{mon}}$ be the $\W$-operad obtained from the terminal $\Fid$-operad $\Delta_{1}^{\Fid}$ via extension (\cref{subsec:from-operad-to-operad}). 
\begin{bracketenumerate}
 \item  $\mnd O^{\mathrm{mon}}$ is the \emph{$\W$-operad for monoids}, with the underlying functor $O^{\mathrm{mon}}_{n}=\sum_{n'\in\NN}\W(\vN{n'},\vN n)$.
 \item The substitution operation of $\mnd O^{\mathrm{mon}}$ coincides with the verbal substitution operation $\star$ in $\W$, i.e., $\subst^{\mnd O^{\mathrm{mon}}}(\beta,\vec{\alpha})=\beta\star\vec{\alpha}$. \qed
\end{bracketenumerate}
\end{proposition}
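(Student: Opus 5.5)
We compute the extension $\mnd O^{\mathrm{mon}}=(O',\ido',\subst')$ of the terminal $\Fid$-operad $\Delta_1^{\Fid}$ along the inclusion $K\colon\Fid\to\W$ directly from the formulas of \cref{subsec:from-operad-to-operad}, and then read off both claims.

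For (1), the underlying functor is $O'_{n}=\int^{n'\in\Fid}\vN 1\times\W(\vN{n'},\vN n)$. Since $\Fid$ is discrete (only identities), the coend has no nontrivial identifications and reduces to the coproduct $\sum_{n'\in\NN}\W(\vN{n'},\vN n)$, exactly as in \cref{LanJO-simplification}(1). An element $[\ast,\alpha]$ is thus just a pair $(n',\alpha)$ with $\alpha\in\W(n',n)$; we identify it with $\alpha$ (its domain records $n'$). The functorial action of $\gamma\colon n\to k$ in $\W$ is postcomposition, $\alpha\mapsto\gamma\circ\alpha$. To justify the name ``operad for monoids'', we then apply \cref{extension-of-operad}(1): $\Mnd{\mnd O^{\mathrm{mon}}}\cong\Mnd{\Delta_1^{\Fid}}$, which is the list monad by \cref{ex:Fid-operadic}. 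Hence its algebras are monoids, and $\mnd O^{\mathrm{mon}}$ encodes the theory of monoids in the $\W$-context.

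For (2), we specialise the displayed formula for $\subst^{\mnd O'}$. Given $\beta=[\ast,\beta]\in O'_{n'}$ with $\beta\colon n\to n'$, and $\alpha_{i'}=[\ast,\alpha_{i'}]\in O'_{m'_{i'}}$ with $\alpha_{i'}\colon m_{i'}\to m'_{i'}$, the formula returns
\[
\bigl[\,\subst^{\Delta_1}_{n,\vec m\beta}(\ast,\vec\ast),\ \beta\star\vec\alpha\,\bigr]=[\ast,\beta\star\vec\alpha],
\]
because all substitutions in the terminal operad are trivial. Under the identification from (1), this is $\beta\star\vec\alpha$, of type $\sum_{i\in n}m_{\beta(i)}\to\sum_{i'\in n'}m'_{i'}$, as \cref{def:operad-of-functions} requires. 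The unit is $\ido'=\id_{\vN 1}$. As a consistency check, the operad axioms of \cref{def:Ctxt-operad} become the known associativity and unit laws of $\star$, and the compatibility axiom \cref{eqn:subst-compatibility} becomes functoriality of $\star$ under composition.

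The main obstacle is to make sure the extension formula is being applied in the right variance, i.e.\ that the roles of $\beta$ (the outer operation) and $\vec\alpha$ (the inner arguments) match those in \cref{def:operad-of-functions}. It is not a problem that $\vec\alpha$ is indexed by $n'$ while $\beta$ has domain $n$; this is exactly the typing in \cref{def:operad-of-functions} after renaming $n'\leftrightarrow n$. Once the discreteness of $\Fid$ is used to remove the coend quotient, there is nothing left to check beyond unwinding definitions.
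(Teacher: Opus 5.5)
Your proposal is correct and follows the route the paper intends. The paper gives no separate proof and treats the result as a direct unwinding of the extension formulas of \cref{subsec:from-operad-to-operad}, which is exactly what you do. Discreteness of $\Fid$ collapses the coend to $\sum_{n'}\W(\vN{n'},\vN n)$, and triviality of substitution in $\Delta_1^{\Fid}$ leaves only the $\beta\star\vec\alpha$ component. The ``operad for monoids'' reading then follows from \cref{extension-of-operad}(1) together with \cref{ex:Fid-operadic}.
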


The intuition for \cref{W-operad-of-monoids}(1) is that $\alpha\colon n'\to n$ in $\W$ represents a word of length $n'$ over the alphabet $n=\{0,1,\dotsc,n-1\}$ in the $\W$-substructural context. 

We return to examples of $\W$-operads.

\begin{example}[$\W=\Fmonoinj$]\label{ex:Fmonoinj-operadic}
Whereas not much seems to be known about $\Fmonoinj$-operads, we note the following.
\begin{itemize}
    \item The terminal monad constant at $1\in \Set$ (corresponding to the algebraic theory with a constant symbol $c$ and the axiom $x=c$) is induced by the terminal $\Fmonoinj$-operad $\Delta_1^{\Fmonoinj}$.
    \item The monad $1+(-)^\ast$ for the algebraic theory of \emph{monoids with zero} ($\mon P=(P,1,\cdot,0)$ such that $(P,1,\cdot)$ is a monoid and $0\in P$ satisfies $0\cdot x=0=x\cdot 0$ for all $x\in P$) is $\Fmonoinj$-operadic.
\end{itemize} 
We can also see that these $\Fmonoinj$-operadic monads are \emph{not} $\Fid$-operadic, via the fact that every $\Fid$-operadic monad is cartesian (see \cref{ex:Fid-operadic}):
the terminal monad is not cartesian since its unit is not; 
the monad $1+(-)^\ast$ is not cartesian since its multiplication is not.\lipicsEnd
\end{example}

\begin{example}[$\W=\Fbij$]\label{ex:Fbij-operadic}
The following monads are $\Fbij$-operadic.
\begin{itemize}
    \item The multiset monad $\mnd M$ (\cref{ex:MS}(1)). It is induced by the terminal $\Fbij$-operad $\Delta_1^{\Fbij}$.
    \item More generally, the $\sr N[\mon M]$-multiset monad $\mnd M^{\sr N[\mon M]}$ (\cref{ex:MS}), where $\sr N[\mon M]$ is the \emph{monoid semiring} of a monoid $\mon M$; see below.
\end{itemize}
The monoid semiring construction $\sr N[-]$ (with natural numbers as coefficients) is 
an analogue of the well-known \emph{group ring} construction, and is 
the left adjoint in the following adjunction between the categories $\SRing$ of semirings and $\Mon$ of monoids (folklore): 
\begin{equation}\label{eqn:monoid-semiring}
\begin{tikzpicture}[baseline=-\the\dimexpr\fontdimen22\textfont2\relax ]
      \node(0) at (0,0) {$\SRing$};
      \node(1) at (3,0) {$\Mon$.};
      \draw [<-,transform canvas={yshift=.4em}] (0) to node[auto,labelsize] {$\sr N[-]$} (1);
      \draw [->,transform canvas={yshift=-.4em}] (0) to node[auto,swap,labelsize] {$(-)_\mult$} (1);
      \node[rotate=90] at (1.5,0) {$\vdash$};
\end{tikzpicture}
    \end{equation}
    Here, the right adjoint $(-)_\mult$ sends each semiring $\sr S=(S,0,+,1,\cdot)$ to its multiplicative monoid $\sr S_\mult=(S,1,\cdot)$. 
    
    Given a monoid $\mon M$, the $\Fbij$-operad $\mnd O^{\mon M}$ inducing the $\sr N[\mon M]$-multiset monad $\mnd M^{\sr N[\mon M]}$ can be constructed as follows \cite[Ex.~3]{Tronin-Kopp}.
    Define the functor $O^{\mon M}\colon\Fbij\to \Set$ with $O^{\mon M}_n=M^n$; this assignment is \emph{covariantly} functorial because every morphism in $\Fbij$ is invertible. 
    We set $\ido=1\in M$, and $\subst$ by componentwise multiplication: e.g., 
    \[\subst_{2,(1,2)}\bigl((u,v),((x),(y,z))\bigr)=(ux,vy,vz).\]

In contrast,
the $\sr Z$-multiset monad $\mnd M^{\sr Z}$ (corresponding to the theory of abelian groups) is not $\Fbij$-operadic; 
note that the ring ${\sr Z}$ of integers is not of the form $\sr N[\mon M]$.
For a rigorous proof of this, one can 
use \cref{W-operadic-monad-intrinsically}(2)
and observe that the functor part of the monad $\mnd M^{\sr Z}$ does not
 preserve the following (weak) pullback in $\Set$:
\begin{math}
 \begin{tikzpicture}[baseline=-\the\dimexpr\fontdimen22\textfont2\relax ]
    \node[labelsize](21) at (0,0.3) {$\vN{0}$};
    \node[labelsize](22) at (0.6,0.3) {$\vN{0}$};
    \node[labelsize](11) at (0,-0.3) {$\vN{2}$};
    \node[labelsize](12) at (0.6,-0.3) {$\vN 1$};
    \draw [->] (11) to (12);
    \draw [->] (21) to (11);
    \draw [->] (21) to (22);
    \draw [->] (22) to  (12);
 \end{tikzpicture}
\end{math}.
(In fact, this shows that $\mnd M^{\sr Z}$ is not even $\Fsurj$-operadic, by \cref{W-operadic-monad-intrinsically}(3).)\lipicsEnd
\end{example}

\begin{example}[$\W=\Finj$]\label{ex:Finj-operadic}
Thanks to \cref{extension-of-operad}(2), any monad that is either $\Fmonoinj$-operadic (\cref{ex:Fmonoinj-operadic}) or $\Fbij$-operadic (\cref{ex:Fbij-operadic}) is $\Finj$-operadic. 
Here is an additional class of $\Finj$-operadic monads, combining certain features of the $\Fmonoinj$-operadic monad $1+(-)^\ast$ for monoids with zero and the $\Fbij$-operadic monad $\mnd M^{\sr N[\mon M]}$:
\begin{itemize}
    \item The $\sr N_0[\mon P]$-multiset monad $\mnd M^{\sr N_0[\mon P]}$ is $\Finj$-operadic for any monoid with zero $\mon P$. 
\end{itemize}
Here, $\sr N_0[\mon P]$ is 
the \emph{contracted monoid semiring} 
(cf.\ \cite[\S~5.2]{Clifford-Preston-semigroup-1})
of $\mon P$, which we now explain.
The underlying set of $\sr N_0[\mon P]$ 
is the set $M(P\setminus\{0\})$ of all finite multisets (see \cref{ex:MS}(1)) over the set $P\setminus \{0\}$ of \emph{nonzero} elements of $\mon P$. 
The addition in $\sr N_0[\mon P]$ is given by the usual addition of multisets (so the additive commutative monoid of $\sr N_0[\mon P]$ is the free commutative monoid over $P\setminus\{0\}$), whereas the multiplication in $\sr N_0[\mon P]$ is defined using the multiplication of $\mon P$.
The construction $\sr N_0[-]$ admits a characterization similar to that for $\sr N[-]$ in \eqref{eqn:monoid-semiring}:
it is the left adjoint of the adjunction $\sr N_0[-]\dashv (-)_{\multzero}\colon\SRing\to \Monzero$,
    where $\Monzero$ is the category of monoids with zero and the right adjoint $(-)_\multzero$ maps each semiring $\sr S=(S,0,+,1,\cdot)\in \SRing$ to $\sr S_\multzero=(S,1,\cdot,0)\in \Monzero$.

An $\Finj$-operad $\mnd O^{\mon P}$ inducing the $\sr N_0[\mon P]$-multiset monad $\mnd M^{\sr N_0[\mon P]}$ can be constructed as follows. 
The underlying functor $O^{\mon P}\colon \Finj\to \Set$ is given by $O^{\mon P}_n=P^n$.
Its action on morphisms of $\Finj$ is defined using $0\in P$; for example, if $\alpha\colon \vN 2\to\vN 3$ is given by $\alpha(0)=2$ and $\alpha(1)=0$, then $O^{\mon P}\alpha$ maps $(p_0,p_1)\in O^{\mon P}_2$ to $\alpha(p_0,p_1)= (p_1,0,p_0)\in O^{\mon P}_3$.\footnote{Whereas letting $\alpha(p_0,p_1)=(p_1,1,p_0)$ also defines a functor $\Finj\to\Set$, the axiom \cref{eqn:subst-compatibility} fails. A similar argument shows that \cite[Ex.~2.3]{Tronin-operads-varieties-polylinear} with $\W=\F$ is incorrect.} 
The rest of the operad structure of $\mnd O^{\mon P}$ is similar to that of $\mnd O^{\mon M}$ in \cref{ex:Fbij-operadic}.

The \emph{indexed valuation monad} $\mathbb{IV}$ of \cite{Varacca-Winskel-dist-law}
turns out to be an instance of $\mnd M^{\sr N_0[\mon P]}$, with $\mon P=(\RRnonneg,1,\cdot,0)$.
Although it happens to be $\Fbij$-operadic as well---note that $\sr N_0[(\RRnonneg,1,\cdot,0)]\cong \sr N[(\mathbb{R}_{>0},1,\cdot)]$ as semirings---it can be naturally obtained from the valuation monad $\mnd V=\mnd M^{\sr R_{\geq 0}}$ (\cref{ex:MS}(3)) as its \emph{$\Finj$-operadic refinement}; see \cref{subsec:IV-revisited} for details. \lipicsEnd
\end{example}

\begin{example}[$\W=\Fsurj$]\label{ex:Fsurj-operadic}
Using \cref{LanJO-simplification}(3), we see that 
the following monads are $\Fsurj$-operadic.
\begin{itemize}
    \item The finite powerset monad $\mnd \Pfin$ (\cref{ex:MS}(2)), induced by the terminal $\Fsurj$-operad $\Delta^{\Fsurj}_1$. 
    \item The nonempty finite powerset monad $\mnd \Pfinne$ (\cref{ex:AS}(1)).
 \item 
 The probability distribution monad $\mnd D$ (\cref{ex:AS}(2)).

\end{itemize}
For example, the probability distribution monad $\mnd D$ is induced by the $\Fsurj$-operad $\mnd O$ with $O_n=\{\,p\in Dn\mid \supp p=n\,\}$ (see \cref{ex:MS} for $\supp$).
More generally, for any semiring $\sr S=(S,0,+,1,\cdot)$ such that $0\neq 1$ and that $S\setminus\{0\}$ is closed under $+$ and $\cdot$, the $\sr S$-multiset monad $\mnd M^{\sr S}$ (\cref{ex:MS}) and the affine $\sr S$-multiset monad $\mnd A^{\sr S}$ (\cref{ex:AS}) are $\Fsurj$-operadic.\lipicsEnd
\end{example}

\begin{example}[$\W=\F$]\label{ex:F-operadic-monads}
As mentioned in \cref{W-operadic-monad-intrinsically}(4), every finitary monad is $\F$-operadic. 
Thus all monads corresponding to finitary algebraic theories (such as those for groups, rings, etc.) are $\F$-operadic.
In particular, all monads mentioned in \cref{sec:prelim} are $\F$-operadic,
with the proviso that for the $C$-reader monad $\mnd R^C$ (\cref{ex:reader-monad}) to be $\F$-operadic, $C$ must be a finite set. \lipicsEnd
\end{example}

\section{\texorpdfstring{$\W$}{W}-Commutative Monads}\label{sec:notions-of-commutative-monad}

In this section, we introduce the notion of \emph{$\W$-commutative monad} (on $\Set$) for any verbal category $\W$.
The definition (\cref{def:W-comm-monad}) uses a lax monoidal structure $(\eta_{\vN 1},\comm)$ on 
the functor part
 $T$ of a monad $\mnd T=(T,\eta,\mu)$---the structure exists even when $\mnd T$ is not commutative. We recall it in \cref{subsec:strength-monoidal-str}.
For verbal categories $\W$ in \cref{eqn:Ctxt-logically}, the $\W$-commutativity of monads can be characterized by a suitable combination of known conditions on monads, namely (ordinary) commutativity, affinity, and relevance; see \cref{W-comm-mnd-characterizations}.

\subsection{Strengths and Lax Monoidal Structures}\label{subsec:strength-monoidal-str}
In order to define $\W$-commutative monads, we first recall standard facts about 
\emph{strength} and \emph{lax monoidal structure} on a monad on $\Set$.
See \cref{apx:details-of-subsec-strength-monoidal-str} for details.

\begin{definition}[strength $\str$ \cite{Kock-monads-on-SMCC,Kock-strong-functors-and-monoidal-monads,McDermott-Uustalu}]\label{def:strength}
    Let $T\colon \Set\to \Set$ be a functor. 
    A \emph{left (tensorial) strength} on $T$ is a natural transformation 
    \[
    \str= \bigl(\str_{X,Y}\colon X\times TY\to T(X\times Y)\bigr)_{X,Y\in\Set}
    \]
    making the following diagrams commute for all sets $X$, $Y$, and $Z$.
    \begin{equation*}
        \begin{tikzpicture}[baseline=-\the\dimexpr\fontdimen22\textfont2\relax ]
      \node(01) at (0,0.5) {$\vN 1\times TX$};
      \node(02) at (2,0.5) {$T(\vN 1\times X)$};
      \node(11) at (1,-0.5) {$TX$};
      \draw [->] (01) to node[auto,labelsize] {$\str_{\vN 1,X}$} (02);
      \draw [->] (01) to node[auto,swap,labelsize] {$\cong$} (11);
      \draw [->] (02) to node[auto,labelsize] {$\cong$} (11);
\end{tikzpicture}
\quad
        \begin{tikzpicture}[baseline=-\the\dimexpr\fontdimen22\textfont2\relax ]
      \node(01) at (0,0.5) {$(X\times Y)\times TZ$};
      \node(02) at (7,0.5) {$T\bigl((X\times Y)\times Z\bigr)$};
      \node(11) at (0,-0.5) {$X\times (Y\times TZ)$};
      \node(12) at (3.5,-0.5) {$X\times T(Y\times Z)$};
      \node(13) at (7,-0.5) {$T\bigl(X\times (Y\times Z)\bigr)$};
      \draw [->] (01) to node[auto,labelsize] {$\str_{X\times Y,Z}$} (02);
      \draw [->] (01) to node[auto,swap,labelsize] {$\cong$} (11);
      \draw [->] (11) to node[auto,labelsize] {$1\times \str_{Y,Z}$} (12);
      \draw [->] (12) to node[auto,labelsize] {$\str_{X,Y\times Z}$} (13);
      \draw [->] (02) to node[auto,labelsize] {$\cong$} (13);
\end{tikzpicture}
    \end{equation*}
     
    Dually, a \emph{right (tensorial) strength} on $T$ is a natural transformation  
\begin{math}
     \str'= \bigl(\str'_{X,Y}\colon TX\times Y\to T(X\times Y)\bigr)_{X,Y\in\Set}
\end{math}
    satisfying the dual axioms.\lipicsEnd
\end{definition}

\begin{remark}[see e.g.\ {\cite{McDermott-Uustalu}}]\label{unique-strength}
    Any functor $T\colon\Set\to \Set$ has a unique left (resp.\ right) strength $\str$ (resp.\ $\str'$). 
    For sets $X$ and $Y$, $\tau_{X,Y}\colon X\times TY\to T(X\times Y)$ is defined as follows.
    For each $x\in X$, $\tau_{X,Y}(x,-)\colon TY\to T(X\times Y)$ is $Ti_x$, where $i_x\colon Y\to X\times Y$ is the function mapping each $y\in Y$ to $(x,y)\in X\times Y$.
    (See \cref{unique-strength-apx} for details.)
    \lipicsEnd
\end{remark}

\begin{definition}[lax monoidal structure $\comm$ \cite{Kock-monads-on-SMCC}]\label{def:comm}
    Let $\mnd T=(T,\eta,\mu)$ be a monad, and $\str$ and $\str'$ be the unique left and right strengths on $T$ as in \cref{unique-strength}. 
    Define a family 
    \begin{align*}
        \comm&=\bigl(\comm_{X,Y}\colon TX\times TY\to T(X\times Y)\bigr)_{X,Y\in\Set}
    \end{align*}
    of morphisms as follows.
    \begin{equation*}
\comm_{X,Y}\colon TX\times TY\xrightarrow{\str'_{X,TY}}T(X\times TY)\xrightarrow{T\str_{X,Y}}TT(X\times Y)\xrightarrow{\mu_{X\times Y}}T(X\times Y)\lipicsEnd
    \end{equation*}
\end{definition}

\begin{proposition}\label{strong-monad-monoidal}
    Let $\mnd T=(T,\eta,\mu)$ be a monad.
    Then the family
    $\comm$ in \cref{def:comm} together with $\eta_1\colon 1\to T1$ makes the functor part $T$ of $\mnd T$ into a lax monoidal functor $(T,\eta_1,\comm)\colon (\Set,1,\times)\to (\Set,1,\times)$ \cite[Thm.~2.1]{Kock-monads-on-SMCC}.
    Moreover, the unit $\eta$ and the left strength $\tau$ are monoidal natural transformations.
    \qed
\end{proposition}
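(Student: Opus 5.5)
The strategy is to reduce everything to explicit formulas in $\Set$, using the description of $\str$ and $\str'$ from \cref{unique-strength}. First I will compute $\comm$ concretely. For $t\in TX$ and $s\in TY$ we have $\str'_{X,TY}(t,s)=T(\lambda x.(x,s))(t)$. Applying $T\str_{X,Y}$ and $\mu$ then gives
\[
\comm_{X,Y}(t,s)=\mu_{X\times Y}\bigl(T(\lambda x.\,T(i_x)(s))(t)\bigr),
\]
that is, $\comm(t,s)=t\mathbin{>\!\!>\!\!=}(\lambda x.\,s\mathbin{>\!\!>\!\!=}(\lambda y.\,\eta(x,y)))$ in Kleisli notation. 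Naturality of $\comm$ in $X$ and $Y$ follows immediately from the naturality of $\str$, $\str'$ and $\mu$.

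Next I will verify the lax monoidal axioms, working in the Kleisli ("do-notation") form and using only the three monad laws.
\begin{itemize}
\item \emph{Left unit law.} $T(\lambda_X)\circ\comm_{1,X}\circ(\eta_1\times TX)$ should equal the unitor. This holds because $\eta_1(*)\mathbin{>\!\!>\!\!=}f=f(*)$ (left unit law of the monad), after which $s\mathbin{>\!\!>\!\!=}\eta\circ(\text{pairing with }*)$ reduces to $s$ (right unit law).
\item \emph{Right unit law.} This is symmetric to the left one.
\item \emph{Associativity.} We need $T(\mathrm{assoc})\circ\comm_{X\times Y,Z}\circ(\comm_{X,Y}\times TZ)=\comm_{X,Y\times Z}\circ(TX\times\comm_{Y,Z})\circ\mathrm{assoc}$. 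Both sides reduce to the triple bind $t\mathbin{>\!\!>\!\!=}\lambda x.\,s\mathbin{>\!\!>\!\!=}\lambda y.\,r\mathbin{>\!\!>\!\!=}\lambda z.\,\eta((x,y),z)$. To see this, reassociate binds with the associativity law of $\mu$, and eliminate the intermediate $\eta$ with the left unit law.
\end{itemize}
The main obstacle is this associativity check. It is routine but bookkeeping-heavy if done diagrammatically, and that is precisely why I prefer the elementwise Kleisli formulation available in $\Set$.

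Finally I will show the two monoidality claims.
\begin{itemize}
\item \emph{Unit $\eta$.} Monoidality of $\eta\colon\id\Rightarrow T$ (with $\id$ carrying the trivial monoidal structure) amounts to $\comm(\eta x,\eta y)=\eta(x,y)$ and $\eta_1=\eta_1$. The first follows from two applications of the left unit law.
\item \emph{Left strength $\str$.} Here $\str$ is a transformation between functors $\Set\times\Set\to\Set$, namely $X\times TY\to T(X\times Y)$. I will view the source $(X,Y)\mapsto X\times TY$ as the product of the lax monoidal functors $\id$ and $T$, which is lax monoidal with the structure $\id\times\comm$ up to the middle-four interchange. The claim is then $\str\circ(\text{that structure})=\comm\circ(\str\times\str)$ up to the canonical shuffle isomorphism. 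Elementwise this reads
\[
\comm\bigl(T i_x(s),\,T i_{x'}(s')\bigr)=T(\text{shuffle})\bigl(T i_{(x,x')}(\comm(s,s'))\bigr),
\]
which follows from naturality of $\comm$ in each variable, since $Ti_x$ is $T$ applied to a function. The unit condition reduces to $Ti_*\circ\eta_1=\eta_{1\times1}$, which is naturality of $\eta$.
\end{itemize}
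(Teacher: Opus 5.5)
Your proposal is correct, but it takes a different route from the paper. The paper verifies nothing by hand. It cites \cite[Thm.~2.1]{Kock-monads-on-SMCC} for the lax monoidality of $(T,\eta_1,\comm)$. It obtains the monoidality of $\eta$ (diagram \eqref{eqn:eta-monoidal}) from Kock's proof of Thm.~3.2 together with \cite[Rem.~2.4]{Kock-bilinearity}. It treats the monoidality of $\str$ (diagram \eqref{eqn:str-monoidal}) only as ``a routine, though somewhat tedious, calculation''.

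You instead give a self-contained elementwise verification in Kleisli form that uses only the monad laws. Your handling of $\str$ is shorter than what the paper suggests. Since $\str_{X,Y}(x,-)=Ti_x$ by \cref{unique-strength}, and the shuffle satisfies $\mathrm{shuffle}\circ i_{(x,x')}=i_x\times i_{x'}$ as maps $Y\times Y'\to(X\times Y)\times(X'\times Y')$, the whole condition collapses to
\[
\comm_{X\times Y,X'\times Y'}\bigl(Ti_x(s),Ti_{x'}(s')\bigr)=T(i_x\times i_{x'})\bigl(\comm_{Y,Y'}(s,s')\bigr).
\]
This is just naturality of $\comm$.

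The citation route has one advantage: Kock's results hold for strong monads on arbitrary symmetric monoidal closed categories, where there are no elements and the strength is extra structure. Your argument relies on the fact that over $\Set$ the strength is unique and given pointwise. Since the paper works only over $\Set$, nothing is lost.

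One cosmetic point: the right unit law is analogous to the left one, not literally symmetric, because $\comm$ binds its left argument first. There you apply the left unit law to the inner bind and then use $\mu\circ T(\eta\circ f)=Tf$.
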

\begin{remark}
    Note that in \cref{strong-monad-monoidal} we only claim that the \emph{functor part} $T$ of a monad $\mnd T$ is lax monoidal. 
    In order for the \emph{monad} $\mnd T$ to be lax monoidal (in the sense that it is a monad in the $2$-category of monoidal categories, lax monoidal functors, and monoidal natural transformations), 
    the multiplication $\mu$ has to be monoidal, which is equivalent to $\mnd T$ being commutative in the ordinary sense.\lipicsEnd
\end{remark}

\subsection{\texorpdfstring{$\W$}{W}-Commutativity}
Let $\mnd T$ be a monad (on $\Set$). 
Thanks to \cref{strong-monad-monoidal}, for any $n\in\NN$ and $n$-tuple $\vec X=(X_i)_{i\in\vN n}$ of sets,
we obtain a well-defined morphism 
\begin{equation}\label{eqn:comm-n-ary}
\comm_{\vec X}\colon \textstyle\prod_{i\in\vN n}TX_i\to T\bigl(\textstyle\prod_{i\in\vN n}X_i\bigr)
\end{equation}
by composing various components of $\comm$ appropriately. 
(When $n=0$, \cref{eqn:comm-n-ary} is $\eta_{\vN 1}\colon \vN 1\to T\vN 1$; when $n=1$, \cref{eqn:comm-n-ary} is the identity on $TX_0$.) 
In particular, this means that
we construct the morphisms \cref{eqn:comm-n-ary} so that the following holds.
For any $n\in\NN$, $n$-tuple $\vec X=(X_i)_{i\in\vN n}$ of sets, and $0\leq k\leq n$, let $\vec X'=(X_i)_{i\in \vN k}$ and $\vec X''=(X_{k+i})_{i\in\vN{n-k}}$. Then
\begin{equation}\label{eqn:phi-decomposition}
\begin{tikzpicture}[baseline=-\the\dimexpr\fontdimen22\textfont2\relax ]
      \node(01) at (0,0.9) {$\prod_{i\in\vN k}TX_i\times \prod_{i\in\vN{n-k}}TX_{k+i}$};
      \node(02) at (4,0.9) {$\prod_{i\in\vN n}TX_i$};
      \node(11) at (0,0) {$T\bigl(\prod_{i\in\vN k}X_i\bigr)\times T\bigl(\prod_{i\in\vN{n-k}}X_{k+i}\bigr)$};
      \node(21) at (0,-0.9) {$T\bigl(\prod_{i\in\vN k}X_i\times 
      \prod_{i\in\vN{n-k}}X_{k+i}\bigr)$};
      \node(22) at (4,-0.9) {$T\bigl(\prod_{i\in \vN n}X_i\bigr)$};
      \draw [->] (01) to node[auto,labelsize] {$\cong$} (02);
      \draw [->] (01) to node[auto,swap,labelsize] {$\comm_{\vec X'}\times \comm_{\vec X''}$} (11);
      \draw [->] (11) to node[auto,swap,labelsize] {$\comm_{X',X''}$} (21);
      \draw [->] (02) to node[auto,labelsize] {$\comm_{\vec X}$} (22);
      \draw [->] (21) to node[auto,labelsize] {$\cong$} (22);
\end{tikzpicture}
    \end{equation}
commutes, where $X'=\prod_{i\in\vN k}X_i$ and $X''=\prod_{i\in \vN{n-k}}X_{k+i}$.
As in \cite[\S~3.2]{Dahlqvist-Parlant-Silva-layer-by-layer}, we set 
\begin{equation}\label{eqn:comm-n-power}
    \comm^{(n)}_X\colon (TX)^n\to T(X^n)
\end{equation}
as the special case of \eqref{eqn:comm-n-ary} with $X_0=X_1=\dots=X_{n-1}=X$.
Note in particular that $\comm^{(0)}_X=\eta_{\vN 1}\colon \vN 1\to T\vN 1$ and $\comm^{(1)}_X$ is the identity on $TX$.

\begin{definition}[$\W$-commutative monad]\label{def:W-comm-monad}
    Let $\Ctxt$ be a verbal category.
    A monad $\mnd T=(T,\eta,\mu)$ is \emph{$\Ctxt$-commutative} if, for each morphism $\alpha\colon \vN m\to \vN n$ in $\Ctxt$ and set $X$, the diagram
    \begin{equation}\label{eqn:Ctxt-dist-monad-Ctxt-morphism-power}
\begin{tikzpicture}[baseline=-\the\dimexpr\fontdimen22\textfont2\relax ]
      \node(01) at (0,0.5) {$(TX)^n$};
      \node(02) at (2,0.5) {$T(X^n)$};
      \node(11) at (0,-0.5) {$(TX)^m$};
      \node(12) at (2,-0.5) {$T(X^m)$};
      \draw [->] (01) to node[auto,labelsize] {$\comm_{X}^{(n)}$} (02);
      \draw [->] (01) to node[auto,swap,labelsize] {$(TX)^\alpha$} (11);
      \draw [->] (11) to node[auto,labelsize] {$\comm_{X}^{(m)}$} (12);
      \draw [->] (02) to node[auto,labelsize] {$T(X^\alpha)$} (12);
\end{tikzpicture}
		\end{equation}
        commutes.\lipicsEnd
\end{definition}

\begin{remark}\label{rmk:commutativity-monotonicity}
It follows directly from \cref{def:W-comm-monad}
that $\W$-commutativity of a monad $\mnd T$ is down-closed with respect to $\W$: if ${\mnd T}$ is $\W'$-commutative, then it is $\W$-commutative for any $\W\subseteq \W'$. 
We discussed this monotonicity in \cref{sec:intro} (cf.\ \cref{eq:fourVerbalCatforIntersection}). 
\lipicsEnd
\end{remark}

\begin{remark}\label{rmk:comm-comm'}
    The functor part $T$ of a monad $\mnd T$ admits another canonical lax monoidal structure $(\eta_1,\comm')$, where $\comm'$ is obtained by swapping the order of $\str$ and $\str'$ in the definition of $\comm$ (\cref{def:comm}). We have $\comm=\comm'$ iff $\mnd T$ is commutative in the ordinary sense.
    Using $\comm'$ instead of $\comm$, we obtain the evident analogues of \eqref{eqn:comm-n-ary} and \eqref{eqn:comm-n-power}, and hence of \cref{def:W-comm-monad}. 
    The resulting class of $\W$-commutative monads coincides with the one defined here using $\comm$.\lipicsEnd
\end{remark}

The following result establishes certain ``robustness'' of the notion of $\W$-commutative monad, by giving an alternative characterization. 
\begin{proposition}\label{W-comm-general-characterization}
    Let $\mnd T=(T,\eta,\mu)$ be a monad and $\alpha\colon m\to n$ be a morphism in $\F$. Then the following are equivalent.
    \begin{bracketenumerate}
        \item For each $n$-tuple $\vec X=(X_j)_{j\in n}$ of sets, the following diagram commutes.
        \begin{equation}\label{eqn:Ctxt-dist-monad-Ctxt-morphism-prod}
\begin{tikzpicture}[baseline=-\the\dimexpr\fontdimen22\textfont2\relax ]
      \node(01) at (0,0.5) {$\prod_{j\in\vN n}TX_j$};
      \node(02) at (3,0.5) {$T\bigl(\prod_{j\in\vN n}X_j\bigr)$};
      \node(11) at (0,-0.5) {$\prod_{i\in\vN m}TX_{\alpha(i)}$};
      \node(12) at (3,-0.5) {$T\bigl(\prod_{i\in\vN m}X_{\alpha(i)}\bigr)$};
      \draw [->] (01) to node[auto,labelsize] {$\comm_{\vec X}$} (02);
      \draw [->] (01) to node[auto,swap,labelsize] {$\langle \pi_{\alpha(i)}\rangle_{i\in\vN m}$} (11);
      \draw [->] (11) to node[auto,labelsize] {$\comm_{\vec X\alpha}$} (12);
      \draw [->] (02) to node[auto,labelsize] {$T\langle \pi_{\alpha(i)}\rangle_{i\in \vN m}$} (12);
\end{tikzpicture}
\end{equation}
Here, $\vec X\alpha=(X_{\alpha(i)})_{i\in\vN m}$ and $\pi_{\alpha(i)}$ is the $\alpha(i)$-th projection from the product $\prod_{j\in\vN n}TX_j$ or $\prod_{j\in\vN n}X_j$.
        \item For each set $X$, the diagram \cref{eqn:Ctxt-dist-monad-Ctxt-morphism-power} commutes. \qed
    \end{bracketenumerate}
\end{proposition}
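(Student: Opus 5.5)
(1)$\Rightarrow$(2) is immediate: take $X_j=X$ for all $j$. Then $\comm_{\vec X}=\comm^{(n)}_X$, $\comm_{\vec X\alpha}=\comm^{(m)}_X$, and $\langle\pi_{\alpha(i)}\rangle_{i\in m}$ is exactly $(TX)^\alpha$ (resp.\ $X^\alpha$). So \cref{eqn:Ctxt-dist-monad-Ctxt-morphism-prod} specializes to \cref{eqn:Ctxt-dist-monad-Ctxt-morphism-power}.

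For (2)$\Rightarrow$(1), I reduce the heterogeneous case to the homogeneous one by embedding all $X_j$ into a single set. Put $X=\sum_{j\in n}X_j$ with coproduct injections $\kappa_j\colon X_j\to X$, and $\kappa=\prod_j\kappa_j\colon\prod_j X_j\to X^n$. First I need naturality of the $n$-ary maps: for functions $f_j\colon X_j\to Y_j$,
\[
T\bigl(\textstyle\prod_j f_j\bigr)\circ\comm_{\vec X}=\comm_{\vec Y}\circ\textstyle\prod_j Tf_j .
\]
This follows from naturality of $\comm$ (\cref{strong-monad-monoidal}) by induction on $n$, using the decomposition \cref{eqn:phi-decomposition}. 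Next, the reindexing maps are natural in the same sense:
\[
\langle\pi_{\alpha(i)}\rangle_{i\in m}\circ\textstyle\prod_{j}\kappa_j=\textstyle\prod_{i}\kappa_{\alpha(i)}\circ\langle\pi_{\alpha(i)}\rangle_{i\in m}
\]
on both $\prod_j X_j$ and $\prod_j TX_j$ (the latter with $T\kappa_j$).

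Combining these, postcomposing both legs of \cref{eqn:Ctxt-dist-monad-Ctxt-morphism-prod} with $T(\prod_i\kappa_{\alpha(i)})$ gives the two legs of \cref{eqn:Ctxt-dist-monad-Ctxt-morphism-power} precomposed with $\prod_j T\kappa_j$. These legs agree by (2). So (1) follows provided $T(\prod_i\kappa_{\alpha(i)})$ is a monomorphism. It is, because $\prod_i\kappa_{\alpha(i)}$ is injective, and any functor on $\Set$ preserves monos with nonempty domain, since such a mono is split.

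The main obstacle is the edge case where $\prod_i X_{\alpha(i)}=\emptyset$, for which injectivity may fail to transfer. There are two sub-cases.
\begin{itemize}
\item If some $X_j$ with $j$ in the image of $\alpha$ is empty, then $\prod_j TX_j$ need not be empty, so this case needs its own argument. Here I would instead use a retraction: when $X_{\alpha(i)}\neq\emptyset$ for all relevant indices, choose retractions $r_j\colon X\to X_j$ and argue directly. When some are empty, both legs land in $T(\emptyset)$; I would compare them via naturality along $\emptyset\to 1$, or handle them by the same retraction trick applied to the nonempty components.
\item Otherwise I avoid monicity entirely. Choose $X$ nonempty with retractions $r_j$ whenever each $X_j$ is nonempty, and use $T(\prod_i r_{\alpha(i)})\circ T(\prod_i\kappa_{\alpha(i)})=\id$ to cancel.
\end{itemize}
I expect this bookkeeping with empty sets to be the only delicate point. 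If the retraction approach becomes messy, I would prove the empty cases by an explicit check using the description of $\tau$ in \cref{unique-strength}.
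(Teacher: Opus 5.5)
Your strategy is the paper's proof: embed into $X=\sum_j X_j$, use naturality of $\comm_{\vec X}$ and of the reindexing maps, apply (2), and cancel $T(\prod_i\kappa_{\alpha(i)})$. It is complete whenever $\prod_i X_{\alpha(i)}\neq\emptyset$. The gap is the case you flag but do not resolve: some $X_{\alpha(i)}$ is empty while $X^m\neq\emptyset$.

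That case is not vacuous. If $T\emptyset\neq\emptyset$ (e.g.\ for an exception monad), $\prod_j TX_j$ can be nonempty, and the two legs are maps into $T\emptyset$ that must be shown equal. None of your fallbacks works as stated:
\begin{itemize}
\item The retraction trick needs a map $X^m\to\emptyset$, and none exists. Also, the first bullet's clause ``when $X_{\alpha(i)}\neq\emptyset$ for all relevant indices'' contradicts that bullet's own case hypothesis.
\item ``Comparing via naturality along $\emptyset\to 1$'' needs $T(\emptyset\to 1)$ to be injective. That is the same problem again, and it is false for general endofunctors of $\Set$: take $F\emptyset=2$ and $FY=1$ for $Y\neq\emptyset$.
\item An explicit computation with $\tau$ must still use hypothesis (2). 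Since (2) only concerns the homogeneous case, it needs the same transfer step.
\end{itemize}

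The missing ingredient, which the paper invokes, is that the functor part of a \emph{monad} on $\Set$ preserves all monomorphisms, including those with empty domain. Your remark that any functor preserves monos with nonempty domain is correct, but it cannot cover this case; the monad structure is essential.

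Here is the argument for $e\colon\emptyset\to Y$. If $T\emptyset=\emptyset$ there is nothing to prove. Otherwise choose $c\in T\emptyset$ and let $k\colon Y\to T\emptyset$ be constant at $c$. The composite $k\circ e\colon\emptyset\to T\emptyset$ is the empty map, so it equals $\eta_\emptyset$. Hence $(\mu_\emptyset\circ Tk)\circ Te=\mu_\emptyset\circ T\eta_\emptyset=\mathrm{id}_{T\emptyset}$, and $Te$ is split monic.

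With this, $T(\prod_i\kappa_{\alpha(i)})$ is always a monomorphism. Your main argument then goes through uniformly, with no case analysis on empty sets.
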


We note that \cref{eqn:Ctxt-dist-monad-Ctxt-morphism-power} is closely related to \emph{residual diagrams} of \cite{Dahlqvist-Parlant-Silva-layer-by-layer,Parlant-thesis}; see \cref{subsec:EMKl}. The diagram 
\cref{eqn:Ctxt-dist-monad-Ctxt-morphism-prod}
also appears in \cite[Def.~2]{Tronin-algebras-over-multicats}.

As already mentioned, 
for some verbal categories $\W$, the $\W$-commutativity of a monad can be characterized by means of known conditions on monads,
such as (ordinary) \emph{commutativity}~\cite{Kock-monads-on-SMCC}, \emph{affinity}~\cite{Kock-bilinearity}, \emph{relevance}~\cite{Jacobs-weakening-contraction}, \emph{$n$-relevance}~\cite{Parlant-Rot-Silva-Bas,Parlant-thesis}, and \emph{hyperaffinity}~\cite{Johnstone-collapsed-toposes,Garner-cartesian-closed-varieties-1}
(see \cref{apx:comm-aff-rel} for details about these conditions).

\begin{proposition}[{cf.\ \cite{Parlant-thesis}}]\label{W-comm-mnd-characterizations}
    \begin{bracketenumerate}
        \item Any monad is $\Fid$-commutative.
        \item A monad is $\Fmonoinj$-commutative iff it is affine.
        \item A monad is $\Fbij$-commutative iff it is commutative.
        \item A monad is $\Finj$-commutative iff it is commutative and affine.
        \item A monad is $\Fsurj$-commutative iff it is commutative and relevant.
        \item A monad is $\F$-commutative iff it is hyperaffine.
        \item For $n>0$, a monad is $\Fsurj^{(n)}$-commutative (\cref{ex:verbal-cats}) iff it is commutative and $n$-relevant. \qed
    \end{bracketenumerate}
\end{proposition}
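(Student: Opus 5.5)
The plan is to reduce each item to checking the diagram \cref{eqn:Ctxt-dist-monad-Ctxt-morphism-power} on a small set of generators of $\W$. First I would prove a general reduction lemma: if $\W$ is generated as a verbal category (closure under composition and $\star$) by a set $G$ of morphisms, then $\mnd T$ is $\W$-commutative iff \cref{eqn:Ctxt-dist-monad-Ctxt-morphism-power} commutes for every $\alpha\in G$. Closure under composition is immediate by pasting squares, since $(TX)^{(-)}$ and $T(X^{(-)})$ are both contravariant in $\alpha$. Closure under $\star$ is best handled via the equivalent form \cref{eqn:Ctxt-dist-monad-Ctxt-morphism-prod} from \cref{W-comm-general-characterization}. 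Concretely, $\beta\star\vec\alpha$ restricted to each block is $\alpha_{\beta(i')}$ followed by a coproduct injection. The decomposition \cref{eqn:phi-decomposition}, iterated, then expresses $\comm$ on a product of products as $\comm$ of the blockwise $\comm$'s. Using naturality of $\comm$ and the squares for $\beta$ (with $X_i$ replaced by the product sets $\prod X_{\ldots}$) and for each $\alpha_i$, the square for $\beta\star\vec\alpha$ follows.

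Next I would identify generators. $\Fid$ is generated by nothing, which gives (1) immediately. $\Fmonoinj$ is generated by $0\to 1$; $\Fbij$ by $\mathrm{swap}\colon 2\to 2$; $\Finj$ by $\mathrm{swap}$ and $0\to1$; $\Fsurj$ by $\mathrm{swap}$ and $!_2$; $\F$ by $\mathrm{swap}$, $!_2$ and $0\to1$; and $\Fsurj^{(n)}$ by $\mathrm{swap}$ and $!_n$, as stated in \cref{ex:verbal-cats}. Each of these generation claims is a combinatorial check: every map factors as a permutation followed by fiberwise $\star$ of $!_k$'s and $0\to1$'s, and $!_k$ is itself obtained from $!_2$ by $\star$ and composition.

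Then I would translate each generator square. For $0\to 1$, the square says $T!_X=\eta_1\circ !_{TX}$, i.e.\ affinity. For $\mathrm{swap}$, it says $\comm$ composed with the symmetry equals $T(\text{symmetry})\circ\comm$, i.e.\ $\comm=\comm'$, which is commutativity (cf.\ \cref{rmk:comm-comm'}). For $!_2$, it says $\comm\circ\Delta_{TX}=T\Delta_X$, i.e.\ relevance; for $!_n$ it says $\comm^{(n)}\circ\Delta=T\Delta^{(n)}$, i.e.\ $n$-relevance. Items (2)--(5) and (7) follow at once. For (6) I would note that hyperaffinity is known to be equivalent to commutative, affine and relevant (or prove the equivalence directly from the definition in \cref{apx:comm-aff-rel}).

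The main obstacle is the $\star$-closure step in the reduction lemma, which requires careful bookkeeping of the iterated $\comm$ and reindexing. A secondary difficulty is (2), where no swap is available. I would check that the generators $0\to1$ and $\id$ suffice under $\star$ with monotone $\beta$, because the squares for $\Fmonoinj$ never require permuting factors.
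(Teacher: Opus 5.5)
Your proof is correct, but it is organized differently from the paper's. The paper only sketches this proposition. It calls one direction trivial, which it is, since $!_0$, $\mathrm{swap}$, $!_2$ and $!_n$ lie in the relevant $\W$. For the substantive direction it cites Parlant's thesis (Lems.~4.23, 4.27, 4.33), which handles permutations, monotone injections and surjections by separate case-specific arguments, and declares the remaining items ``similar''.

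You instead isolate one structural lemma: the set of $\alpha\in\F$ whose square \cref{eqn:Ctxt-dist-monad-Ctxt-morphism-power} commutes for all $X$ is itself a verbal category. You prove it once, using the product form of \cref{W-comm-general-characterization}, the block decomposition obtained by iterating \cref{eqn:phi-decomposition}, and naturality of $\comm$. Each item then reduces to a generation statement for $\W$ plus identifying the generator squares with affinity, symmetry of $\comm$, relevance, and $n$-relevance. This gives a uniform, self-contained argument. It covers (4), (6) and (7) with no extra work, and (7) is especially natural because $\Fsurj^{(n)}$ is defined by generators. It also explains why closure under $\star$ is exactly the right axiom. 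The paper's route buys brevity by reusing the literature.

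Two small remarks:
\begin{itemize}
\item Your worry about (2) is unfounded. The $\star$-closure computation never uses any symmetry of $\comm$, and every monotone injection $k\to n$ is already $\id_n\star\vec\alpha$ with each $\alpha_i\in\{\id_1,\,!_0\}$.
\item For $\mathrm{swap}$, the power-form square only gives symmetry of $\comm_{X,X}$. State explicitly that you pass through \cref{W-comm-general-characterization} to obtain \cref{eqn:phi-comm} for all $X,Y$, and hence commutativity.
\end{itemize}
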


\subsection{Examples}\label{subsec:examples-of-W-commutative-monads}
Here are examples of $\W$-commutative monads. We describe them via the characterizations of \cref{W-comm-mnd-characterizations}, since examples for those characterizations have been studied well.

\begin{example}[commutative monads]\label{ex:comm-mnd}
\begin{itemize}
    \item
 The $\mon M$-writer monad $\mnd W^{\mon M}$ (\cref{ex:writer-monad}) is commutative
iff the monoid $\mon M$ is commutative.
    \item The $\sr S$-multiset monad $\mnd M^{\sr S}$ (\cref{ex:MS}) is commutative 
iff the semiring $\sr S$ is commutative. In particular, the finite powerset monad $\mnd \Pfin$ is commutative, but is not affine or relevant.
    \item The $C$-exception monad $\mnd E^C$ (\cref{ex:exception-monad}) is commutative iff $|C|\leq 1$. \lipicsEnd
\end{itemize}
\end{example}
\begin{example}[affine monads]\label{ex:aff-mnd}
    The affine $\sr S$-multiset monad $\mnd A^{\sr S}$ (\cref{ex:AS}) is affine for any semiring $\sr S$. In particular, 
    the probability distribution monad $\mnd D$ and the nonempty finite powerset monad $\mnd \Pfinne$ are affine (and commutative). $\mnd D$ and $\mnd \Pfinne$ are not relevant. \lipicsEnd
\end{example}
\begin{example}[relevant monads]\label{ex:rel-mnd}
 The $\mon M$-writer monad $\mnd W^{\mon M}$ (\cref{ex:writer-monad}) is relevant iff the monoid $\mon M$ is idempotent.
 The $C$-exception monad $\mnd E^C$ (\cref{ex:exception-monad}) is relevant for any 
 set $C$.\lipicsEnd
\end{example}
\begin{example}[hyperaffine monads]\label{ex:hyperaff-mnd}
    The $C$-reader monad $\mnd R^C$ (\cref{ex:reader-monad}) is hyperaffine. \lipicsEnd
\end{example}

\section{Canonical Distributive Laws}\label{sec:dist-law}
Having introduced the notions of $\W$-operad (\cref{sec:notions-of-operad}) and $\W$-commutative monad (\cref{sec:notions-of-commutative-monad}) for each verbal category $\W$, we are now ready to present the  construction of the canonical distributive laws.
We first recall the general definition of distributive laws.

\begin{definition}[distributive law \cite{Beck-distributive-laws}]\label{def:dist-law}
    Let $\mnd S = (S,\eta^{\mnd S},\mu^{\mnd S})$ and $\mnd T = (T,\eta^{\mnd T},\mu^{\mnd T})$ be monads. A \emph{distributive law of $\mnd S$ over $\mnd T$} is a natural transformation $\delta\colon ST\to TS$ making the following four diagrams (stating the compatibility of $\delta$ with $\eta^{\mnd S}$, $\mu^{\mnd S}$, $\eta^{\mnd T}$, and $\mu^{\mnd T}$, respectively) commute.
    \[
\begin{tikzpicture}[baseline=-\the\dimexpr\fontdimen22\textfont2\relax ]
      \node(01) at (0.75,0.5) {$T$};
      \node(11) at (0,-0.5) {$ST$};
      \node(12) at (1.5,-0.5) {$TS$};
      \draw [->] (01) to node[auto,swap,labelsize] {$\eta^{\mnd S}T$} (11);
      \draw [->] (11) to node[auto,labelsize] {$\delta$} (12);
      \draw [->] (01) to node[auto,labelsize] {$T\eta^{\mnd S}$} (12);
\end{tikzpicture}\ 
\begin{tikzpicture}[baseline=-\the\dimexpr\fontdimen22\textfont2\relax ]
      \node(01) at (0,0.5) {$SST$};
      \node(02) at (1.5,0.5) {$STS$};
      \node(03) at (3,0.5) {$TSS$};
      \node(11) at (0,-0.5) {$ST$};
      \node(12) at (3,-0.5) {$TS$};
      \draw [->] (01) to node[auto,labelsize] {$S\delta$} (02);
      \draw [->] (02) to node[auto,labelsize] {$\delta S$} (03);
      \draw [->] (01) to node[auto,swap,labelsize] {$\mu^{\mnd S}T$} (11);
      \draw [->] (11) to node[auto,labelsize] {$\delta$} (12);
      \draw [->] (03) to node[auto,labelsize] {$T\mu^{\mnd S}$} (12);
\end{tikzpicture}\ 
\begin{tikzpicture}[baseline=-\the\dimexpr\fontdimen22\textfont2\relax ]
      \node(01) at (0.75,0.5) {$S$};
      \node(11) at (0,-0.5) {$ST$};
      \node(12) at (1.5,-0.5) {$TS$};
      \draw [->] (01) to node[auto,swap,labelsize] {$S\eta^{\mnd T}$} (11);
      \draw [->] (11) to node[auto,labelsize] {$\delta$} (12);
      \draw [->] (01) to node[auto,labelsize] {$\eta^{\mnd T}S$} (12);
\end{tikzpicture}\ 
\begin{tikzpicture}[baseline=-\the\dimexpr\fontdimen22\textfont2\relax ]
      \node(01) at (0,0.5) {$STT$};
      \node(02) at (1.5,0.5) {$TST$};
      \node(03) at (3,0.5) {$TTS$};
      \node(11) at (0,-0.5) {$ST$};
      \node(12) at (3,-0.5) {$TS$};
      \draw [->] (01) to node[auto,labelsize] {$\delta T$} (02);
      \draw [->] (02) to node[auto,labelsize] {$T\delta$} (03);
      \draw [->] (01) to node[auto,swap,labelsize] {$S\mu^{\mnd T}$} (11);
      \draw [->] (11) to node[auto,labelsize] {$\delta$} (12);
      \draw [->] (03) to node[auto,labelsize] {$\mu^{\mnd T}S$} (12);
\end{tikzpicture}
\]
When $\delta$ satisfies the first two axioms above (compatibility of $\delta$ with $\eta^{\mnd S}$ and $\mu^{\mnd S}$) but not necessarily the last two, we say that
$(T,\delta)$ is a \emph{monad functor from $\mnd S$ to $\mnd S$}  \cite{Street-FTM}, or that
$\delta$ is a \emph{distributive law of the monad $\mnd S$ over the endofunctor $T$} (as opposed to the \emph{monad $\mnd T$}). \lipicsEnd
\end{definition}

The following theorem presents our construction of distributive laws.
Note the structure of the result: without additional assumptions, it yields only a distributive law $\delta$ of a monad over an \emph{endofunctor $T$}. To obtain a genuine distributive law of a monad over a \emph{monad $\mnd T$}, one must furthermore impose either condition 1) or 2) in the second paragraph of \cref{thm:main-detailed}.

\begin{theorem}\label{thm:main-detailed}
    Let $\W$ be a verbal category, $\mnd O=(O,\ido,\subst)$ be a $\W$-operad, and $\mnd T=(T,\eta^{\mnd T},\mu^{\mnd T})$ be a $\W$-commutative monad. 
Let
\begin{math}
     \Mnd{\mnd O}=(\Lan_JO,\eta^{\Mnd{\mnd O}},\mu^{\Mnd{\mnd O}})
\end{math}    
denote the $\W$-operadic monad induced by $\mnd O$ (\cref{subsec:from-operad-to-monad}).
    Then the family 
    \begin{equation}\label{eqn:deltaX}
    \delta=\bigl(\delta_X\colon (\Lan_JO)T X\to T(\Lan_JO)X\bigr)_{X\in\Set},
\end{equation}
    of functions defined below (in \cref{subsec:construction-of-dist-law}) gives rise to a distributive law of the monad $\Mnd{\mnd O}$ over the \emph{endofunctor} $T$.

    Moreover, $\delta$ is a distributive law of 
 $\Mnd{\mnd O}$ over the \emph{monad} $\mnd T$---$\delta$ is compatible with $ \mu^{\mnd T}$, in particular---whenever either of the following conditions is satisfied: 1) $\mnd T$ is commutative, or 2) $O_n=0$ for all $n> 1$.
\end{theorem}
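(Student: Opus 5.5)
We define $\delta_X$ on representatives. An element of $(\Lan_JO)TX$ is a class $[p,(t_i)_{i\in n}]$ with $p\in O_n$ and $t_i\in TX$. We set
\[
\delta_X[p,\vec t]=T\bigl([p,-]\bigr)\bigl(\comm^{(n)}_X(\vec t)\bigr),
\]
where $[p,-]\colon X^n\to(\Lan_JO)X$ is the coprojection of the coend. In other words, $\delta_X$ is the map induced on the coend by the family
\[
O_n\times (TX)^n\xrightarrow{1\times\comm^{(n)}_X}O_n\times T(X^n)\xrightarrow{\str}T(O_n\times X^n)\to T(\Lan_JO)X.
\]
\textbf{Well-definedness.} We must check that this family is dinatural in $n\in\W$, i.e.\ that for $\alpha\colon m\to n$ in $\W$ the representatives $(p,\vec t\alpha)$ and $(\alpha p,\vec t)$ are sent to the same element. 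Using $\comm^{(m)}_X\circ(TX)^\alpha=T(X^\alpha)\circ\comm^{(n)}_X$, which is exactly $\W$-commutativity (\cref{def:W-comm-monad}), the first becomes $T([p,-]\circ X^\alpha)(\comm^{(n)}\vec t)$. This equals $T([\alpha p,-])(\comm^{(n)}\vec t)$ by the coend relation \cref{eqn:sim-generating}. Naturality of $\delta$ in $X$ follows from naturality of $\comm^{(n)}$.

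\textbf{Axioms for $\eta^{\Mnd{\mnd O}}$ and $\eta^{\mnd T}$.} Compatibility with $\eta^{\Mnd{\mnd O}}$ is immediate, since $\comm^{(1)}=\id$ and the unit of $\Mnd{\mnd O}$ picks $[\ido,x]$. Compatibility with $\eta^{\mnd T}$ follows from monoidality of $\eta$ (\cref{strong-monad-monoidal}): we have $\comm^{(n)}\circ(\eta_X)^n=\eta_{X^n}$, and then naturality of $\eta$ applies.

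\textbf{Axiom for $\mu^{\Mnd{\mnd O}}$.} Take $[q,([p_j,\vec t^{(j)}])_{j}]$. Going around one side gives
\[
T([\subst(q,\vec p),-])\bigl(\comm^{(m)}(\vec t^{(1)},\dots)\bigr).
\]
The other side first applies $\comm^{(m_j)}$ to each $\vec t^{(j)}$, then $\comm^{(n)}$ to the resulting elements of $T((\Lan_JO)X)$, and finally $\mu^{\Mnd{\mnd O}}$. Pushing the maps $T[p_j,-]$ through $\comm^{(n)}$ by naturality reduces this to
\[
\comm^{(n)}\circ\prod_j\comm^{(m_j)}=\comm^{(m)}
\]
up to the canonical isomorphism $\prod_j X^{m_j}\cong X^m$. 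This identity is the coherence \cref{eqn:phi-decomposition} iterated. Hence the first two axioms hold for an arbitrary $\W$ and any monad $\mnd T$, so $\delta$ is a distributive law of $\Mnd{\mnd O}$ over the endofunctor $T$.

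\textbf{Axiom for $\mu^{\mnd T}$ (expected main obstacle).} Starting from $[p,\vec s]$ with $s_i\in TTX$, we need
\[
\mu_{X^n}\circ T\comm^{(n)}\circ\comm^{(n)}_{TX}=\comm^{(n)}_X\circ(\mu_X)^n.
\]
That is, $\comm^{(n)}$ must be compatible with $\mu$, i.e.\ $\mu$ must be a monoidal transformation for $\comm$. In case 1), $\mnd T$ is commutative, so $\mu$ is monoidal (the remark after \cref{strong-monad-monoidal}) and the identity holds for every $n$. In case 2), only $n\le1$ occurs. For $n=1$, $\comm^{(1)}=\id$ and the identity is trivial. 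For $n=0$ we need $\mu_1\circ T\eta_1\circ\eta_1=\eta_1$, which holds by the unit law. One subtlety remains: the representatives of a class $[p,\vec s]$ may have different arities, but well-definedness already proved lets us evaluate on any representative. This is the step where commutativity of $\mnd T$ is genuinely needed; for $n\ge2$, the two orders of sequencing in a noncommutative $\mnd T$ break the equation.
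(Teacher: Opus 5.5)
Your proof is correct and follows the paper's plan. The steps match: the same $\delta$, extranaturality from $\W$-commutativity, and the four axioms checked on coend representatives. These use the coherence \cref{eqn:phi-decomposition} of $\psi$, the monoidality of $\eta^{\mnd T}$, and, under commutativity, the monoidality of $\mu^{\mnd T}$, with only $n\le 1$ arising in case 2). The one difference is presentational. You compute elementwise with $\tau(p,-)=T(i_p)$, so $T\kappa_n\circ\tau_{O_n,X^n}(p,-)=T[p,-]$. As a result, the strength axioms and the monoidality \cref{eqn:str-monoidal} of $\tau$ used in the paper's diagrams reduce to plain naturality of $\psi$, $\eta^{\mnd T}$ and $\mu^{\mnd T}$.
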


\begin{remark}
    When $\W$ is symmetric (see \cref{def:verbal-cat}), the $\W$-commutativity of a monad $\mnd T$ implies the ordinary commutativity of $\mnd T$ (\cref{W-comm-mnd-characterizations}(3)), and hence 
    Cond.~1) of \cref{thm:main-detailed}.
    In contrast, Cond.~2) of \cref{thm:main-detailed} can never be satisfied when the verbal category $\W$ contains $\Fmonoinj$. 
    We use Cond.~2) of \cref{thm:main-detailed} only when $\W=\Fid$ (see \cref{ex:exception-dist-law,ex:writer-dist-law}). 

    To see the necessity of the additional conditions, consider the case where $\W=\Fid$, $\mnd O$ is the terminal $\Fid$-operad (thus $\Mnd{\mnd O}$ is the list monad; see \cref{ex:Fid-operadic}), and $\mnd T$ is the list monad.
    By \cref{W-comm-mnd-characterizations}(1), $\mnd T$ is $\Fid$-commutative. However, it is known that there is no distributive law of the list monad over itself \cite[Ex.\ 4.18]{ZwartM22}. 
    \lipicsEnd
\end{remark}

As announced in \cref{sec:intro}, the additional condition 1) for $\delta$'s compatibility with $\mu^{\mnd T}$ directs our main focus to verbal categories that admit exchange (E), in particular $\W=\Fbij,\Finj,\Fsurj$, and $\F$ (see~\cref{eq:fourVerbalCatforIntersection}). 
In fact, compatibility with $\eta^{\mnd T}$ holds without such a condition; see below.

\subsection{The Construction}\label{subsec:construction-of-dist-law}

Here we define $\delta$ in \cref{eqn:deltaX}, under the situation of the first paragraph of \cref{thm:main-detailed}.
Fix a set $X$.
Since the domain $(\Lan_JO)TX$ of $\delta_X$ is the coend
\begin{math}
\textstyle\int^{\vN n\in \Ctxt} O_n\times (TX)^n
\end{math}
(see \eqref{eqn:LanJO-coend}),
to give the function $\delta_X$ is equivalent to giving a family 
\begin{equation}\label{eqn:deltabar}
\bigl(\bar \delta_{X,n}\colon O_n\times (TX)^n\to T(\Lan_JO)X\bigr)_{\vN n\in\W}
\end{equation}
of functions extranatural in $\vN n\in \W$ (see e.g.\ \cite[\S~IX.6]{MacLane-CWM}).
We define $\bar \delta_{X,n}$ as the composite of
\[
\begin{tikzpicture}[baseline=-\the\dimexpr\fontdimen22\textfont2\relax ]
      \node(1) at (0,0) {$O_n\times (TX)^{n}$};
      \node(2) at (3.5,0) {$O_n\times T(X^{n})$};
      \node(3) at (7,0) {$T(O_n\times X^{n})$};
      \node(4) at (10.5,0) {$T(\Lan_JO)X$,};
      \draw [->] (1) to node[auto,labelsize] {$1\times \comm^{(n)}_{X}$} (2);
      \draw [->] (2) to node[auto,labelsize] {$\str_{O_n,X^n}$} (3);
      \draw [->] (3) to node[auto,labelsize] {$T\kappa_{n}$} (4);
\end{tikzpicture}
\]
where 
1) 	 $\comm^{(n)}_X\colon (TX)^n\to T(X^n)$ was defined in \cref{eqn:comm-n-power}, 
2) $\tau_{O_n,X^n}$ is the left strength (\cref{def:strength}) of $T$, and 
3) $\kappa_n\colon O_n\times X^n\to (\Lan_J O)X$ is the $n$-th coprojection of the coend \eqref{eqn:LanJO-coend}.

We have to show the extranaturality of \eqref{eqn:deltabar},
    that is, that the following diagram commutes for any morphism $\alpha\colon \vN m\to \vN n$ in $\Ctxt$:
    \begin{equation}\label{eqn:dinaturality}
\begin{tikzpicture}[baseline=-\the\dimexpr\fontdimen22\textfont2\relax ]
      \node(01) at (0,0.5) {$O_m\times (TX)^n$};
      \node(02) at (3.5,0.5) {$O_n\times (TX)^n$};
      \node(11) at (0,-0.5) {$O_m\times 
      (TX)^m$};
      \node(12) at (3.5,-0.5) {$T(\Lan_JO)X$.};
      \draw [->] (01) to node[auto,labelsize] {$O\alpha\times 1$} (02);
      \draw [->] (01) to node[auto,swap,labelsize] {$1\times (TX)^\alpha$} (11);
      \draw [->] (11) to node[auto,labelsize] {$\bar\delta_{X,m}$} (12);
      \draw [->] (02) to node[auto,labelsize] {$\bar\delta_{X,n}$} (12);
\end{tikzpicture}
		\end{equation}
    This can be seen by considering the following diagram.
    \begin{equation*}
\begin{tikzpicture}[baseline=-\the\dimexpr\fontdimen22\textfont2\relax ]
      \node(01) at (0,1.5) {$O_m\times (TX)^n$};
      \node(04) at (11.7,1.5) {$O_n\times (TX)^n$};
      \node(12) at (3.9,0.5) {$O_m\times 
      T(X^n)$};
      \node(14) at (11.7,0.5) {$O_n\times T(X^n)$};
      \node(23) at (7.8,-0.5) {$T(O_m\times X^n)$};
      \node(24) at (11.7,-0.5) {$T(O_n\times X^n)$};
      \node(31) at (0,-1.5) {$O_m\times (TX)^m$};
      \node(32) at (3.9,-1.5) {$O_m\times T(X^m)$};
      \node(33) at (7.8,-1.5) {$T(O_m\times X^m)$};
      \node(34) at (11.7,-1.5) {$T(\Lan_JO)X$};
      \draw [->] (01) to node[auto,labelsize] {$O\alpha\times 1$} (04);
      \draw [->] (01) to node[xshift=-5pt,yshift=-3pt,auto,labelsize] {$1\times \comm^{(n)}_X$} (12);
      \draw [->] (01) to node[auto,near start,labelsize] {$1\times (TX)^\alpha$} (31);
      \draw [->] (04) to node[auto,swap,labelsize] {$1\times \comm^{(n)}_X$} (14);
      \draw [->] (12) to node[auto,labelsize] {$O\alpha\times 1$} (14);
      \draw [->] (12) to node[auto,swap,labelsize] {$1\times T(X^\alpha)$} (32);
      \draw [->] (12) to node[xshift=-5pt,yshift=-3pt,auto,labelsize] {$\str_{O_m,X^n}$} (23);
      \draw [->] (31) to node[auto,labelsize] {$1\times \comm^{(m)}_X$} (32);
      \draw [->] (32) to node[auto,labelsize] {$\str_{O_m,X^m}$} (33);
      \draw [->] (33) to node[auto,labelsize] {$T\kappa_m$} (34);
      \draw [->] (23) to node[auto,labelsize] {$T(O\alpha\times 1)$} (24);
      \draw [->] (23) to node[auto,swap,labelsize] {$T(1\times X^\alpha)$} (33);
      \draw [->] (14) to node[auto,swap,labelsize] {$\str_{O_n,X^n}$} (24);
      \draw [->] (24) to node[auto,swap,labelsize] {$T\kappa_n$} (34);
      \node[labelsize] at (5.8,1.1) {(bifunctoriality of $\times$)};
      \node[labelsize] at (8.5,0.1) {(naturality of $\str$)};
      \node[labelsize] at (9.5,-1) {(extranaturality of $\kappa$)};
      \node[labelsize] at (5.4,-0.5) {(naturality of $\str$)};
      \node[labelsize] at (1.95,0) {($\W$-commutativity \cref{eqn:Ctxt-dist-monad-Ctxt-morphism-power})};
\end{tikzpicture}
\end{equation*}
The defining condition \cref{eqn:Ctxt-dist-monad-Ctxt-morphism-power} of $\W$-commutativity is crucial here.

See \cref{proof:thm:main-detailed} for the rest of the proof of \cref{thm:main-detailed}.
We note that the compatibility of $\delta$ with $\eta^{\Mnd{\mnd O}}$, $\mu^{\Mnd{\mnd O}}$, and 
$\eta^{\mnd T}$ holds without Conditions 1) or 2). It is the compatibility of $\delta$ with $\mu^{\mnd T}$ where the additional assumption is needed. 

Some examples are in \cref{subsec:dist-law-ex}.
Other known distributive laws can be seen to be \emph{not} instances of \cref{thm:main-detailed}, see e.g.\ \cite[\S~9]{PirogS17}.

\begin{remark}\label{rmk:choice-of-operad}
    Let $\mnd S$ be a $\W$-operadic monad which is \emph{not} uniquely $\W$-operadic (cf.\ \cref{uniqueness-of-operads-inducing-monads}). 
    Given a $\W$-commutative monad $\mnd T$, the construction of the natural transformation $\delta\colon ST\to TS$
    presented here refers to the choice of a $\W$-operad $\mnd O$ inducing $\mnd S$.
    Hence it is \emph{a priori} possible that $\delta$ varies with this choice.
    At present, we are not aware of such an example.\lipicsEnd
\end{remark}

In contrast, we can show that the choice of $\W$ does not matter, as we now explain.
Let $\W$ and $\W'$ be verbal categories such that $\W\subseteq \W'$, $\mnd O$ be a $\W$-operad, and $\mnd T$ be a $\W'$-commutative monad.
Then $\mnd T$ is $\W$-commutative as well (\cref{rmk:commutativity-monotonicity}). 
On the other hand, we saw in 
\cref{extension-of-operad}(1) that 
$\mnd O$ induces a $\W'$-operad $\mnd O'$ by extension, and that $\mnd O$ and $\mnd O'$ induce isomorphic monads, which we identify and write as $\mnd S$.
Thus we can apply the construction at the beginning of \cref{subsec:construction-of-dist-law} to $(\W,\mnd O,\mnd T)$ as well as to $(\W',\mnd O',\mnd T)$, obtaining natural transformations $\delta$ and $\delta'$ respectively, both of type $ST\to TS$.  

\begin{proposition}\label{invariance-wrt-cob}
    In the above situation, we have $\delta=\delta'$. 
\qed
\end{proposition}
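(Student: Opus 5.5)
The plan is to compare $\delta$ and $\delta'$ on generators of the coend, after making the isomorphism $\Mnd{\mnd O'}\cong\Mnd{\mnd O}$ of \cref{extension-of-operad}(1) explicit. On $X$, this isomorphism $\theta_X\colon(\Lan_{J'}O')X\to(\Lan_JO)X$ should send
\[
\bigl[[p\in O_n,\alpha\in\W'(n,n')],\,\vec x\in X^{n'}\bigr]\longmapsto\bigl[p,\vec x\alpha\bigr].
\]
This is the canonical isomorphism $\Lan_{J'}\Lan_KO\cong\Lan_{J'K}O=\Lan_JO$ obtained from coend calculus, namely the co-Yoneda reduction $\int^{n'}\W'(n,n')\times X^{n'}\cong X^n$. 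Its inverse is $[p,\vec y]\mapsto[[p,\id_n],\vec y]$. The precise claim to prove is therefore $T\theta_X\circ\delta'_X=\delta_X\circ\theta_{TX}$ for every set $X$.

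Since $(\Lan_{J'}O')TX$ is a coend, every element has the form $\bigl[[p,\alpha],\vec t\bigr]$ with $p\in O_n$, $\alpha\colon n\to n'$ in $\W'$ and $\vec t\in(TX)^{n'}$. The generating relation gives $\bigl[[p,\alpha],\vec t\bigr]=\bigl[[p,\id_n],\vec t\alpha\bigr]$, because $[p,\alpha]=O'(\alpha)[p,\id_n]$. It therefore suffices to check the equation on elements $\bigl[[p,\id_n],\vec s\bigr]$ with $\vec s\in(TX)^n$.

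On such an element, $\delta'$ is $T\kappa'_n\circ\tau_{O'_n,X^n}\circ(1\times\comm^{(n)}_X)$ evaluated at $([p,\id_n],\vec s)$. By \cref{unique-strength}, $\tau(a,-)=T(i_a)$, so the result is $T\bigl(\vec x\mapsto[[p,\id_n],\vec x]\bigr)\bigl(\comm^{(n)}_X(\vec s)\bigr)$. Applying $T\theta_X$ yields $T\bigl(\vec x\mapsto[p,\vec x]\bigr)\bigl(\comm^{(n)}_X(\vec s)\bigr)$. Computing $\delta_X\bigl(\theta_{TX}[[p,\id_n],\vec s]\bigr)=\delta_X[p,\vec s]$ directly from the definition produces exactly the same expression. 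The key point is that the same $\comm^{(n)}_X$ appears on both sides, since $\comm^{(n)}$ depends only on $\mnd T$ and $n$, not on the verbal category.

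The main obstacle is bookkeeping rather than substance. One has to confirm that $\theta$ really is the monad isomorphism from \cref{extension-of-operad}(1), i.e.\ the identification used to state that ``$\delta=\delta'$''. One also has to justify the reduction to representatives with $\alpha=\id$: the relation used must be an instance of \cref{eqn:sim-generating} in $\W'$, with the $\W'$-action on $O'=\Lan_KO$ given by postcomposition. In particular, $\W'$-commutativity of $\mnd T$ is not needed beyond guaranteeing that $\delta'$ is well defined, which the construction in \cref{subsec:construction-of-dist-law} already provides.
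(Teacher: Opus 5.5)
Your proof is correct and is essentially the paper's argument. The paper also tests $\delta$ and $\delta'$ on the generators $\kappa'_{TX,n}\circ(\sigma_n\times 1)=\kappa_{TX,n}$, where $\sigma_n(p)=[p,\id_n]$ is the unit of $\Lan_KO$ (so these are exactly your elements $[[p,\id_n],\vec s]$), and it gets from naturality of $\str$ what you get from the explicit formula $\str(a,-)=T i_a$. Your explicit $\theta$ only spells out the identification $\Mnd{\mnd O'}\cong\Mnd{\mnd O}$, which the paper encodes in the triangle $\kappa_{Y,n}=\kappa'_{Y,n}\circ(\sigma_n\times 1)$.
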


\subsection{Examples}\label{subsec:dist-law-ex}
Here is a concrete description---in algebraic terms, see e.g.~\cite{PirogS17,RossetZHE24}---of the general construction of $\delta\colon ST\to TS$ in \cref{subsec:construction-of-dist-law}. We give it by an example. Consider an  $\mnd S$-term over $\mnd T$-terms over $X$
 \begin{equation}\label{eqn:S-term-of-T-terms}
    p\bigl(\,t_0(v,w),\,t_1(x,y,z)\,\bigr)\in STX,
 \end{equation}
where $p$ is a binary $\mnd S$-term, $t_0$ and $t_1$ are $\mnd T$-terms, and $v,w,x,y,z\in X$. It is carried by $\delta_{X}$ to the following 
 $\mnd T$-term over $\mnd S$-terms over $X$:
 \begin{equation}\label{eqn:T-term-of-S-terms}
        t_0\bigl(\ 
 t_1\bigl(\,p(v,x),\,p(v,y),\,p(v,z)\,\bigr),\ 
 t_1\bigl(\,p(w,x),\,p(w,y),\,p(w,z)\bigr)\ 
\bigr)\in TSX.
 \end{equation}
The term $t_{0}$ appears outside $t_{1}$ because, in the definition of the lax monoidal structure $\comm$ (\cref{def:comm}), the left occurrence of $T$ is pulled out first. When ${\mnd T}$ is commutative—as in most cases of interest—swapping $t_{0}$ and $t_{1}$ in~\cref{eqn:T-term-of-S-terms} (with the corresponding rearrangement of arguments) yields the same element of $TSX$.

 \begin{example}\label{ex:ring-monad}
    Let $\W=\Fbij$, $\mnd S$ be the list monad $(-)^\ast$ (whose operations we write multiplicatively), and $\mnd T$ be the abelian group monad $\mnd M^{\sr Z}$ (written additively). 
    Letting $p(x,y)=xy$, $t_0(v,w)=v+2w$, and $t_1(x,y,z)=x+3y+z$, \cref{eqn:S-term-of-T-terms} becomes $(v+2w)(x+3y+z)$ and \cref{eqn:T-term-of-S-terms} becomes $vx+3vy+vz+2wx+6wy+2wz$.
    Thus \cref{thm:main-detailed} gives rise to the ``archetypal'' distributive law $\delta$
    expressing the ``distribution of multiplication over addition''.
    As is well-known, $\delta$ makes the composite $TS$ into the ring monad \cite[\S~4(1)]{Beck-distributive-laws}.\lipicsEnd
 \end{example}

 \begin{example}\label{ex:MD-Jacobs}
    Let $\W=\Finj$, $\mnd S$ be the multiset monad $\mnd M$ (\cref{ex:Fbij-operadic}), and $\mnd T$ be the probability distribution monad $\mnd D$ (\cref{ex:aff-mnd}).
    \cref{thm:main-detailed} induces the distributive law studied in \cite{Jacobs21,KozenS24}. \lipicsEnd
 \end{example}

 \begin{example}
    Let $\W=\Fsurj$, $\mnd S$ be either of the semigroup, commutative semigroup, or nonunital semilattice monad, and $\mnd T$ be the maybe monad $\vN 1+(-)$ (\cref{ex:comm-mnd,ex:rel-mnd}). 
    \cref{thm:main-detailed} induces the distributive law $S(\vN 1+X)\to \vN 1+SX$ yielding the monoid, commutative monoid, or (unital) semilattice monad as the composite monad.\lipicsEnd
 \end{example}

 \begin{example}
    Let $\W=\F$, $\mnd S$ be any finitary monad on $\Set$, and $\mnd T$ be the $C$-reader monad $\mnd R^C$ (\cref{ex:hyperaff-mnd}). \cref{thm:main-detailed} induces the canonical distributive law $S(X^C)\to (SX)^C$.
    A related distributive law appears in \cite[Prop.~40]{de-Paiva-dialectica}.
    In fact, this construction works for any (not necessarily finitary) monad $\mnd S$ on $\Set$ \cite[Prop.~2.2]{Bunge-composite-tensor}.
    We note that our construction with $\W=\F$
coincides with \cite[Cor.~4.3(c)]{Bunge-composite-tensor}, which is a special case of \cite[Thm.~4.2]{Bunge-composite-tensor}.\lipicsEnd
 \end{example}

 \begin{example}\label{ex:IVAndTsaiEtAlCONCUR25}
    Let $\W=\Finj$, $\mnd S$ be the \emph{indexed valuation monad} $\mathbb{IV}$ of \cite{Varacca-Winskel-dist-law} (which we show is indeed $\Finj$-operadic in \cref{subsec:IV-revisited}), and $\mnd T$ be the nonempty finite powerset monad $\mnd \Pfinne$. 
    \cref{thm:main-detailed} induces the distributive law $IV \Pfinne \to \Pfinne IV$ constructed in \cite{Varacca-Winskel-dist-law} to model combination of probabilistic and nondeterministic choice.

    Letting $\mnd T$ be the probability distribution monad $\mnd D$ instead, we obtain another distributive law $\delta\colon IV D\to D IV$. This $\delta$ turns out to be closely related to a family 
    $\lambda=(\lambda_X\colon DDX\to DDX)_{X\in\Set}$ 
    of morphisms 
    introduced in~\cite{Tsai-et-al-concur} for a new semantics of MDPs.
    The family $\lambda$ does not form a distributive law of $\mnd D$ over itself since it is not natural;
    indeed, no such distributive law exists \cite{ZwartM22}. 

In summary, we observed that the general construction of \cref{thm:main-detailed} yields a distributive law $\delta\colon IV D\to D IV$, which resolves the categorical ill-behavedness of $\lambda\colon DD\to DD$ used in~\cite{Tsai-et-al-concur}. Its consequence in MDP semantics is an exciting direction of future work. The benefit of such ``indexing'' in program semantics is also observed in~\cite{LiellCockS25}. \lipicsEnd
 \end{example}

\cref{ex:exception-dist-law,ex:writer-dist-law} below exploit Cond.~2 of \cref{thm:main-detailed}.

 \begin{example}\label{ex:exception-dist-law}
    Let $\W=\Fid$, $\mnd S$ be the $C$-exception monad $\mnd E^C$ (\cref{ex:Fid-operadic}), and $\mnd T$ be an arbitrary monad on $\Set$. \cref{thm:main-detailed} yields the distributive law 
\begin{math}
     C+TX\to T(C+X)
\end{math}
    induced by the unit of $\mnd T$ \cite[\S~4(2)]{Beck-distributive-laws}.\lipicsEnd
 \end{example}

 \begin{example}\label{ex:writer-dist-law}
 Let $\W=\Fid$, $\mnd S$ be the $\mon M$-writer monad $\mnd W^{\mon M}$ (\cref{ex:Fid-operadic}), and $\mnd T$ be an arbitrary monad on $\Set$.
 \cref{thm:main-detailed} yields the 
 distributive law
\begin{math}
     M\times TX\to T(M\times X)
\end{math}    
given by the left strength $\str$ of $T$ \cite[\S~4(3)]{Beck-distributive-laws}.\lipicsEnd
 \end{example}

 The distributive laws of \cref{ex:exception-dist-law,ex:writer-dist-law} (or more precisely, their generalizations to base categories other than $\Set$) are used in \cite[Cor.~3 and Thm.~12]{Hyland-Plotkin-Power-combining-effects} to model combination of computational effects with exception and output.

\subsection{Eilenberg--Moore Liftings and Kleisli Extensions}\label{subsec:EMKl}
Let $\mnd S$ and $\mnd T$ be monads. 
A distributive law 
$\delta\colon ST\to TS$ is known to correspond to each of the following data (see e.g.\ \cite[Prop.~6]{Garner-Vietoris-monad}).
\begin{description}
    \item[Eilenberg--Moore lifting:] A \emph{lifting} $\overline{\mnd T}$
        of the monad $\mnd T$ to the Eilenberg--Moore category $\Set^{\mnd S}$ of $\mnd S$, as on the left of \cref{eqn:EM-Kl}.
        \item[Kleisli extension:] An \emph{extension} $\underline{\mnd S}$
        of the monad $\mnd S$ to the Kleisli category $\Set_{\mnd T}$ of $\mnd T$, as on the right of \cref{eqn:EM-Kl}.
        \begin{equation}\label{eqn:EM-Kl}
            \begin{tikzpicture}[baseline=-\the\dimexpr\fontdimen22\textfont2\relax ]
      \node(01) at (0,0.4) {$\Set^{\mnd S}$};
      \node(02) at (1.5,0.4) {$\Set^{\mnd S}$};
      \node(11) at (0,-0.4) {$\Set$};
      \node(12) at (1.5,-0.4) {$\Set$};
      \draw [->] (01) to node[auto,labelsize] {$\overline{\mnd T}$} (02);
      \draw [->] (01) to node[auto,swap,labelsize] {$U^{\mnd S}$} (11);
      \draw [->] (11) to node[auto,labelsize] {$\mnd T$} (12);
      \draw [->] (02) to node[auto,labelsize] {$U^{\mnd S}$} (12);
\end{tikzpicture}\qquad\qquad
\begin{tikzpicture}[baseline=-\the\dimexpr\fontdimen22\textfont2\relax ]
      \node(01) at (0,0.4) {$\Set$};
      \node(02) at (1.5,0.4) {$\Set$};
      \node(11) at (0,-0.4) {$\Set_{\mnd T}$};
      \node(12) at (1.5,-0.4) {$\Set_{\mnd T}$};
      \draw [->] (01) to node[auto,labelsize] {$\mnd S$} (02);
      \draw [->] (01) to node[auto,swap,labelsize] {$F_{\mnd T}$} (11);
      \draw [->] (11) to node[auto,labelsize] {$\underline{\mnd S}$} (12);
      \draw [->] (02) to node[auto,labelsize] {$F_{\mnd T}$} (12);
\end{tikzpicture}
        \end{equation}
\end{description}
Here we comment on these aspects. 

There are a few works \cite{Manes-Mulry-monad-compositions-1,Parlant-thesis,Dahlqvist-Parlant-Silva-layer-by-layer} that pursue goals close to ours using the above Eilenberg--Moore characterization. In the latter,
since the functor part $T$ of any monad $\mnd T=(T,\eta,\mu)$ on $\Set$ has the lax monoidal structure $(\eta_{1},\comm)$ (see \cref{strong-monad-monoidal}), any $n$-ary operation $f\colon X^n\to X$ on a set $X$ induces the $n$-ary operation 
\[
(TX)^n\xrightarrow{\comm^{(n)}_X}T(X^n)\xrightarrow{Tf}TX
\]
on $TX$. Thus the key 
question towards the existence of $\overline{\mnd T}$
is the following: does $T$ preserve \emph{equations} used in an equational presentation of the algebraic theory corresponding to $\mnd S$?

In this context, it has been noted that any commutative monad preserves \emph{linear} equations (expressible by $\Fbij$-operads) \cite[Lem.~4.3.3]{Manes-Mulry-monad-compositions-1}, any commutative and affine monad preserves \emph{non-dup} equations (expressible by $\Finj$-operads) \cite[Thm.~9]{Dahlqvist-Parlant-Silva-layer-by-layer}, and any commutative and relevant monad preserves \emph{non-drop} equations (expressible by $\Fsurj$-operads) \cite[Thm.~8]{Dahlqvist-Parlant-Silva-layer-by-layer}. 

These results in~\cite{Dahlqvist-Parlant-Silva-layer-by-layer}  use
what they call \emph{residual diagrams} (\cite[\S~4.1]{Dahlqvist-Parlant-Silva-layer-by-layer}), which look close to our \cref{eqn:Ctxt-dist-monad-Ctxt-morphism-power}. The notion of $\W$-commutative monad (\cref{def:W-comm-monad})  unifies various residual diagrams in our systematic, operadic, and presentation-independent framework parametrized in a verbal category $\W$. 
Residual diagrams are used in~\cite{Parlant-thesis} for concrete applications of  a general and abstract lifting result~\cite[Thm.~3.14]{Parlant-thesis} (which formalizes our  algebraic arguments above).

An interpretation of our results in the second, Kleisli extension characterization, is as follows. 
From now on, let $\W$ be a symmetric verbal category, $\mnd O$ be a $\W$-operad, and $\mnd T$ be a $\W$-commutative monad.
In this situation, the canonical symmetric monoidal structure of the Kleisli category $\Set_{\mnd T}$ has enough structure to accommodate the notion of \emph{$\mnd O$-algebra in $\Set_{\mnd T}$}. 
These $\mnd O$-algebras then correspond to Eilenberg--Moore algebras of the extended monad $\underline{\Mnd{\mnd O}}$ on $\Set_{\mnd T}$, corresponding to our $\delta$ in \cref{thm:main-detailed}.

We sketch a definition of $\mnd O$-algebras in $\Set_{\mnd T}$,
independently from the extended monad $\underline{\Mnd{\mnd O}}$. 
First we need:
\begin{proposition}
    Every set $X$ induces a $\W$-operad $\EEnd_{\mnd T}(X)=(\End_{\mnd T}(X),\ido,\subst)$ with
    \begin{itemize}
        \item $\End_{\mnd T}(X)_n=\Set_{\mnd T}(X^n,X)=\Set(X^n,TX)$;
        \item $\ido=\eta_X\in\Set(X,TX)$; and
        \item given $g\in \Set(X^n,TX)$ and $\vec f=\bigl(f_i\in\Set(X^{m_i},TX)\bigr)_{i\in n}$, $\subst(g,\vec f)\in \Set(X^m,TX)$ (with $m=\sum_{i\in n}m_i$) is the following composite:
        \begin{equation*}
        X^m\cong \textstyle\prod_{i\in n}X^{m_i}\xrightarrow{\prod_{i\in n}f_i}(TX)^n\xrightarrow{\comm_X^{(n)}}T(X^n)\xrightarrow{Tg}TTX\xrightarrow{\mu_X}TX.\hfill\qed
        \end{equation*}
    \end{itemize}
\end{proposition}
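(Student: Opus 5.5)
We take the standing assumptions of this subsection: $\W$ is symmetric and $\mnd T$ is $\W$-commutative, so $\mnd T$ is commutative by \cref{W-comm-mnd-characterizations}(3). We first make $\End_{\mnd T}(X)$ a functor $\W\to\Set$: for $\alpha\colon m\to n$ in $\W$ and $g\in\Set(X^m,TX)$, we set $\alpha g=g\circ X^\alpha\colon X^n\to X^m\to TX$. Since $X^{(-)}$ is contravariant, this is covariant, and functoriality is immediate. It then remains to check three things: the unit laws, associativity, and the compatibility axiom \cref{eqn:subst-compatibility}.

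For the unit and associativity laws we work in the Kleisli category $\Set_{\mnd T}$. By commutativity, and since $\eta$ is monoidal (\cref{strong-monad-monoidal}), $\Set_{\mnd T}$ carries a monoidal structure: on objects it is the cartesian product, and on Kleisli maps it is $f\otimes f'=\comm\circ(f\times f')$. Bifunctoriality of $\otimes$ is exactly the statement that $\comm$ makes $\mu$ monoidal, which holds for commutative monads. By \cref{eqn:phi-decomposition}, the $n$-fold tensor $f_0\otimes\dots\otimes f_{n-1}$ equals $\comm^{(n)}_X\circ\prod_i f_i$, independently of the bracketing. Hence $\subst(g,\vec f)$ is the Kleisli composite $g\circ_{\mnd T}(f_0\otimes\dots\otimes f_{n-1})$, where the identification $X^m\cong\prod_iX^{m_i}$ is the associativity isomorphism.

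With this description, both laws reduce to categorical facts. The unit laws follow because $\eta_X$ is the Kleisli identity and $\eta_X\otimes\dots\otimes\eta_X=\eta_{X^m}$; the latter holds because $\eta$ is a monoidal transformation. Associativity follows from associativity of Kleisli composition together with bifunctoriality of $\otimes$ (the interchange law $(g\circ_{\mnd T} h)\otimes(g'\circ_{\mnd T} h')=(g\otimes g')\circ_{\mnd T}(h\otimes h')$) and coherence for the associativity isomorphisms.

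The main obstacle is the compatibility axiom \cref{eqn:subst-compatibility}, and this is where $\W$-commutativity, rather than mere commutativity, is used. Take $\beta\colon n'\to n$ and $\alpha_i\colon m'_i\to m_i$ in $\W$, with $g\in\Set(X^{n'},TX)$ and $f_i\in\Set(X^{m'_i},TX)$; by the action defined above, $\beta g=g\circ X^\beta$ and $\alpha_i f_i=f_i\circ X^{\alpha_i}$. The two sides are
\[
\mu\circ T(g\circ X^\beta)\circ\comm^{(n)}\circ\textstyle\prod_i (f_i\circ X^{\alpha_i})
\quad\text{and}\quad
\mu\circ Tg\circ\comm^{(n')}\circ\textstyle\prod_{i'}f_{\beta(i')}\circ X^{\beta\star\vec\alpha}.
\]
Our plan is to first decompose $X^{\beta\star\vec\alpha}$, using the defining squares of \cref{def:operad-of-functions}, as
\[
X^m\cong\textstyle\prod_iX^{m_i}\xrightarrow{\prod_iX^{\alpha_i}}\textstyle\prod_iX^{m'_i}\xrightarrow{\langle\pi_{\beta(i')}\rangle_{i'}}\textstyle\prod_{i'}X^{m'_{\beta(i')}}.
\]
Naturality of reindexing then gives $\prod_{i'}f_{\beta(i')}\circ\langle\pi_{\beta(i')}\rangle=(TX)^\beta\circ\prod_if_i$. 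Finally, $\W$-commutativity \cref{eqn:Ctxt-dist-monad-Ctxt-morphism-power} for $\beta\in\W$ gives $\comm^{(n')}\circ(TX)^\beta=T(X^\beta)\circ\comm^{(n)}$, which identifies the two sides. The only care needed is tracking the chosen coproduct isomorphisms $\iota$, which we would handle by checking the composite block by block, for each $i'\in n'$.
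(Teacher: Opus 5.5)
Your proof is correct and has the same structure as the paper's brief remark after the statement. There, commutativity of $\mnd T$ (which you package as the monoidal structure on $\Set_{\mnd T}$) gives the unit and associativity laws, and $\W$-commutativity, applied only to $\beta$, gives the compatibility axiom \cref{eqn:subst-compatibility}. The paper states no further details; your decomposition of $X^{\beta\star\vec\alpha}$ via the defining squares of $\beta\star\vec\alpha$ fills them in correctly.
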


We note that the commutativity of $\mnd T$ is used to show the associativity axiom of $\EEnd_{\mnd T}(X)$, whereas the $\W$-commutativity of $\mnd T$ is used to show the compatibility axiom \cref{eqn:subst-compatibility} for $\subst$.
As usual, we define the notion of $\mnd O$-algebra via operad homomorphisms to endomorphism operads (see e.g.\ \cite[\S~6.4]{Leinster-higher-operads}).

\begin{definition}
    An \emph{$\mnd O$-algebra in $\Set_{\mnd T}$} is a pair $(X,\xi)$, where $X$ is a set and $\xi$ is a homomorphism of $\W$-operads from $\mnd O$ to $\EEnd_{\mnd T}(X)$ (meaning a natural transformation $O\to \End_{\mnd T}(X)$ compatible with $\ido$ and $\subst$).
    Similarly, define the notion of \emph{homomorphism of $\mnd O$-algebras} $(X,\xi)\to (Y,\upsilon)$ as a morphism $X\to Y$ in $\Set_{\mnd{T}}$ suitably compatible with the algebra structures $\xi$ and $\upsilon$.\lipicsEnd
 \end{definition}

 \begin{proposition}
     The category of $\mnd O$-algebras in $\Set_{\mnd T}$ and their homomorphisms is isomorphic to the Eilenberg--Moore category of the monad $\underline{\Mnd{\mnd O}}$ extended to the Kleisli category $\Set_{\mnd T}$ via the distributive law $\delta$ of \cref{thm:main-detailed}. \qed
 \end{proposition}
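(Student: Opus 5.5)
The plan is to compare both categories with a common intermediate description: algebras of the Kleisli-extended monad $\underline{\Mnd{\mnd O}}$ on $\Set_{\mnd T}$. First make that monad explicit. On objects it acts as $\Lan_JO$. A Kleisli map $f\colon X\to TY$ is sent to
\[
\Lan_JO X\xrightarrow{\Lan_JO f}\Lan_JO TY\xrightarrow{\delta_Y}T\Lan_JO Y .
\]
Its unit is $\eta^{\mnd T}\circ\eta^{\Mnd{\mnd O}}$ and its multiplication is $\eta^{\mnd T}\circ\mu^{\Mnd{\mnd O}}$. An Eilenberg--Moore algebra is then a set $X$ with a Kleisli map $a\colon \Lan_JO X\to TX$ satisfying the unit and associativity laws in $\Set_{\mnd T}$.

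Next I would identify such a map $a$ with a family of maps. By the coend formula \cref{eqn:LanJO-coend}, $a$ is the same as a family $a_n\colon O_n\times X^n\to TX$, extranatural in $n\in\W$. Currying gives a family $\xi_n\colon O_n\to\Set(X^n,TX)=\End_{\mnd T}(X)_n$. Extranaturality of $(a_n)$ says precisely that $\xi$ is natural for the $\W$-action on $\End_{\mnd T}(X)$; that action is precomposition with $X^\alpha$, which is how one checks that $\End_{\mnd T}(X)$ is a functor on $\W$ in the first place. The unit law for $a$ translates directly into $\xi_1(\ido)=\eta_X$.

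The associativity law is where the real work lies. I would evaluate both sides on a representative
\[
\bigl[\,q,\,([p_j,\vec x^{(j)}])_j\,\bigr]\in(\Lan_JO)(\Lan_JO)X .
\]
\begin{itemize}
\item On one side, the Kleisli composite $a\circ\underline{\Mnd{\mnd O}}(a)$ computes $\delta$ on $\bigl[q,(a[p_j,\vec x^{(j)}])_j\bigr]$. Unfolding the definition of $\bar\delta_{X,n}$ (via $\comm^{(n)}$ and the strength $\str$) and then applying $\mu^{\mnd T}\circ Ta$, one gets exactly $\subst^{\EEnd_{\mnd T}(X)}\bigl(\xi(q),(\xi(p_j))_j\bigr)$ evaluated at $\vec x$.
\item On the other side, $a\circ\mu^{\Mnd{\mnd O}}$ gives $\xi\bigl(\subst(q,\vec p)\bigr)(\vec x)$.
\end{itemize}
So associativity is equivalent to $\xi$ preserving $\subst$. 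The main obstacle is bookkeeping in this step: one has to check that $\mu\circ T\kappa$ composed with the strength reproduces the composite $\mu\circ Tg\circ\comm^{(n)}\circ\prod f_i$ defining $\subst$. I would handle it using naturality of $\str$ together with the fact that $\str$ followed by $\kappa$ is absorbed by the evaluation at $q$.

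Finally, homomorphisms. A morphism of Eilenberg--Moore algebras is a Kleisli map $h\colon X\to TY$ with $b\circ\underline{\Mnd{\mnd O}}(h)=h\circ a$ in $\Set_{\mnd T}$. Evaluating on each coprojection $\kappa_n$, this becomes $\mu\circ T\upsilon(p)\circ\comm^{(n)}\circ h^n=\mu\circ Th\circ\xi(p)$ for all $p\in O_n$. This is exactly the compatibility I would take as the definition of a homomorphism of $\mnd O$-algebras in $\Set_{\mnd T}$. Both bijections are the identity on underlying sets and on underlying Kleisli maps, so they preserve composition and identities. This yields the isomorphism of categories.
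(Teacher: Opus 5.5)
Your proof is correct. Evaluating $\mu^{\mnd T}\circ Ta\circ\delta_X\circ(\Lan_JO)a$ on $[q,([p_j,\vec x^{(j)}])_j]$ and using $\str_{O_n,X^n}(q,-)=T i_q$ does yield $\mu\circ T\xi_n(q)\circ\comm^{(n)}_X\circ\prod_j\xi_{m_j}(p_j)$, which is $\subst$ in $\EEnd_{\mnd T}(X)$; the unit law, extranaturality, and homomorphism condition also match as you say, and your condition is the natural reading $h\circ\xi(p)=\upsilon(p)\circ h^{\otimes n}$ in $\Set_{\mnd T}$ of the paper's ``suitably compatible''.

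The paper states this proposition without proof, and your direct unwinding through the coend description of $\Lan_JO$ is the verification its definitions of $\delta$ and of $\EEnd_{\mnd T}(X)$ are set up to make routine.
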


\section{\texorpdfstring{$\W$}{W}-Operadic Refinement of a Monad}\label{sec:monad-to-operad}

Let $\W$ be a verbal category. 
In \cref{subsec:from-operad-to-monad}, we constructed a monad $\Mnd{\mnd O}$ from any $\Ctxt$-operad $\mnd O$. 
In this section, we first construct a $\Ctxt$-operad $\Opd{\W}{\mnd S}$ from any monad $\mnd S$ (\cref{subsec:monad-to-W-operad}; cf.\ \cref{rmk:Lan-restriction-lax-monoidal}).
This provides us with a universal way of turning any monad $\mnd S$ into a $\W$-operadic one, namely $\Mnd{\Opd{\W}{\mnd S}}$, which we call the \emph{$\W$-operadic refinement} of $\mnd S$. 
As explained in \cref{sec:intro}, this construction is useful when an obstacle to the existence of a distributive law $\delta\colon ST\to TS$ of a monad $\mnd S$ over a $\W$-commutative monad $\mnd T$ is the lack of $\W$-operadicity of $\mnd S$.
We illustrate its use in \cref{subsec:IV-revisited,sec:applications}.

\subsection{From Monads to \texorpdfstring{$\W$}{W}-Operads}\label{subsec:monad-to-W-operad}
 
Let $\mnd S=(S,\eta,\mu)$ be a monad (on $\Set$).
The $\W$-operad $\Opd{\W}{\mnd S}$ induced by $\mnd S$ is defined as follows. 
The underlying functor of $\Opd{\W}{\mnd S}$  is the composite $\Ctxt\xrightarrow{J}\Set\xrightarrow{S}\Set$.
The element $\ido\in SJ\vN 1=S\vN 1$ is the one corresponding to $\eta_{\vN 1}\colon \vN 1\to S\vN 1$.
Given $n\in\NN$ and $\vec m=(m_i\in\NN)_{i\in\vN n}$, the function
\[
\subst_{n,\vec m}\colon S\vN n\times \textstyle\prod_{i\in\vN n}S\vN{m_i}\to S\vN m,
\]
where $m=\sum_{i\in\vN n}m_i$, is defined as follows.
For each $q\in S\vN n$ and $\vec p=(p_i\in S\vN{m_i})_{i\in\vN n}$, we set $\subst_{n,\vec m}(q,\vec p)\in S\vN m$ as the element corresponding to the composite
\[
\vN 1\xrightarrow{\nameof q} S\vN n
\xrightarrow{S \bar p} SS\vN m
\xrightarrow{\mu_{\vN m}} S\vN m,
\]
where $\bar p\colon \vN n\to S\vN m$ 
maps each $i\in n$ to $(S\iota_{\vec m,i})(p_i)$
($\iota_{\vec m,i}\colon m_i\to m$ is from  \cref{notation-kappa}).
It is straightforward to check that this defines a $\W$-operad $\Opd{\W}{\mnd S}$.

\begin{example}\label{ex:operads-induced-by-multiset-monad}
    Let $\sr S$ be a semiring. 
    For the $\sr S$-multiset monad $\mnd M^{\sr S}$ (\cref{ex:MS}), we have:
    \begin{itemize}
        \item $\Opd{\Fbij}{\mnd M^{\sr S}}$ is the $\Fbij$-operad $\mnd O^{\sr S_\mult}$ (see \cref{ex:Fbij-operadic}) induced by the multiplicative monoid $\sr S_{\mult}$ of $\sr S$, and
        \item $\Opd{\Finj}{\mnd M^{\sr S}}$ is the $\Finj$-operad $\mnd O^{\sr S_\multzero}$ (see \cref{ex:Finj-operadic}) induced by the multiplicative monoid with zero $\sr S_{\multzero}$ of $\sr S$.\lipicsEnd
    \end{itemize}
\end{example}

\begin{proposition}[{cf.\ \cref{rmk:Lan-restriction-lax-monoidal}}]\label{Mnd-W-Opd-adjunction}
    Let $\W$ be a verbal category. The constructions $\Mnd{-}$ (from \cref{subsec:from-operad-to-monad}) and $\Opd{\W}{-}$ form an adjunction 
\begin{equation}\label{eqn:Mnd-W-Opd-adjunction}
\begin{tikzpicture}[baseline=-\the\dimexpr\fontdimen22\textfont2\relax ]
      \node(0) at (0,0) {$\Mndcat{\Set}$};
      \node(1) at (4,0) {$\Opdcat{\W}$};
      \draw [<-,transform canvas={yshift=.4em}] (0) to node[auto,labelsize] {$\Mnd{-}$} (1);
      \draw [->,transform canvas={yshift=-.4em}] (0) to node[auto,swap,labelsize] {$\Opd{\W}{-}$} (1);
      \node[rotate=90] at (2.1,0) {$\vdash$};
\end{tikzpicture}
    \end{equation}
    between the categories $\Mndcat{\Set}$ of monads on $\Set$ and $\Opdcat{\W}$ of $\W$-operads.\qed
\end{proposition}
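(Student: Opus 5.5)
We prove the adjunction $\Mnd{-}\dashv\Opd{\W}{-}$ by exhibiting a natural bijection
\[
\Mndcat{\Set}\bigl(\Mnd{\mnd O},\mnd S\bigr)\;\cong\;\Opdcat{\W}\bigl(\mnd O,\Opd{\W}{\mnd S}\bigr).
\]
We proceed in two stages. First, we invoke the ordinary adjunction $\Lan_J\dashv[J,\Set]$ on underlying functors. Natural transformations $\theta\colon\Lan_JO\to S$ correspond bijectively to natural transformations $\hat\theta\colon O\to SJ$ on $\W$. Explicitly, $\hat\theta_n(p)=\theta_{\vN n}[p,\id_{\vN n}]$, where we write $[p,\id]$ for $[p,(i)_{i\in n}]$. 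Conversely, $\theta_X[p,\vec x]=(S\vec x)(\hat\theta_n(p))$, where we regard $\vec x$ as a map $\vN n\to X$. This is just the universal property of the coend \eqref{eqn:LanJO-coend}, combined with Yoneda. Well-definedness on $\sim$-classes is exactly the naturality of $\hat\theta$ along morphisms $\alpha$ in $\W$.

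The second stage, which is the main content, is to check that $\theta$ is a monad morphism if and only if $\hat\theta$ is a $\W$-operad morphism. We treat the two structure maps in turn.

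\textbf{Units.} We have $\theta_X\eta_X(x)=\theta_X[\ido,x]=(Sx)(\hat\theta_1(\ido))$. By naturality of $\eta^{\mnd S}$, this equals $\eta^{\mnd S}_X(x)$ for all $X$ and $x$ if and only if $\hat\theta_1(\ido)=\eta^{\mnd S}_{\vN 1}(0)$. The latter element is by definition the unit $\ido$ of $\Opd{\W}{\mnd S}$.

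\textbf{Multiplications.} Consider the element $[q,([p_j,\vec x^{(j)}])_j]$ of $(\Lan_JO)(\Lan_JO)X$.
\begin{itemize}
\item Applying $\theta\circ\mu^{\Mnd{\mnd O}}$ gives $(S\vec x)\,\hat\theta_m(\subst(q,\vec p))$.
\item Applying $\mu^{\mnd S}\circ\theta\theta$ gives $\mu^{\mnd S}_X\circ S\bigl(j\mapsto (S\vec x^{(j)})\hat\theta_{m_j}(p_j)\bigr)$ evaluated at $\hat\theta_n(q)$.
\end{itemize}
Using naturality of $\mu^{\mnd S}$, the map $j\mapsto (S\vec x^{(j)})\hat\theta_{m_j}(p_j)$ factors as $S\vec x\circ\bar p$, where $\vec x\colon\vN m\to X$ is the concatenated tuple and $\bar p(j)=(S\iota_{\vec m,j})\hat\theta_{m_j}(p_j)$. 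The second expression then rewrites to
\[
(S\vec x)\,\subst^{\Opd{\W}{\mnd S}}\bigl(\hat\theta_n q,(\hat\theta_{m_j}p_j)_j\bigr).
\]
Hence compatibility of $\theta$ with multiplication for all $X$ is equivalent to $\hat\theta$ preserving $\subst$. For the forward direction, take $X=\vN m$ and $\vec x=\id$.

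\textbf{Naturality.} Naturality of the bijection in $\mnd O$ and $\mnd S$ is inherited from the underlying adjunction, since both constructions are functorial. On morphisms, $\Mnd{-}$ acts by $\Lan_J$ and $\Opd{\W}{-}$ acts by whiskering with $J$.

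The main obstacle is the bookkeeping in the multiplication step, specifically identifying $\bar p$ with the coprojections $\iota_{\vec m,j}$ used to define $\subst$ in $\Opd{\W}{\mnd S}$. Before that, one must also verify that $\Opd{\W}{\mnd S}$ genuinely satisfies the compatibility axiom \eqref{eqn:subst-compatibility}; the statement "It is straightforward to check" leaves this unproven. This verification is a naturality argument using the defining square of $\beta\star\vec\alpha$ in \cref{def:operad-of-functions}. If the direct route gets messy, an alternative is to appeal to \cref{rmk:Lan-restriction-lax-monoidal}: a lax monoidal adjunction lifts to monoid objects, so it would suffice to check that $\Mnd{-}$ agrees with $\Mon(\Lan_J)$ and $\Opd{\W}{-}$ agrees with $\Mon([J,\Set])$.
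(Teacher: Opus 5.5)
Your proposal is correct and follows the same route as the paper's direct proof. The paper also uses the universal property of $\Lan_J$ to biject natural transformations $O\to SJ$ with $\Lan_JO\to S$, then observes that one is a $\W$-operad morphism iff the other is a monad morphism (the step it calls ``not hard''), and it likewise notes that the result is immediate from \cref{rmk:Lan-restriction-lax-monoidal}; your unit and multiplication computations (with $\bar p(j)=(S\iota_{\vec m,j})\hat\theta_{m_j}(p_j)$ and the test case $X=\vN m$, $\vec x=\id$) correctly fill in the verification the paper leaves out.
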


\subsection{Turning Any Monad Into a \texorpdfstring{$\W$}{W}-Operadic Monad}\label{subsec:MAMO}
\begin{definition}[$\W$-operadic refinement $\Rf{\W}{\mnd S}$]\label{def:W-operadic-refinement}
    Let $\W$ be a verbal category and $\mnd S$ be a monad. 
 The ($\W$-operadic) monad $\Mnd{\Opd{\W}{\mnd S}}$ is called the \emph{$\W$-operadic refinement} of $\mnd S$ and is denoted by $\Rf{\W}{\mnd S}$.
\lipicsEnd
\end{definition}

In view of \cref{Mnd-W-Opd-adjunction}, $\Rf{\W}{-}$ is the comonad on $\Mndcat{\Set}$ generated by the adjunction \eqref{eqn:Mnd-W-Opd-adjunction}. 
Hence for each $\mnd S\in \Mndcat{\Set}$, there exists a canonical monad morphism $\varepsilon_{\mnd S}\colon \Rf{\W}{\mnd S}\to \mnd S$ (the counit of \cref{eqn:Mnd-W-Opd-adjunction} at $\mnd S$) with the suitable universal property.
In this sense, $\Rf{\W}{\mnd S}$ is the \emph{universal} $\W$-operadic monad induced by $\mnd S$.

The following is an easy consequence of \cref{ex:operads-induced-by-multiset-monad,ex:Fbij-operadic,ex:Finj-operadic}.
\begin{proposition}\label{Fbij-Finj-operadic-refinement-S-multiset}
    Let $\sr S$ be a semiring. 
    \begin{bracketenumerate}
        \item The $\Fbij$-operadic refinement $\Rf{\Fbij}{\mnd M^{\sr S}}$ of the $\sr S$-multiset monad $\mnd M^{\sr S}$ (\cref{ex:MS}) is the $\sr N[\sr S_\mult]$-multiset monad $\mnd M^{\sr N[\sr S_\mult]}$.
        \item The $\Finj$-operadic refinement $\Rf{\Finj}{\mnd M^{\sr S}}$ of the $\sr S$-multiset monad $\mnd M^{\sr S}$ is the $\sr N_0[\sr S_\multzero]$-multiset monad $\mnd M^{\sr N_0[\sr S_\multzero]}$. 
        In particular, the $\Finj$-operadic refinement $\Rf{\Finj}{\mnd\Pfin}$ of the finite powerset monad $\mnd \Pfin$ is the multiset monad $\mnd M$.\qed
    \end{bracketenumerate}
\end{proposition}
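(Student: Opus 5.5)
By definition $\Rf{\W}{\mnd M^{\sr S}}=\Mnd{\Opd{\W}{\mnd M^{\sr S}}}$, so both parts amount to identifying the operad $\Opd{\W}{\mnd M^{\sr S}}$ and then invoking the already known monad it induces.

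For (1), \cref{ex:operads-induced-by-multiset-monad} gives $\Opd{\Fbij}{\mnd M^{\sr S}}\cong \mnd O^{\sr S_\mult}$. I would check this directly. An element of $M^{\sr S}n$ is a function $n\to S$ with automatically finite support, that is, an $n$-tuple in $S^n$. The action of bijections is reindexing. Tracing $\ido=\eta_{1}$ gives $1\in S$. For $\subst$, $\mu\circ S\bar p$ sends $q=(q_i)_i$ and $p_i\in S^{m_i}$ to the tuple whose entry at $\iota_{\vec m,i}(j)$ is $q_i\cdot p_{i,j}$. This is componentwise multiplication in $\sr S_\mult$, which is exactly the structure of $\mnd O^{\sr S_\mult}$. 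Then \cref{ex:Fbij-operadic} states that $\Mnd{\mnd O^{\mon M}}\cong \mnd M^{\sr N[\mon M]}$ for any monoid $\mon M$. Taking $\mon M=\sr S_\mult$ gives the claim.

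For (2), the same computation over $\Finj$ is needed. The only new point is the action of an injection $\alpha\colon m\to n$. Here $M^{\sr S}\alpha$ sums over fibres, which are singletons or empty, so it places entries along $\alpha$ and puts $0$ elsewhere. This is precisely the action defined for $\mnd O^{\mon P}$ with $\mon P=\sr S_\multzero$. The $\ido$ and $\subst$ are computed as in (1), so $\Opd{\Finj}{\mnd M^{\sr S}}\cong\mnd O^{\sr S_\multzero}$, which is the second clause of \cref{ex:operads-induced-by-multiset-monad}. Then \cref{ex:Finj-operadic} gives $\Mnd{\mnd O^{\mon P}}\cong \mnd M^{\sr N_0[\mon P]}$, completing the main statement.

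For the special case, take $\sr S=\sr 2$, so $\mnd M^{\sr 2}=\PPfin$ and $\sr 2_\multzero=(\{0,1\},1,\wedge,0)$. Its set of nonzero elements is the singleton $\{1\}$, so the underlying set of $\sr N_0[\sr 2_\multzero]$ is $M\{1\}\cong\NN$. Addition is addition of multisets. Multiplication is induced by $1\cdot1=1$, i.e.\ $a\cdot b\mapsto ab$. Hence $\sr N_0[\sr 2_\multzero]\cong\sr N$ as semirings, and $\mnd M^{\sr N}=\mnd M$ by \cref{ex:MS}(1).

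The only step needing care is the unwinding of $\subst$ through the element-correspondence $S n\cong\Set(1,Sn)$ and the coproduct injections $\iota_{\vec m,i}$. That computation is routine once written out, and the rest is citation of the earlier examples.
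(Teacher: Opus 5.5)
Your proposal is correct and follows the paper's own route. The paper obtains this proposition directly from \cref{ex:operads-induced-by-multiset-monad}, which identifies $\Opd{\Fbij}{\mnd M^{\sr S}}$ and $\Opd{\Finj}{\mnd M^{\sr S}}$ with $\mnd O^{\sr S_{\mult}}$ and $\mnd O^{\sr S_{\multzero}}$, together with \cref{ex:Fbij-operadic,ex:Finj-operadic}. You add a direct check of those operad identifications and the explicit computation $\sr N_0[\sr 2_{\multzero}]\cong\sr N$ for the special case.
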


For $\W=\Fsurj$ and $\Fbij$, $\W$-operadic refinements have the following concrete presentations.
\begin{proposition}[{cf.\ \cref{LanJO-simplification}(2) and (3)}]\label{prop:operadicRefPres}
    Let $\mnd S$ be a monad.
\begin{bracketenumerate}
 \item 
 The functor part of its $\Fsurj$-operadic refinement $\Rf{\Fsurj}{\mnd S}$ maps $X$ to $\sum_{A\in\Pfin X} SA$.
 \item 
 The functor part of its $\Fbij$-operadic refinement $\Rf{\Fbij}{\mnd S}$ maps  $X$ to
 \begin{math}
  \bigl\{\,(n,\,[p,\vec x]_{\sim_{n}})\,\big|\,n\in\NN, p\in Sn, \vec x\in X^{n}\,\bigr\}
 \end{math}, where 
 $\sim_{n}$  is given by the natural action of bijections $\alpha\colon n\mathrel{\stackrel{
           \raisebox{.5ex}{$\scriptstyle\cong\,$}}{
           \raisebox{0ex}[0ex][0ex]{$\rightarrow$}}} n$. 
\qed
\end{bracketenumerate}
\end{proposition}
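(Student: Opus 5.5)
The plan is to specialise \cref{LanJO-simplification}(2) and (3) to the particular $\W$-operad $\Opd{\W}{\mnd S}$. By \cref{def:W-operadic-refinement}, the functor part of $\Rf{\W}{\mnd S}$ is $\Lan_J$ of the underlying functor of $\Opd{\W}{\mnd S}$. That underlying functor is $SJ$, i.e.\ $n\mapsto Sn$, with $\alpha\in\W$ acting by $S\alpha$. Only the functor part matters here, so the unit and $\subst$ of $\Opd{\W}{\mnd S}$ play no role. The task is therefore to compute the coend $\int^{n\in\W}Sn\times X^n$ for $\W=\Fsurj$ and $\W=\Fbij$.

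For (2), I apply \cref{LanJO-simplification}(2) directly with $O_n=Sn$. This gives $\sum_{n\in\NN}Sn\times_{\Fbij(n,n)}X^n$. Here the equivalence identifies $(p,\vec x\alpha)$ with $((S\alpha)p,\vec x)$ for bijections $\alpha\colon n\to n$; this is exactly $\sim_n$, the natural action of bijections. An element of the sum is a pair consisting of an index $n$ and a class $[p,\vec x]_{\sim_n}$, which is the stated set. The only point to check is that bijections $n\to n'$ exist only when $n=n'$, so no identifications occur across different $n$. This is already built into the cited proposition.

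For (1), I apply \cref{LanJO-simplification}(3) with $O_n=Sn$. This gives $\sum_{A\in\Pfin X}S_{|A|}$, that is $\sum_{A}S|A|$, where $|A|$ denotes the ordinal with $|A|$ elements. To reach the stated form $\sum_A SA$, I choose for each finite $A$ a bijection $\sigma\colon |A|\cong A$ and use $S\sigma\colon S|A|\cong SA$.

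The main obstacle is that this identification must be canonical, i.e.\ independent of the choice of $\sigma$. So I would not rely only on the statement of \cref{LanJO-simplification}(3). Instead I would recall how its proof works: a class $[p,\vec x]$ with $\vec x\in X^n$ is sent to $A=\{x_0,\dots,x_{n-1}\}$. Using surjections, which are in $\Fsurj$, the element $(p,\vec x)$ is equivalent to $((S\alpha)p,\vec y)$, where $\vec x=\vec y\alpha$ and $\vec y$ is injective with image $A$. The remaining freedom is a bijection of $|A|$, which is absorbed precisely when we transport along $\vec y\colon|A|\cong A$.

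So I would define the map explicitly by
\[
[p,\vec x]\longmapsto \bigl(\{x_i\}_i,\ (S\bar{x})(p)\bigr)\in\textstyle\sum_{A}SA,
\]
where $\bar x\colon n\to A$ is the corestriction of $\vec x$ to its image. First I check that it is well defined on the generating relations \eqref{eqn:sim-generating}; this is functoriality of $S$ applied to $\bar x\circ\alpha$ for surjective $\alpha$. Then I construct the inverse $(A,q)\mapsto[(S\sigma^{-1})q,\sigma]$ and show it is independent of $\sigma$ using the bijection relations. Finally, naturality in $X$ is clear: for a function $f$, images transform as $A\mapsto f(A)$, and the element is pushed forward by $S$ applied to the induced map $A\to f(A)$.
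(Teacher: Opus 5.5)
Your proposal is correct and follows the paper's intended argument. The paper gives no separate proof; it simply specialises \cref{LanJO-simplification}(2) and (3) to the underlying functor $SJ$ of $\Opd{\W}{\mnd S}$, exactly as you do. Your explicit choice-free bijection $[p,\vec x]\mapsto(\{x_i\}_i,(S\bar x)(p))$ and its inverse add something useful: they make the identification $S|A|\cong SA$ and naturality in $X$ transparent, without depending on chosen enumerations of finite subsets.
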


The presentation in \cref{prop:operadicRefPres}(2) (for $\W=\Fbij$) is almost the same as the general presentation 
\begin{equation}\label{eq:refinementGenPres}
 \bigl\{\,(n,p,\vec x)\,\big|\,n\in\NN, p\in Sn, \vec x\in X^{n}\,\bigr\}/{\sim},
\qquad\text{where $\sim$ is induced by arrows in $\W$,}
\end{equation}
of 
\begin{math}
  \Rf{\W}{\mnd S}
=
\Mnd{\Opd{\W}{\mnd S}}
= 
 \int^{n\in \W}Sn\times X^{n}
\end{math} for general $\W$ (cf.\ \cref{subsec:from-operad-to-monad} \& \cref{def:W-operadic-refinement}). The difference is that the equivalence  $\sim_{n}$ in \cref{prop:operadicRefPres}(2) is specific to each $n$, while $\sim$ in \cref{eq:refinementGenPres} can relate $(n,p,\vec{x})\sim(n',p',\vec{x'})$ with $n\neq n'$. An operational intuition is discussed later in \cref{ex:PfinWrite}.

\subsection{The Indexed Valuation Monad Revisited}
\label{subsec:IV-revisited}

As mentioned in \cref{sec:intro}, in order to model the combination of probability and nondeterminism, Varacca and Winskel \cite{Varacca-Winskel-dist-law} consider the existence of a distributive law $V\Pfinne\to \Pfinne V$ of the valuation monad $\mnd V$ (\cref{ex:MS}(3)) over the nonempty finite powerset monad $\mnd \Pfinne$ (\cref{ex:AS}(1)).
Such a distributive law does not exist, but they obtain a distributive law $IV\Pfinne\to \Pfinne IV$ of the \emph{indexed valuation monad} $\mathbb{IV}$ over $\mnd \Pfinne$. 
They construct $\mathbb{IV}$ by removing ``problematic'' equational axioms from an equational presentation of $\mathbb{V}$. 

Since $\mnd\Pfinne$ is $\Finj$-commutative (\cref{ex:aff-mnd}), from our point of view, a natural approach is to replace $\mnd V$ with its $\Finj$-refinement $\Rf{\Finj}{\mnd V}$. 

In this case, these two approaches lead to the same result:

\begin{proposition}\label{IV-as-Finj-operadic-refinement}
    \begin{bracketenumerate}
        \item The $\Finj$-operadic refinement $\Rf{\Finj}{\mnd V}$ of the valuation monad $\mnd V$ is the indexed valuation monad $\mathbb{IV}$.
        \item  The distributive law of $\mathbb{IV}$ over $\mnd \Pfinne$ obtained in \cite[\S~4.3]{Varacca-Winskel-dist-law} coincides with the one obtained from \cref{thm:main-detailed}. \qed
    \end{bracketenumerate}
\end{proposition}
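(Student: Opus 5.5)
For part (1), I will combine \cref{Fbij-Finj-operadic-refinement-S-multiset}(2) with an explicit description of $\mathbb{IV}$. Since $\mnd V=\mnd M^{\sr R_{\geq 0}}$, that proposition gives $\Rf{\Finj}{\mnd V}\cong\mnd M^{\sr N_0[\mon P]}$ with $\mon P=(\RRnonneg,1,\cdot,0)$. It then remains to check that this monad coincides with Varacca--Winskel's $\mathbb{IV}$. I will recall their description. An element of $IVX$ is a finite indexed valuation $(x_i,r_i)_{i\in I}$, with $x_i\in X$ and $r_i\in\RRnonneg$. These are taken modulo reindexing by bijections and modulo the removal of indices with $r_i=0$. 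Equivalently, an element of $IVX$ is a finite multiset over $X\times\mathbb{R}_{>0}$. The functor acts by $f\mapsto$ applying $f$ to the $x_i$. The unit is $x\mapsto(x,1)$. The multiplication flattens $\bigl((\xi_j,s_j)\bigr)_j$, where $\xi_j=(x_{jk},r_{jk})_k$, into $(x_{jk},s_jr_{jk})_{j,k}$.

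On the other side, the coend formula \cref{eqn:LanJO-coend} applied to the $\Finj$-operad $\mnd O^{\mon P}$ gives elements $[(r_0,\dots,r_{n-1}),\vec x]$. Under the relation \cref{eqn:sim-generating}, bijections permute the pairs $(r_i,x_i)$. Injections allow a pair $(0,x)$ to be inserted or deleted, since $O^{\mon P}\alpha$ pads with $0$. I will show that every class has a unique representative with all $r_i\neq0$, up to permutation. This yields a bijection between the coend and the set of finite multisets over $X\times\mathbb R_{>0}$, and the bijection is natural in $X$. Finally, the unit $\ido=1$ and the componentwise-multiplication $\subst$ of $\mnd O^{\mon P}$ must match the unit and flattening of $\mathbb{IV}$. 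This is a direct comparison using the formula for $\mu$ in \cref{subsec:from-operad-to-monad}.

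For part (2), I will unfold $\delta_X$ from \cref{subsec:construction-of-dist-law} on a representative $[(r_i)_{i\in n},(A_i)_{i\in n}]$ with $A_i\in\Pfinne X$. For $\mnd T=\mnd\Pfinne$, the map $\comm^{(n)}_X$ sends $(A_i)_i$ to the product $\prod_iA_i$, viewed as a finite set of tuples. The strength $\str$ pairs each tuple with $(r_i)_i$. Hence
\[\delta_X[(r_i),(A_i)]=\{\,[(r_i),(a_i)_i]\mid a_i\in A_i\text{ for all }i\,\},\]
that is, the set of all indexed valuations obtained by choosing one point from each $A_i$ with the same weights. I will compare this with Varacca--Winskel's law in \cite[\S~4.3]{Varacca-Winskel-dist-law}. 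As I recall, their law is exactly this map: it sends $(A_i,r_i)_{i\in I}$ to $\{(a_i,r_i)_{i\in I}\mid a_i\in A_i\}$, i.e.\ it chooses for each index independently.

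The main obstacle is bookkeeping rather than conceptual. In part (1), one must check that the equivalence generated by $\Finj$-arrows is exactly ``permute, and add or remove zero-weight entries''. Non-bijective injections both delete and add zeros, so it needs care that no further identifications arise. I would establish this with a normal-form argument: any generating step preserves the multiset of nonzero-weight pairs, and any two representatives with the same such multiset are connected. I also need to pin down Varacca--Winskel's precise conventions. These include whether zero weights are allowed in $IVX$ and their choice of index sets, and I would adjust the identification to those conventions.
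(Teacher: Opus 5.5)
Your proposal is correct and follows essentially the paper's route. For (1), the key step is identifying $IV(X)$ with the coend $\int^{n\in\Finj}Vn\times X^n$ of $\Opd{\Finj}{\mnd V}=\mnd O^{\mon P}$, and your nonzero-weight normal form does this; for (2), you unfold $\delta_X$ via $\comm^{(n)}_X(A_i)_i=\prod_iA_i$ and match it with Varacca--Winskel's choice-function formula. Your opening appeal to \cref{Fbij-Finj-operadic-refinement-S-multiset}(2) is redundant once you compute that coend directly; the paper instead uses (1) together with that proposition to deduce \cref{IV-as-S-multiset-monad}.
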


The key to the proof of \cref{IV-as-Finj-operadic-refinement}(1) is to observe that the set
 $IV(X)$ 
of finite indexed valuations \cite[Def.~4.3]{Varacca-Winskel-dist-law} 
can be written as the coend 
\begin{math}
     IV(X)=
    \textstyle\int^{n\in \Finj}M^{\sr R_{\geq 0}}n\times X^n
\end{math}.
Props~\ref{IV-as-Finj-operadic-refinement}(1) and \ref{Fbij-Finj-operadic-refinement-S-multiset}(2) yield the following novel characterization of $\mathbb{IV}$.

\begin{corollary}\label{IV-as-S-multiset-monad}
    The indexed valuation monad $\mathbb{IV}$ is isomorphic to the $\sr N_0[(\sr R_{\geq 0})_{\multzero}]$-multiset monad $\mnd M^{\sr N_0[(\sr R_{\geq 0})_{\multzero}]}$.  \qed
\end{corollary}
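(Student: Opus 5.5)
This follows by composing the two results the paper points to; the only real content is identifying the semiring. By \cref{IV-as-Finj-operadic-refinement}(1), $\mathbb{IV}\cong\Rf{\Finj}{\mnd V}$. Since $\mnd V=\mnd M^{\sr R_{\geq 0}}$ (\cref{ex:MS}(3)), \cref{Fbij-Finj-operadic-refinement-S-multiset}(2) with $\sr S=\sr R_{\geq 0}$ gives $\Rf{\Finj}{\mnd M^{\sr R_{\geq 0}}}\cong\mnd M^{\sr N_0[(\sr R_{\geq 0})_{\multzero}]}$. Composing these isomorphisms of monads yields the claim.

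Both isomorphisms must be isomorphisms of monads, not merely of functors. The first is, because \cref{IV-as-Finj-operadic-refinement}(1) identifies $\mathbb{IV}$ as a monad with the monad $\Mnd{\Opd{\Finj}{\mnd V}}$.

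For the second, I would unwind \cref{Fbij-Finj-operadic-refinement-S-multiset}(2). By \cref{ex:operads-induced-by-multiset-monad}, $\Opd{\Finj}{\mnd M^{\sr S}}$ is the $\Finj$-operad $\mnd O^{\sr S_\multzero}$. By \cref{ex:Finj-operadic}, $\mnd O^{\mon P}$ induces $\mnd M^{\sr N_0[\mon P]}$ for any monoid with zero $\mon P$. Applying $\Mnd{-}$ to the first identification and using the second gives the required monad isomorphism.

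The main check is the elementwise description, which I would record for the reader. An element of $\int^{n\in\Finj}P^n\times X^n$ is a finite list of pairs $(p_i,x_i)$. Entries with $p_i=0$ can be deleted via injections, and entries can be reordered via bijections. So such an element is exactly a finite multiset over $(P\setminus\{0\})\times X$. This is a finitely supported function $X\to M(P\setminus\{0\})=\sr N_0[\mon P]$, i.e.\ an element of $M^{\sr N_0[\mon P]}X$. With $\mon P=(\RRnonneg,1,\cdot,0)$, these are precisely Varacca–Winskel indexed valuations, consistent with the remark after \cref{ex:Finj-operadic}.
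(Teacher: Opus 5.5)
Your proposal is correct and is essentially the paper's own argument: the corollary is obtained by composing \cref{IV-as-Finj-operadic-refinement}(1) with \cref{Fbij-Finj-operadic-refinement-S-multiset}(2) at $\sr S=\sr R_{\geq 0}$, using $\mnd V=\mnd M^{\sr R_{\geq 0}}$. Your elementwise description of $\int^{n\in\Finj}P^n\times X^n$ as finite multisets over $(P\setminus\{0\})\times X$ is a correct optional sanity check (at the level of underlying sets) and is consistent with the paper's remarks.
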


\begin{remark}
    In general, simply removing ``problematic'' axioms from an equational presentation of an algebraic theory does not give the same result as the $\W$-operadic refinement. 
    (An easy way to see this is to notice that the $\W$-operadic refinement construction is not idempotent in general.)
    In fact, the former construction is presentation-dependent, whereas our $\W$-operadic refinement construction is a universal construction independent of presentations (see \cref{Mnd-W-Opd-adjunction}). 
    We leave it for future work to see whether the similar modification of the theory done in \cite[\S~5]{Dahlqvist-Parlant-Silva-layer-by-layer} can be captured by a suitable $\W$-operadic refinement.\lipicsEnd
\end{remark}

\section{Application: Distributive Laws via Operadic Refinements}\label{sec:applications}
We have discussed examples of $\W$-operadic monads (\cref{subsec:examples-of-W-operads}), $\W$-commutative monads (\cref{subsec:examples-of-W-commutative-monads}), and canonical distributive laws between them (\cref{subsec:dist-law-ex}). Those distributive laws, in particular, can be seen as application of our theory. In this section, we discuss application of our categorical framework as a whole. 

We follow the storyline sketched in \cref{sec:intro}. Recall that
\begin{itemize}
 \item we are mainly interested in four verbal categories $\W=\Fbij,\Finj,\Fsurj$, and $\F$, with the inclusion order shown in~\cref{eq:fourVerbalCatforIntersection};
 \item and that, to get the canonical distributive law (\cref{thm:main-detailed}) to work, we need a nonempty intersection between 1) the upper set of $\mnd S$'s $\W$-operadicity and 2) the lower set of  $\mnd T$'s $\W$-commutativity.
\end{itemize}
The empty intersection of the two sets is a sign that a distributive law may not exist (although a non-canonical distributive law may exist, see e.g.~\cite[\S~9]{PirogS17}). 

  \begin{example}
 \begin{bracketenumerate}
  \item Let      $\mnd S=\mnd D$ (the probability distribution monad) and $\mnd T=\mnd \Pfin$ (the  finite powerset monad). Then $\mnd S=\mnd D$ is not $\Fbij$-operadic (by \cref{W-operadic-monad-intrinsically}(2)). In contrast, the lower set of $\mnd T$'s commutativity is seen to be $\{ \Fbij\}$ by \cref{W-comm-mnd-characterizations}. Thus this is an ``empty intersection'' situation.

 Indeed, this is the combination for which Plotkin's counterexample shows there is no distributive law~\cite{Varacca-Winskel-dist-law}. 
  \item Similarly, one can show that $(\mnd S,\mnd T)=(\mnd \Pfin,\mnd \Pfin)$ is an ``empty intersection'' situation. 
  \lipicsEnd
 \end{bracketenumerate}
 \end{example}

 In such situations with the empty intersection,
 as discussed in~\cref{sec:intro}, we consider using operadic refinement in~\cref{sec:monad-to-operad} to modify $\mnd S$.

 \begin{example}\label{ex:distr-law-via-refinement-PD}
 For the above pairs $(\mnd S,\mnd T)=(\mnd D,\mnd \Pfin),\,(\mnd \Pfin,\mnd \Pfin)$, we replace $\mnd S$ with the $\Fbij$-refinement $\Rf{\Fbij}{\mnd S}$; the refinements are intensional and indexed versions of $\mnd S$ (cf.\ \cref{prop:operadicRefPres}(2)). We obtain a canonical distributive law $\delta\colon \bigl(\Rf{\Fbij}{\mnd S}\bigr) T\to T \bigl(\Rf{\Fbij}{\mnd S}\bigr)$; \cref{thm:main-detailed} guarantees that  $\delta$ is well-defined and satisfies the required axioms.
\lipicsEnd
 \end{example}

\begin{example}\label{ex:PfinWrite}
 Let $\mnd S=\mnd \Pfin$ and $\mnd T=\mnd W^{\mon M}$ (the writer monad) with commutative $\mon M$. Then $\mnd S$ is not $\Fbij$-operadic while $\mnd T$ is only $\Fbij$-commutative in general (\cref{subsec:examples-of-W-commutative-monads}). Therefore we replace $\mnd S$ with $\Rf{\Fbij}{\mnd S}$, and obtain the following canonical distributive law.
\begin{equation}\label{eq:exPfinWrite}
\begin{tikzpicture}[baseline=-\the\dimexpr\fontdimen22\textfont2\relax ]
      \node(0) at (0,0.3) {$\Rf{\Fbij}{\mnd \Pfin} (W^{\mon M} X)$};
      \node(1) at (5,0.3) {$W^{\mon M} (\Rf{\Fbij}{\mnd\Pfin} X)$};
      \node(2) at (0,-0.3) {$\bigl(n,\,\bigl[p,(m_{i},x_{i})_{i\in n}\bigr]_{\sim_{n}}\bigr)$};
      \node(3) at (5,-0.3) {$\bigl(\,\textstyle\prod_{i\in n}m_{i},\, \bigl(n,\,\bigl[p,(x_{i})_{i\in n}\bigr]_{\sim_{n}}\bigr)\,\bigr)$};
      \draw [->] (0) to node[auto,labelsize] {$\delta_X$} (1);
      \draw [|->] (2) to (3);
\end{tikzpicture}
    \end{equation}
Here we used the presentation of $\Rf{\Fbij}{\mnd \Pfin}$ in \cref{prop:operadicRefPres}(2), with $n\in \NN$, $p\in \Pfin n$, $m_{i}\in M$ and $x_{i}\in X$. 
The operational intuition here is that 1) $n$ processes are spawned, 2) Process $i$ (for each $i\in n$) produces output $x_{i}$, writing $m_{i}$ as side effect, and 3) this $m_{i}$ is used, in the combined effect $\prod_{i\in n}m_{i}$, no matter if Process $i$'s output $x_{i}$ is used in the $\mnd \Pfin$-effect context $p\in \Pfin n$. 
See \cref{subsec:apx-Fbij-intensional} for further discussions.
\lipicsEnd
\end{example}

\section{Conclusions and Future Work}
We developed a general theory of monads and distributive laws that is parametrized by a verbal category $\W$, a notion for substructural contexts by Tronin~\cite{Tronin-operads-varieties-polylinear}. The theory is categorical, unifying previous observations e.g.\ in~\cite{Parlant-thesis,Dahlqvist-Parlant-Silva-layer-by-layer} in a presentation-independent manner. We constructed a canonical distributive law $ST\to TS$ that stands in the balance between $\W$-operadicity of $\mnd S$ and $\W$-commutativity of $\mnd T$. 

Extension to base categories other than $\Set$ is an obvious future direction. Accommodating ``modify $\mnd T$'' approaches~\cite{Jacobs-weakening-contraction,CaretteLZ23,Lindner-affine-part} is another. Application to program logics~\cite{LiellCockS25,MatacheS19} and MDP model checking~\cite{Tsai-et-al-concur} is a practically important direction.



\bibliography{mybib}

\appendix

\section{Details of \texorpdfstring{\cref{subsec:strength-monoidal-str}}{§ 4.1}}\label{apx:details-of-subsec-strength-monoidal-str}
\begin{notation}\label{apx:notation-structure-iso}
    The structure isomorphisms of the symmetric monoidal category $(\Set,1,\times)$ 
    are denoted by 
    $\ell_X\colon \vN 1\times X\to X$, $r_X\colon X\to X\times \vN 1$, $a_{X,Y,Z}\colon (X\times Y)\times Z\to X\times (Y\times Z)$, and $c_{X,Y}\colon X\times Y\to Y\times X$.\lipicsEnd
\end{notation}

Using \cref{apx:notation-structure-iso}, the axioms for left strengths $\str= \bigl(\str_{X,Y}\colon X\times TY\to T(X\times Y)\bigr)_{X,Y\in\Set}$ (\cref{def:strength}) can be written more precisely as follows; we number them for future reference.
    \begin{equation}\label{eqn:right-strength-unit}
        \begin{tikzpicture}[baseline=-\the\dimexpr\fontdimen22\textfont2\relax ]
      \node(01) at (0,0.5) {$\vN 1\times TX$};
      \node(02) at (2,0.5) {$T(\vN 1\times X)$};
      \node(11) at (1,-0.5) {$TX$};
      \draw [->] (01) to node[auto,labelsize] {$\str_{\vN 1,X}$} (02);
      \draw [->] (01) to node[auto,swap,labelsize] {$\ell_{TX}$} (11);
      \draw [->] (02) to node[auto,labelsize] {$T\ell_X$} (11);
\end{tikzpicture}
    \end{equation}\begin{equation}\label{eqn:right-strength-tensor}
        \begin{tikzpicture}[baseline=-\the\dimexpr\fontdimen22\textfont2\relax ]
      \node(01) at (0,0.5) {$(X\times Y)\times TZ$};
      \node(02) at (7,0.5) {$T\bigl((X\times Y)\times Z\bigr)$};
      \node(11) at (0,-0.5) {$X\times (Y\times TZ)$};
      \node(12) at (3.5,-0.5) {$X\times T(Y\times Z)$};
      \node(13) at (7,-0.5) {$T\bigl(X\times (Y\times Z)\bigr)$};
      \draw [->] (01) to node[auto,labelsize] {$\str_{X\times Y,Z}$} (02);
      \draw [->] (01) to node[auto,swap,labelsize] {$a_{X,Y,TZ}$} (11);
      \draw [->] (11) to node[auto,labelsize] {$1\times \str_{Y,Z}$} (12);
      \draw [->] (12) to node[auto,labelsize] {$\str_{X,Y\times Z}$} (13);
      \draw [->] (02) to node[auto,labelsize] {$Ta_{X,Y,Z}$} (13);
\end{tikzpicture}
    \end{equation}

The following proposition contains a direct proof of \cref{unique-strength}.
\begin{proposition}\label{unique-strength-apx}
    Any functor $T\colon\Set\to \Set$ has a unique left (resp.\ right) strength $\str$ (resp.\ $\str'$) and these make the following diagram commute for all sets $X$ and $Y$.
    \begin{equation}\label{eqn:symmetry-of-strength}
        \begin{tikzpicture}[baseline=-\the\dimexpr\fontdimen22\textfont2\relax ]
      \node(01) at (0,0.5) {$X\times TY$};
      \node(02) at (3,0.5) {$T(X\times Y)$};
      \node(11) at (0,-0.5) {$TY\times X$};
      \node(12) at (3,-0.5) {$T(Y\times X)$};
      \draw [->] (01) to node[auto,labelsize] {$\str_{X,Y}$} (02);
      \draw [->] (01) to node[auto,swap,labelsize] {$c_{X,TY}$} (11);
      \draw [->] (02) to node[auto,labelsize] {$Tc_{X,Y}$} (12);
      \draw [->] (11) to node[auto,labelsize] {$\str'_{Y,X}$} (12);
\end{tikzpicture}
    \end{equation}
\end{proposition}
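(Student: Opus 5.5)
We prove existence and uniqueness for the left strength; the right strength is dual. The symmetry square \eqref{eqn:symmetry-of-strength} will then follow from an explicit formula for both strengths.

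\emph{Existence.} For sets $X,Y$ and $x\in X$, let $i_x\colon Y\to X\times Y$ be the map $y\mapsto(x,y)$. Set
\[
\str_{X,Y}(x,t)=(Ti_x)(t).
\]
Naturality in $Y$ holds because $i_x\circ g=(X\times g)\circ i_x$. Naturality in $X$ holds because $i_{f(x)}=(f\times Y)\circ i_x$.

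For axiom \eqref{eqn:right-strength-unit}, note that $\ell_X\circ i_{*}=\id_X$, where $*$ is the point of $\vN 1$. Applying $T$ gives $T\ell_X\circ Ti_{*}=\id_{TX}$, as required.

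For axiom \eqref{eqn:right-strength-tensor}, fix $(x,y)\in X\times Y$ and $t\in TZ$. Going around the bottom gives $Ti_x(Ti_y(t))=T(i_x\circ i_y)(t)$. Going across the top and down gives $T(a_{X,Y,Z}\circ i_{(x,y)})(t)$. Both maps $Z\to X\times(Y\times Z)$ send $z$ to $(x,(y,z))$, so they coincide by functoriality of $T$.

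\emph{Uniqueness.} This is the main step. Let $\sigma$ be any left strength and fix $x\in X$, viewed as a map $\bar x\colon\vN 1\to X$. Naturality of $\sigma$ in its first argument gives
\[
\sigma_{X,Y}\circ(\bar x\times TY)=T(\bar x\times Y)\circ\sigma_{\vN 1,Y}.
\]
By the unit axiom, $\sigma_{\vN 1,Y}=(T\ell_Y)^{-1}\circ\ell_{TY}$. Hence
\[
\sigma_{X,Y}(x,t)=T(\bar x\times Y)\bigl((T\ell_Y)^{-1}(t)\bigr)=T\bigl((\bar x\times Y)\circ\ell_Y^{-1}\bigr)(t)=(Ti_x)(t).
\]
So $\sigma=\str$. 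This argument uses only naturality and the unit axiom; the associativity axiom is not needed for uniqueness. It works because every element $x\in X$ is a global point $\vN 1\to X$ in $\Set$.

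\emph{Right strength and symmetry.} Dually, the unique right strength is $\str'_{X,Y}(t,y)=(Tj_y)(t)$, where $j_y\colon X\to X\times Y$ is $x\mapsto(x,y)$. For the square \eqref{eqn:symmetry-of-strength}, take $(x,t)\in X\times TY$. One way round gives $Tc_{X,Y}(Ti_x(t))=T(c_{X,Y}\circ i_x)(t)$. The other gives $\str'_{Y,X}(t,x)=Tj_x(t)$, where now $j_x\colon Y\to Y\times X$. Since $c_{X,Y}\circ i_x=j_x$, the square commutes.
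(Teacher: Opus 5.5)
Your proof is correct and is essentially the paper's argument: uniqueness comes from naturality of the strength along the point $x\colon 1\to X$ together with the unit axiom \eqref{eqn:right-strength-unit}, which forces $\str_{X,Y}(x,-)=Ti_x$. You also carry out the checks the paper calls straightforward (naturality, the associativity axiom \eqref{eqn:right-strength-tensor}, and the symmetry square via $c_{X,Y}\circ i_x=j_x$), and those verifications are correct.
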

\begin{proof}
\label{proof:unique-strength}
    Let $X$ and $Y$ be sets. For each $x\in X$, the commutativity of 
    \[
        \begin{tikzpicture}[baseline=-\the\dimexpr\fontdimen22\textfont2\relax ]
    \node(t) at (1.5,1) {$TY$};
      \node(01) at (0,0) {$\vN 1\times TY$};
      \node(02) at (3,0) {$T(\vN 1\times Y)$};
      \node(11) at (0,-1) {$X\times TY$};
      \node(12) at (3,-1) {$T(X\times Y)$};
      \draw [->] (t) to node[auto,swap,labelsize] {$\ell_{TY}^{-1}$} (01);
      \draw [->] (t) to node[auto,labelsize] {$T\ell_{Y}^{-1}$} (02);
      \draw [->] (01) to node[auto,labelsize] {$\str_{\vN 1,Y}$} (02);
      \draw [->] (01) to node[auto,swap,labelsize] {$\nameof{x}\times 1$} (11);
      \draw [->] (02) to node[auto,labelsize] {$T(\nameof x\times 1)$} (12);
      \draw [->] (11) to node[auto,labelsize] {$\str_{X,Y}$} (12);
      \node[labelsize] at (1.5,0.5) {\cref{eqn:right-strength-unit}};
      \node[labelsize] at (1.5,-0.5) {(naturality of $\str$)};
\end{tikzpicture}
    \]
    implies that we are forced to define, for each $(x,p)\in X\times TY$, $\str_{X,Y}(x,p)\in T(X\times Y)$ as $ (Ti_x)p$, where $i_x\colon Y\to X\times Y$ is the composite 
    \[
    Y\xrightarrow{\ell_Y^{-1}}\vN 1\times Y\xrightarrow{\nameof x\times 1}X\times Y.
    \]
    The rest of the proof is straightforward.
\end{proof}

\begin{proposition}\label{apx:monad-on-Set-is-strong}
    Any monad $\mnd T=(T,\eta,\mu)$ on $\Set$ is strong:
    the unique left strength $\str$ on the functor $T$ given in \cref{unique-strength}
    makes the following diagrams commute for all sets $X$ and $Y$.
\begin{equation}\label{eqn:right-strength-eta}
        \begin{tikzpicture}[baseline=-\the\dimexpr\fontdimen22\textfont2\relax ]
      \node(01) at (0,-0.5) {$X\times TY$};
      \node(02) at (3,-0.5) {$T(X\times Y)$};
      \node(11) at (1.5,0.5) {$X\times Y$};
      \draw [->] (01) to node[auto,labelsize] {$\str_{X,Y}$} (02);
      \draw [<-] (01) to node[auto,labelsize] {$1\times \eta_Y$} (11);
      \draw [<-] (02) to node[auto,swap,labelsize] {$\eta_{X\times Y}$} (11);
\end{tikzpicture}
    \end{equation} 
    \begin{equation}\label{eqn:right-strength-mu}
        \begin{tikzpicture}[baseline=-\the\dimexpr\fontdimen22\textfont2\relax ]
      \node(01) at (0,-0.5) {$X\times TY$};
      \node(02) at (6,-0.5) {$T(X\times Y)$};
      \node(11) at (0,0.5) {$X\times TTY$};
      \node(12) at (3,0.5) {$T(X\times TY)$};
      \node(13) at (6,0.5) {$TT(X\times Y)$};
      \draw [->] (01) to node[auto,labelsize] {$\str_{X,Y}$} (02);
      \draw [<-] (01) to node[auto,labelsize] {$1\times \mu_Y$} (11);
      \draw [->] (11) to node[auto,labelsize] {$\str_{X,TY}$} (12);
      \draw [->] (12) to node[auto,labelsize] {$T\str_{X,Y}$} (13);
      \draw [<-] (02) to node[auto,swap,labelsize] {$\mu_{X\times Y}$} (13);
\end{tikzpicture}
    \end{equation}
    Similar diagrams involving the unique right strength $\str'$ also commute.\qed
\end{proposition}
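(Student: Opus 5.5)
We verify the two diagrams \eqref{eqn:right-strength-eta} and \eqref{eqn:right-strength-mu} elementwise, using the explicit formula from \cref{unique-strength}: for $x\in X$, $\str_{X,Y}(x,-)=Ti_x\colon TY\to T(X\times Y)$, where $i_x\colon Y\to X\times Y$ sends $y\mapsto (x,y)$. The idea is that, once $x$ is fixed, both diagrams reduce to naturality of $\eta$ and $\mu$ with respect to the single function $i_x$.

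For \eqref{eqn:right-strength-eta}, take $(x,y)\in X\times Y$. The lower path gives $\str_{X,Y}(x,\eta_Y y)=(Ti_x)(\eta_Y y)$. By naturality of $\eta$ this equals $\eta_{X\times Y}(i_x y)=\eta_{X\times Y}(x,y)$, which is the other path.

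For \eqref{eqn:right-strength-mu}, take $(x,P)\in X\times TTY$. The top path first gives $\str_{X,TY}(x,P)=(T i^{TY}_x)(P)$, where $i^{TY}_x\colon TY\to X\times TY$. Applying $T\str_{X,Y}$ then gives $T(\str_{X,Y}\circ i^{TY}_x)(P)$. The key observation is that $\str_{X,Y}\circ i^{TY}_x=\str_{X,Y}(x,-)=Ti_x$. Hence the top path yields $\mu_{X\times Y}\bigl((TTi_x)P\bigr)$. By naturality of $\mu$ this is $(Ti_x)(\mu_Y P)=\str_{X,Y}(x,\mu_Y P)$, which is the bottom path.

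For the right strength $\str'$ we do not repeat the argument. Instead we transport it via \eqref{eqn:symmetry-of-strength} of \cref{unique-strength-apx}: $\str'_{Y,X}=Tc_{X,Y}\circ\str_{X,Y}\circ c_{X,TY}^{-1}$. Naturality of $\eta$ and $\mu$ with respect to $c$ then converts each diagram for $\str$ into the corresponding one for $\str'$. Alternatively, one can simply rerun the elementwise argument with $i'_y\colon X\to X\times Y$, $x\mapsto(x,y)$.

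No step is a genuine obstacle. The only point needing care is the identification $\str_{X,Y}\circ i^{TY}_x=Ti_x$ in the $\mu$-diagram, which holds by the very definition of $\str$ from \cref{unique-strength}.
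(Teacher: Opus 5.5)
Your proof is correct. Fixing $x$ reduces both diagrams to naturality of $\eta$ and $\mu$ along $i_x$, with the identity $\str_{X,Y}\circ i^{TY}_x=Ti_x$ doing the work in the $\mu$-case, and transporting along \eqref{eqn:symmetry-of-strength} (which also uses naturality of $c$ and $c_{Y,X}\circ c_{X,Y}=1$) handles $\str'$. The paper states this proposition without proof, as routine, and your elementwise argument is the standard verification it leaves implicit, in the same spirit as its proof of \cref{unique-strength-apx}.
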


\section{Commutative, Affine, Relevant, and Hyperaffine Monads}
\label{apx:comm-aff-rel}
Here we give several equivalent definitions of the classes of commutative, affine, relevant, and hyperaffine monads. 
First we recall the definition of the family of morphisms $\comm$ (\cref{def:comm}) and its dual (cf.\ \cref{rmk:comm-comm'}).

\begin{definition}[lax monoidal structures $\comm$ and $\comm'$ \cite{Kock-monads-on-SMCC}]\label{apx-def:comm}
    Let $\mnd T=(T,\eta,\mu)$ be a monad, and $\str$ and $\str'$ be the unique left and right strengths on $T$ as in \cref{unique-strength}. 
    Define families 
    \begin{align*}
        \comm&=\bigl(\comm_{X,Y}\colon TX\times TY\to T(X\times Y)\bigr)_{X,Y\in\Set}\quad\text{and}\\
        \comm'&=\bigl(\comm'_{X,Y}\colon TX\times TY\to T(X\times Y)\bigr)_{X,Y\in\Set}
    \end{align*}
    of morphisms as follows.
    \[
\comm_{X,Y}\colon TX\times TY\xrightarrow{\str'_{X,TY}}T(X\times TY)\xrightarrow{T\str_{X,Y}}TT(X\times Y)\xrightarrow{\mu_{X\times Y}}T(X\times Y)
\]
\[
\comm'_{X,Y}\colon TX\times TY\xrightarrow{\str_{TX,Y}}T(TX\times Y)\xrightarrow{T\str'_{X,Y}}TT(X\times Y)\xrightarrow{\mu_{X\times Y}}T(X\times Y)\lipicsEnd
\]
\end{definition}

\begin{remark}\label{rmk:two-monoidal-str}
    By \cref{strong-monad-monoidal} and its dual, the functor part $T$ of a monad $\mnd T=(T,\eta,\mu)$ admits two lax monoidal structures $(\eta_1,\comm)$ and $(\eta_1,\comm')$.\lipicsEnd
\end{remark}

\subsection{Affine Monads}
\begin{proposition}\label{affine-monad-charcterization}
For a monad $\mnd T=(T,\eta,\mu)$, the following conditions are equivalent. 
\begin{bracketenumerate}
    \item The following diagram commutes for all sets $X$.
    \begin{equation}\label{eqn:affine-weakening}
\begin{tikzpicture}[baseline=-\the\dimexpr\fontdimen22\textfont2\relax ]
      \node(01) at (0,0.5) {$TX$};
      \node(02) at (2,0.5) {$TX$};
      \node(11) at (0,-0.5) {$\vN 1$};
      \node(12) at (2,-0.5) {$T\vN 1$};
      \draw [->] (01) to node[auto,labelsize] {$1$} (02);
      \draw [->] (01) to node[auto,swap,labelsize] {$!_{TX}$} (11);
      \draw [->] (11) to node[auto,labelsize] {$\eta_{\vN 1}$} (12);
      \draw [->] (02) to node[auto,labelsize] {$T!_X$} (12);
\end{tikzpicture}
    \end{equation}
    \item The following diagram commutes for all sets $X$ and $Y$. 
    \begin{equation}\label{eqn:affine-monad-1}
\begin{tikzpicture}[baseline=-\the\dimexpr\fontdimen22\textfont2\relax ]
      \node(01) at (0,0.5) {$TX\times TY$};
      \node(11) at (0,-0.5) {$T(X\times Y)$};
      \node(12) at (3.5,-0.5) {$TX\times TY$};
      \draw [->] (01) to node[auto,labelsize] {$1$} (12);
      \draw [->] (01) to node[auto,swap,labelsize] {$\comm_{X,Y}$} (11);
      \draw [->] (11) to node[auto,swap,labelsize] {$\langle T\pi_0,T\pi_1\rangle$} (12);
\end{tikzpicture}
    \end{equation}
    \item The functor part $T$ preserves the terminal object $\vN 1$.
\end{bracketenumerate}
\end{proposition}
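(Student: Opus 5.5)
I would prove the cycle of implications (1)$\Rightarrow$(3)$\Rightarrow$(2)$\Rightarrow$(1), working directly in $\Set$ with the explicit description of the strengths from \cref{unique-strength}: $\str_{X,Y}(x,p)=(Ti_x)p$, and dually $\str'_{X,Y}(p,y)=(Ti'_y)p$ with $i'_y(x)=(x,y)$.

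For (1)$\Rightarrow$(3), take $X=\vN 1$ in \cref{eqn:affine-weakening}. Since $T!_{\vN 1}=\id$, the diagram says $\id_{T\vN 1}=\eta_{\vN 1}\circ !_{T\vN 1}$. Hence $T\vN 1$ has at most one element. It has at least one, namely $\eta_{\vN 1}(*)$, so $T\vN 1\cong\vN 1$. For (3)$\Rightarrow$(1), if $T\vN 1$ is a singleton, then any two maps $TX\to T\vN 1$ agree, so \cref{eqn:affine-weakening} commutes trivially.

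For (3)$\Rightarrow$(2), I compute $T\pi_0\circ\comm_{X,Y}$ by unfolding $\comm_{X,Y}=\mu\circ T\str\circ\str'$ and using naturality of $\mu$, $\str$ and $\str'$:
\[
T\pi_0\circ\mu_{X\times Y}\circ T\str_{X,Y}\circ\str'_{X,TY}=\mu_X\circ T\bigl(T\pi_0\circ\str_{X,Y}\bigr)\circ\str'_{X,TY}.
\]
By the explicit formula, $T\pi_0\circ\str_{X,Y}(x,q)=T(\pi_0\circ i_x)(q)=T(\mathrm{const}_x)(q)$. Now $\mathrm{const}_x$ factors as $Y\xrightarrow{!}\vN 1\xrightarrow{x}X$. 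By (3), $T!_Y(q)$ is the unique element $\eta_{\vN 1}(*)$ of $T\vN 1$, so $T(\mathrm{const}_x)(q)=T\nameof x(\eta_{\vN 1}(*))=\eta_X(x)$ by naturality of $\eta$. Thus $T\pi_0\circ\str_{X,Y}=\eta_X\circ\pi_0$ as maps $X\times TY\to TX$.

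Substituting this back, the composite becomes $\mu_X\circ T\eta_X\circ T\pi_0\circ\str'_{X,TY}$. This equals $T\pi_0\circ\str'_{X,TY}$ by the monad unit law, and that in turn is the first projection $TX\times TY\to TX$, since $\pi_0\circ i'_q=\id_X$. The second component $T\pi_1\circ\comm_{X,Y}=\pi_1$ is the analogous computation. Here I first use $T\pi_1\circ\str_{X,Y}=\pi_1$, which holds by $\pi_1\circ i_x=\id$. Then I use that $T\pi_1\circ\str'_{X,TY}(p,q)=T(\mathrm{const}_q)(p)=\eta_{TY}(q)$, which is again (3) plus naturality of $\eta$. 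The unit law $\mu\circ\eta T=\id$ finishes it.

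For (2)$\Rightarrow$(3), take $X=Y=\vN 1$. Then \cref{eqn:affine-monad-1} says $\langle T\pi_0,T\pi_1\rangle\circ\comm_{\vN 1,\vN 1}=\id$ on $T\vN 1\times T\vN 1$. Since $\vN 1\times\vN 1\cong\vN 1$ and $\pi_0=\pi_1$ under this identification, the left-hand side lands in the diagonal. So every pair $(a,b)$ has $a=b$, i.e.\ $T\vN 1$ has at most one element, and $\eta_{\vN 1}$ shows it is nonempty.

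The main obstacle is the bookkeeping in (3)$\Rightarrow$(2): making sure the strength formulas give exactly the constant maps, so that affinity turns them into units. Everything else is immediate.
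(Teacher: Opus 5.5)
Your proof is correct. For (1)$\Leftrightarrow$(3) it agrees with the paper. The paper phrases (1)$\Rightarrow$(3) as ``$\eta_{\vN 1}$ is a split epimorphism, and a split monomorphism since its domain is terminal''; your element count is the same argument.

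The real difference is (2)$\Leftrightarrow$(3). The paper does not prove it; it cites \cite[Thm.~2.1]{Kock-bilinearity}, a diagrammatic result about strong monads that is not specific to $\Set$. You instead give a self-contained element-level verification that relies on two features of $\Set$:
\begin{itemize}
\item the explicit formula $\str_{X,Y}(x,q)=(Ti_x)q$, which comes from the uniqueness of strengths;
\item the fact that when $T\vN 1$ is a singleton, every $T!_Y(q)$ equals $\eta_{\vN 1}(*)$.
\end{itemize}
Together these give the key identities $T\pi_0\circ\str_{X,Y}=\eta_X\circ\pi_0$ and $T\pi_1\circ\str'_{X,TY}=\eta_{TY}\circ\pi_1$. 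The unit law then collapses both components of $\langle T\pi_0,T\pi_1\rangle\circ\comm_{X,Y}$ to projections. Your specialization $X=Y=\vN 1$ for (2)$\Rightarrow$(3) is a clean direct argument: the identity factors through a map whose image lies in the diagonal of $T\vN 1\times T\vN 1$.

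What your route buys is independence from the literature and a transparent view of where affinity enters: discarding an effectful argument becomes the unit. The citation buys brevity and validity beyond $\Set$.

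Two cosmetic points. First, you announce a cycle (1)$\Rightarrow$(3)$\Rightarrow$(2)$\Rightarrow$(1), but you actually prove (1)$\Leftrightarrow$(3) and (3)$\Leftrightarrow$(2); that is equally sufficient. Second, in the computation of $T\pi_0\circ\comm_{X,Y}$ you only need naturality of $\mu$ and functoriality of $T$, not naturality of the strengths.
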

\begin{proof}
    The equivalence of (2) and (3) is proved in \cite[Thm.~2.1]{Kock-bilinearity}.
    Assuming (3), $T\vN 1$ is terminal and hence \cref{eqn:affine-weakening} trivially commutes. 
    Conversely, if \cref{eqn:affine-weakening} commutes for all sets $X$, then in particular it commutes when $X=\vN 1$. 
    This shows that $\eta_{\vN 1}$ is a split epimorphism, whereas it is also a split monomorphism since its domain is the terminal object. Hence we have $T\vN 1\cong\vN 1$. 
\end{proof}

Note that \cref{eqn:affine-weakening} is the instance of \cref{eqn:Ctxt-dist-monad-Ctxt-morphism-power} or \cref{eqn:Ctxt-dist-monad-Ctxt-morphism-prod} with $\alpha={!_0}\colon 0\to 1$.

\begin{definition}[affine monad \cite{Kock-bilinearity}]\label{def:affine}
    We say that a monad is \emph{affine} if it satisfies the (equivalent) conditions of \cref{affine-monad-charcterization}.\lipicsEnd
\end{definition}

    For any monad $\mnd T=(T,\eta,\mu)$, there exists a \emph{cofree} affine monad $\mnd T^\aff = (T^\aff,\eta',\mu')$ over $\mnd T$, called the \emph{affine part} of $\mnd T$ \cite{Lindner-affine-part}
    (cf.\ \cite[Part~C]{Lawvere-functorial-semantics-TAC}).
    For each set $X$, we define the set $T^\aff X$ by the pullback
\[\begin{tikzpicture}[baseline=-\the\dimexpr\fontdimen22\textfont2\relax ]
      \node(01) at (0,0.5) {$T^\aff X$};
      \node(02) at (2,0.5) {$TX$};
      \node(11) at (0,-0.5) {$\vN 1$};
      \node(12) at (2,-0.5) {$T\vN 1$};
      \draw [->] (01) to node[auto,labelsize] {$\iota_X$} (02);
      \draw [->] (01) to (11);
      \draw [->] (11) to node[auto,labelsize] {$\eta_{\vN 1}$} (12);
      \draw [->] (02) to node[auto,labelsize] {$T!_X$} (12);
\end{tikzpicture}\]
    in $\Set$. 
    The rest of the monad structure of $\mnd T^\aff$ is determined by the requirement that $\iota\colon \mnd T^\aff\to \mnd T$ become a monad morphism; see \cite{Lindner-affine-part} for details.
    Note that each $\iota_X$ is an injection, as it is the pullback of an injection $\eta_{\vN 1}$.
    Thus $\mnd T^\aff$ is a submonad of $\mnd T$.
    Clearly $\iota\colon \mnd T^\aff\to \mnd T$ is an isomorphism iff $\mnd T$ is affine.

    For any semiring $\sr S$, the affine part $(\mnd M^{\sr S})^\aff$ of the $\sr S$-multiset monad $\mnd M^{\sr S}$ (\cref{ex:MS}) 
    is the affine $\sr S$-multiset monad $\mnd A^{\sr S}$ (\cref{ex:AS}), as already mentioned.

    It is easy to see that the affine part $\mnd T^\aff$ of a monad $\mnd T=(T,\eta,\mu)$ is isomorphic to the identity monad on $\Set$ iff the natural transformation $\eta$ is \emph{cartesian}, in the sense that each of its naturality squares is a pullback square in $\Set$. 
    Thus for any set $C$ (resp.\ any monoid $\mon M$), the affine part of the $C$-exception monad $\mnd E^C$ (resp.\ the $\mon M$-writer monad $\mnd W^{\mon M}$) is isomorphic to the identity monad on $\Set$.

\subsection{Commutative Monads}
Recall the two lax monoidal structures $(\eta_{\vN 1},\comm)$ and $(\eta_1,\comm')$ on the functor part $T$ of a monad $\mnd T=(T,\eta,\mu)$ (\cref{rmk:two-monoidal-str}).
\begin{proposition}\label{commutative-monad-characterization}
The following conditions on a monad $\mnd T=(T,\eta,\mu)$ are equivalent.
\begin{bracketenumerate}
    \item The two lax monoidal structures $(\eta_{\vN 1},\comm)$ and $(\eta_1,\comm')$ on $T$ coincide.
    That is, the following diagram commutes for all sets $X$ and $Y$.
    \begin{equation}\label{eqn:commutativity}
\begin{tikzpicture}[baseline=-\the\dimexpr\fontdimen22\textfont2\relax ]
      \node(0) at (0,0) {$TX\times TY$};
      \node(01) at (2,0.75) {$T(X\times TY)$};
      \node(02) at (5,0.75) {$TT(X\times Y)$};
      \node(11) at (2,-0.75) {$T(TX\times Y)$};
      \node(12) at (5,-0.75) {$TT(X\times Y)$};
      \node(1) at (7,0) {$T(X\times Y)$};
      \draw [->] (0) to node[auto,labelsize] {$\tau'_{X,TY}$} (01);
      \draw [->] (01) to node[auto,labelsize] {$T\tau_{X,Y}$} (02);
      \draw [->] (02) to node[auto,labelsize] {$\mu_{X\times Y}$} (1);
      \draw [->] (0) to node[auto,swap,labelsize] {$\tau_{TX,Y}$} (11);
      \draw [->] (11) to node[auto,labelsize] {$T\tau'_{X,Y}$} (12);
      \draw [->] (12) to node[auto,swap,labelsize] {$\mu_{X\times Y}$} (1);
\end{tikzpicture}
    \end{equation}
    \item The lax monoidal functor $(T,\eta_{\vN 1},\comm)$ is symmetric. That is, the following diagram commutes for all sets $X$ and $Y$.
    \begin{equation}\label{eqn:phi-comm}
\begin{tikzpicture}[baseline=-\the\dimexpr\fontdimen22\textfont2\relax]
      \node(01) at (0,0.5) {$TX\times TY$};
      \node(02) at (3,0.5) {$T(X\times Y)$};
      \node(11) at (0,-0.5) {$TY\times TX$};
      \node(12) at (3,-0.5) {$T(Y\times X)$};
      \draw [->] (01) to node[auto,labelsize] {$\comm_{X,Y}$} (02);
      \draw [->] (01) to node[auto,swap,labelsize] {$c_{TX,TY}$} (11);
      \draw [->] (11) to node[auto,labelsize] {$\comm_{Y,X}$} (12);
      \draw [->] (02) to node[auto,labelsize] {$Tc_{X,Y}$} (12);
\end{tikzpicture}
    \end{equation}
    \item The multiplication $\mu$ is a monoidal natural transformation with respect to $(\eta_1,\comm)$. That is, the following diagram commutes for all sets $X$ and $Y$.
    \begin{equation}\label{eqn:mu-monoidal}
        \begin{tikzpicture}[baseline=-\the\dimexpr\fontdimen22\textfont2\relax ]
      \node(01) at (0,0.5) {$TTX\times TTY$};
      \node(02) at (3,0.5) {$T(TX\times TY)$};
      \node(03) at (6,0.5) {$TT(X\times Y)$};
      \node(11) at (0,-0.5) {$TX\times TY$};
      \node(12) at (6,-0.5) {$T(X\times Y)$};
      \draw [->] (01) to node[auto,labelsize] {$\comm_{TX,TY}$} (02);
      \draw [->] (02) to node[auto,labelsize] {$T\comm_{X,Y}$} (03);
      \draw [->] (03) to node[auto,labelsize] {$\mu_{X\times Y}$} (12);
      \draw [->] (01) to node[auto,swap,labelsize] {$\mu_X\times \mu_Y$} (11);
      \draw [->] (11) to node[auto,labelsize] {$\comm_{X,Y}$} (12);
\end{tikzpicture}
    \end{equation}
\end{bracketenumerate}
\end{proposition}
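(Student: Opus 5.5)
The plan is to prove the cycle of implications (1)$\Leftrightarrow$(2) and (1)$\Leftrightarrow$(3). Throughout we use \cref{unique-strength-apx}, in particular the symmetry relation \cref{eqn:symmetry-of-strength} between $\str$ and $\str'$, together with the strength axioms of \cref{apx:monad-on-Set-is-strong}. For (1)$\Leftrightarrow$(2), the first step is to compute $Tc_{X,Y}\circ\comm_{X,Y}\circ c_{TY,TX}$ and show it equals $\comm'_{Y,X}$. Concretely, \cref{eqn:symmetry-of-strength} lets us rewrite $\str'_{X,TY}$ as $Tc\circ\str_{TY,X}\circ c$ and $\str_{X,Y}$ as $Tc\circ\str'_{Y,X}\circ c$. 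Naturality of $\mu$ then lets us push the symmetries through, and the involutivity $c\circ c=1$ makes the inner ones cancel. The result is the identity
\[
Tc_{X,Y}\circ\comm_{X,Y}=\comm'_{Y,X}\circ c_{TX,TY}.
\]
With this identity in hand, \cref{eqn:phi-comm} says exactly that $\comm_{Y,X}=\comm'_{Y,X}$ for all $X,Y$, which is (1).

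For (1)$\Rightarrow$(3), I would follow Kock's argument. We expand $\comm_{TX,TY}$ as $\mu\circ T\str\circ\str'$ and expand the inner $T\comm_{X,Y}$ similarly. Using the $\mu$-axiom \cref{eqn:right-strength-mu} for $\str$ and its analogue for $\str'$, together with the associativity of $\mu$, we rewrite both composites in \cref{eqn:mu-monoidal} as a composite $TTX\times TTY\to T^4(X\times Y)\to T(X\times Y)$. These two composites differ only in the order in which a ``$\str'$ on $TX$-level'' and a ``$\str$ on $TY$-level'' are applied. This is precisely the place where (1) is invoked, namely at the middle layer $TX\times TY$. Concretely, one side contains $\mu\circ T\str\circ\str'$ applied at $(TX,TY)$ and the other contains $\mu\circ T\str'\circ\str$, and (1) identifies them.

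For (3)$\Rightarrow$(1), we precompose \cref{eqn:mu-monoidal} with $\eta_{TX}\times T\eta_Y\colon TX\times TY\to TTX\times TTY$. The bottom route then gives $\comm_{X,Y}$, since $\mu\eta=1=\mu T\eta$. For the top route, we use \cref{eqn:right-strength-eta} and naturality to simplify $\comm_{TX,TY}\circ(\eta\times T\eta)$ to $T\str_{TX,Y}$-type terms. I expect this route to reduce to $\mu\circ T\str'_{X,Y}\circ\str_{TX,Y}=\comm'_{X,Y}$. If that bookkeeping instead produces $\comm$ again, I would switch to precomposing with $T\eta_X\times\eta_{TY}$ instead.

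The main obstacle is the middle step (1)$\Rightarrow$(3): the diagram chase with four layers of $T$, where one must carefully locate the single square needing (1). To manage it, I would work elementwise. In $\Set$ we have $\str(x,q)=T(i_x)q$, so every map involved is of the form ``bind'', which makes each rewriting step routine.
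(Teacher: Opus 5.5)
Your route is the paper's: (1)$\Leftrightarrow$(2) via the symmetry relation between $\tau$ and $\tau'$, and (1)$\Leftrightarrow$(3), which the paper simply cites from Kock and you sketch directly. Your identity $Tc_{X,Y}\circ\psi_{X,Y}=\psi'_{Y,X}\circ c_{TX,TY}$ is correct, and so is your (3)$\Rightarrow$(1): precomposing with $\eta_{TX}\times T\eta_Y$ turns the top route into $\mu\circ T\tau'_{X,Y}\circ\tau_{TX,Y}=\psi'_{X,Y}$. The fallback $T\eta_X\times\eta_{TY}$ would only give the tautology $\psi=\psi$, so drop it.

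The step in (1)$\Rightarrow$(3) that actually uses (1) is misidentified, and as literally stated it would not close the argument.

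Expand both sides as you propose. Each becomes $\nu\circ TTT\tau_{X,Y}\circ T\theta\circ\tau'_{TX,TTY}$, where $\nu\colon T^4\to T$ is the iterated multiplication. The map $\theta\colon TX\times TTY\to TT(X\times TY)$ is $T\tau'_{X,TY}\circ\tau_{TX,TY}$ on the left-hand side and $T\tau_{X,TY}\circ\tau'_{X,TTY}$ on the right-hand side. Using associativity and naturality of $\mu$, reassociate $\nu$ so that $\mu_{X\times TY}$ is applied right after $\theta$. The two sides then differ exactly by $\psi'_{X,TY}$ versus $\psi_{X,TY}$. So the instance of (1) you need is at $(X,TY)$, on $TX\times TTY$, not at $(TX,TY)$.

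Replacing $\psi_{TX,TY}$ by $\psi'_{TX,TY}$ on the left, as your ``concretely'' sentence suggests, only swaps the outer layers $A\in TTX$ and $B\in TTY$. That is not the swap separating the two sides: what is needed is to move the inner $TX$-computation past $B$. Two applications of (1) at $(X,Y)$ also suffice, first to $(a,\mu B)$ and then to $(a,b)$ under the binds. With this correction the argument goes through.
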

\begin{proof}
    The equivalence of (1) and (2) follows from \cref{eqn:symmetry-of-strength}.
    The equivalence of (1) and (3) is proved in \cite[Prop.~1.5]{Kock-bilinearity}.
\end{proof}

Note that \cref{eqn:phi-comm} is the instance of \cref{eqn:Ctxt-dist-monad-Ctxt-morphism-prod} with $\alpha={\mathrm{swap}}\colon 2\to 2$.

\begin{definition}[commutative monad {\cite{Kock-monads-on-SMCC}}]\label{def:commutative-monad}
    We say that a monad is \emph{commutative} if it satisfies the conditions of \cref{commutative-monad-characterization}. \lipicsEnd
\end{definition}

See \cite[Thm.~4.8]{kock-double-dualization} and \cite[Thm.~9.2]{Kock-comm-mnd-distribution} for further equivalent conditions characterizing the commutativity of a monad.

\begin{example}
    Let $\mon M$ be a monoid. For the $\mon M$-writer monad $\mnd W^{\mon M}$ (see \cref{ex:writer-monad}),
    the top composite in \cref{eqn:commutativity} maps $(m,x,n,y)\in M\times X\times M\times Y$ to $(mn,x,y)\in M\times X\times Y$, whereas the bottom composite in \cref{eqn:commutativity} maps $(m,x,n,y)$ to $(nm,x,y)$. 
    Thus $\mnd W^{\mon M}$ is commutative as a monad iff $\mon M$ is commutative as a monoid.\lipicsEnd
\end{example}

\begin{example}
    Let $\sr S$ be a semiring. For the $\sr S$-multiset monad $\mnd M^{\sr S}$ (see \cref{ex:MS}), the top (resp.\ bottom) composite in \cref{eqn:commutativity} maps $(p,q)\in M^{\sr S}X\times M^{\sr S}Y$ to $\bigl((x,y)\mapsto p(x)q(y)\bigr)\in M^{\sr S}(X\times Y)$ (resp.\ $\bigl((x,y)\mapsto q(y)p(x)\bigr)\in M^{\sr S}(X\times Y)$).
    Thus $\mnd M^{\sr S}$ is commutative as a monad iff $\sr S$ is commutative as a semiring.\lipicsEnd
\end{example}

\begin{example}
    Given a set $C$, the $C$-exception monad $\mnd E^C$ (see \cref{ex:exception-monad}) is commutative iff $|C|\leq 1$.\lipicsEnd
\end{example}

\begin{example}
    Any submonad of a commutative monad on $\Set$ is commutative. In particular, the affine part of a commutative monad on $\Set$ is commutative.
    Thus both the probability distribution monad $\mnd D$ and the nonempty finite powerset monad $\PPfinne$ are affine and commutative.\lipicsEnd
\end{example}

\subsection{Relevant Monads}
\begin{proposition}\label{relevant-monad-characterization}
For a monad $\mnd T=(T,\eta,\mu)$, the following conditions are equivalent. 
\begin{bracketenumerate}
    \item The following diagram commutes for all sets $X$.
        \begin{equation}\label{eqn:relevant-monad-characterization}
\begin{tikzpicture}[baseline=-\the\dimexpr\fontdimen22\textfont2\relax ]
      \node(01) at (0,0.5) {$TX$};
      \node(02) at (3,0.5) {$TX$};
      \node(11) at (0,-0.5) {$TX\times TX$};
      \node(12) at (3,-0.5) {$T(X\times X)$};
      \draw [->] (01) to node[auto,labelsize] {$1$} (02);
      \draw [->] (01) to node[auto,swap,labelsize] {$\Delta_{TX}$} (11);
      \draw [->] (11) to node[auto,labelsize] {$\comm_{X,X}$} (12);
      \draw [->] (02) to node[auto,labelsize] {$T\Delta_X$} (12);
\end{tikzpicture}
    \end{equation}
    \item The following diagram commutes for all sets $X$ and $Y$.
    \begin{equation}\label{eqn:relevant-monad}
\begin{tikzpicture}[baseline=-\the\dimexpr\fontdimen22\textfont2\relax ]
      \node(01) at (0,0.5) {$T(X\times Y)$};
      \node(11) at (0,-0.5) {$TX\times TY$};
      \node(12) at (3,-0.5) {$T(X\times Y)$};
      \draw [->] (01) to node[auto,labelsize] {$1$} (12);
      \draw [->] (01) to node[auto,swap,labelsize] {$\langle T\pi_0,T\pi_1\rangle$} (11);
      \draw [->] (11) to node[auto,swap,labelsize] {$\comm_{X,Y}$} (12);
\end{tikzpicture}
    \end{equation}
    \item (2) with $X=Y$.
\end{bracketenumerate}
\end{proposition}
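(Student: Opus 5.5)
The plan is to prove (1)$\Leftrightarrow$(3) and (2)$\Leftrightarrow$(3). The direction (2)$\Rightarrow$(3) is immediate, since (3) is a special case of (2).

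For (3)$\Rightarrow$(1), precompose the instance $X=Y$ of \cref{eqn:relevant-monad} with $T\Delta_X\colon TX\to T(X\times X)$. We have $\langle T\pi_0,T\pi_1\rangle\circ T\Delta_X=\langle T(\pi_0\Delta_X),T(\pi_1\Delta_X)\rangle=\langle 1,1\rangle=\Delta_{TX}$. So (3) gives $\comm_{X,X}\circ\Delta_{TX}=T\Delta_X$, which is exactly \cref{eqn:relevant-monad-characterization}.

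The main step is (1)$\Rightarrow$(2). Write $Z=X\times Y$ and apply (1) to $Z$, which gives $\comm_{Z,Z}\circ\Delta_{TZ}=T\Delta_Z$. Then post-compose with $T(\pi_0\times\pi_1)\colon T(Z\times Z)\to T(X\times Y)$.
\begin{itemize}
\item On the right-hand side, $(\pi_0\times\pi_1)\circ\Delta_Z=\id_Z$, so the composite is $\id_{TZ}$.
\item On the left-hand side, naturality of $\comm$ (which holds because $\str$, $\str'$ and $\mu$ are natural) gives
\[T(\pi_0\times\pi_1)\circ\comm_{Z,Z}=\comm_{X,Y}\circ(T\pi_0\times T\pi_1).\]
\item Since $(T\pi_0\times T\pi_1)\circ\Delta_{TZ}=\langle T\pi_0,T\pi_1\rangle$, the left-hand composite becomes $\comm_{X,Y}\circ\langle T\pi_0,T\pi_1\rangle$.
\end{itemize}
Equating the two sides yields \cref{eqn:relevant-monad}.

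The only point needing care is the naturality of $\comm_{X,Y}$ in both variables. It follows from \cref{strong-monad-monoidal}, or directly from naturality of the unique strengths in \cref{unique-strength} and of $\mu$. I expect no real obstacle beyond bookkeeping these naturality squares.
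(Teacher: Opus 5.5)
Your proof is correct and is essentially the argument behind the paper's proof. The paper simply cites Kock's Prop.~2.2 for (1)$\Leftrightarrow$(2) and remarks that the direction back to (1) uses only the instance (3); that is exactly your precomposition with $T\Delta_X$. Your (1)$\Rightarrow$(2) step applies (1) at $X\times Y$ and postcomposes with $T(\pi_0\times\pi_1)$, using only naturality of $\comm$ and not commutativity of $\mnd T$, so it supplies the details the paper leaves to that reference.
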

\begin{proof}
    The equivalence of (1) and (2) is stated and proved in {\cite[Prop.~2.2]{Kock-bilinearity}}; the proof of (2) $\Rightarrow$ (1) there actually uses the special case (3) of (2) only.
\end{proof}

Note that \cref{eqn:relevant-monad-characterization} is the instance of \cref{eqn:Ctxt-dist-monad-Ctxt-morphism-power} or \cref{eqn:Ctxt-dist-monad-Ctxt-morphism-prod} with $\alpha={!_2}\colon 2\to 1$.

\begin{definition}[{relevant monad \cite{Jacobs-weakening-contraction}}]
    We say that a monad is \emph{relevant} if it satisfies the conditions of \cref{relevant-monad-characterization}.\lipicsEnd
\end{definition}

See \cref{ex:rel-mnd} for examples of relevant monads.

\subsection{Hyperaffine Monads}
\begin{proposition}\label{hyperaffine-monad-characterization}
    Let $\mnd T=(T,\eta,\mu)$ be a monad. Then the following conditions are equivalent. 
    \begin{bracketenumerate}
        \item $\mnd T$ is affine and relevant.
        \item $\mnd T$ is commutative, affine, and relevant.
        \item The functor part $T$ preserves finite products.
        \item The functor part $T$ preserves powers by $\vN 0$ and $\vN 2$.
    \end{bracketenumerate}
\end{proposition}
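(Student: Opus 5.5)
We prove (1)$\Rightarrow$(2)$\Rightarrow$(3)$\Rightarrow$(4)$\Rightarrow$(1). Two of these are immediate: (2)$\Rightarrow$(1) is trivial, and (3)$\Rightarrow$(4) is trivial. For (4)$\Rightarrow$(1) we use the earlier characterizations. Preservation of the power by $\vN 0$ means $T\vN 1\cong\vN 1$, which is affinity by \cref{affine-monad-charcterization}(3). Preservation of the power by $\vN 2$ means that $\langle T\pi_0,T\pi_1\rangle\colon T(X\times X)\to TX\times TX$ is a bijection.

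Affinity, via \cref{eqn:affine-monad-1} with $X=Y$, says that $\comm_{X,X}$ is a section of $\langle T\pi_0,T\pi_1\rangle$. Since the latter is bijective, $\comm_{X,X}$ is its inverse. Hence \cref{eqn:relevant-monad} holds with $X=Y$, which is condition (3) of \cref{relevant-monad-characterization}, so $\mnd T$ is relevant.

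For (2)$\Rightarrow$(3) we need $\langle T\pi_0,T\pi_1\rangle\colon T(X\times Y)\to TX\times TY$ to be bijective for all $X,Y$, together with $T\vN 1\cong\vN 1$. The latter is affinity. For the former, \cref{eqn:affine-monad-1} and \cref{eqn:relevant-monad} say exactly that $\comm_{X,Y}$ is a two-sided inverse of $\langle T\pi_0,T\pi_1\rangle$. Preservation of binary products and of the terminal object gives preservation of all finite products.

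The real content is (1)$\Rightarrow$(2), namely that affine plus relevant forces commutativity. The plan is to show that, under (1), $\comm_{X,Y}$ is the inverse of $\langle T\pi_0,T\pi_1\rangle$; by the symmetric argument (\cref{rmk:two-monoidal-str}) the same holds for $\comm'_{X,Y}$. Then $\comm=\comm'$, because both are inverse to the same map, and this is commutativity by \cref{commutative-monad-characterization}(1).

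Of the two identities needed, \cref{eqn:affine-monad-1} comes from affinity and \cref{eqn:relevant-monad} from relevance. The subtle point is that relevance was defined via $\comm$, and we must check that the $\comm'$-version of \cref{eqn:relevant-monad} also holds. This is the step I expect to be the main obstacle.

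To handle it, I would start from \cref{eqn:relevant-monad-characterization} for $\comm$ and transport it to $\comm'$ along the symmetry $c$. By \cref{eqn:symmetry-of-strength}, $\comm'_{X,X}\circ\Delta_{TX}$ equals $Tc_{X,X}\circ\comm_{X,X}\circ c_{TX,TX}\circ\Delta_{TX}$. Since $c_{TX,TX}\circ\Delta_{TX}=\Delta_{TX}$, relevance for $\comm$ gives $Tc_{X,X}\circ T\Delta_X=T\Delta_X$. So the $\comm'$-version of \cref{eqn:relevant-monad-characterization} holds, and Kock's equivalence \cref{relevant-monad-characterization}, dualized, yields the $\comm'$-version of \cref{eqn:relevant-monad}.

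The affine identity \cref{eqn:affine-monad-1} for $\comm'$ follows similarly from the symmetric (dual) form of Kock's theorem. Alternatively, by symmetry one can verify directly that $T\pi_0\circ\comm'=\pi_0$ using $T!\circ$ and affinity.
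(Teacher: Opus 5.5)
Your proof is correct, and your (4)$\Rightarrow$(1) is exactly the paper's argument. Affinity makes $\comm_{X,X}$ a section of the bijection $\langle T\pi_0,T\pi_1\rangle$, hence its inverse, and that is relevance in the form of \cref{relevant-monad-characterization}(3).

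The difference lies in (1)$\Leftrightarrow$(2)$\Leftrightarrow$(3). The paper does not prove these but cites Kock's Thm.~2.5 in \cite{Kock-bilinearity}. You instead derive them from the earlier characterizations. Your key observation is that under affinity and relevance, both lax structures $\comm$ and $\comm'$ are two-sided inverses of the same comparison map $\langle T\pi_0,T\pi_1\rangle$, and therefore coincide. Your route is self-contained, modulo the Kock results already packaged in \cref{affine-monad-charcterization,relevant-monad-characterization}. It also shows transparently why affine plus relevant forces commutativity. The paper's citation is shorter but outsources exactly that implication.

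Two simplifications are available. First, your step (2)$\Rightarrow$(3) never uses commutativity, so it is really (1)$\Rightarrow$(3). Second, the step you flag as the main obstacle is unnecessary. Once $\comm_{X,Y}$ is a two-sided inverse of $\langle T\pi_0,T\pi_1\rangle$, that map is bijective, so any section of it equals its inverse. And $\comm'_{X,Y}$ is a section by the affinity identity alone.

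In any case, both identities for $\comm'$ transfer immediately from those for $\comm$ via $\comm'_{X,Y}=Tc_{Y,X}\circ\comm_{Y,X}\circ c_{TX,TY}$. This follows from \cref{eqn:symmetry-of-strength} and naturality of $\mu$. Your computation with $\Delta_{TX}$ is the case $X=Y$ of this identity, so no dualized form of Kock's theorems is needed.
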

\begin{proof}
    The equivalence of (1--3) follows from \cite[Thm.~2.5]{Kock-bilinearity}.
    Trivially (3) implies (4). To see that (4) implies (1), observe that the preservation of powers by $\vN 0$ is equivalent to the affinity of $\mnd T$ (see \cref{affine-monad-charcterization}(3)), and under the affinity (see \cref{affine-monad-charcterization}(2)), preservation of powers by $\vN 2$ is equivalent to \cref{relevant-monad-characterization}(3), which is the relevance of $\mnd T$.
\end{proof}

\begin{definition}[hyperaffine monad \cite{Johnstone-collapsed-toposes,Garner-cartesian-closed-varieties-1}]
    We say that a monad is \emph{hyperaffine} if it satisfies the conditions of \cref{hyperaffine-monad-characterization}.\lipicsEnd
\end{definition}

\begin{remark}
    Hyperaffine monads are called \emph{cartesian closed} monads in \cite[Def.~2.7]{Kock-bilinearity}, since the Eilenberg--Moore category of a hyperaffine monad is cartesian closed \cite[Thm.~3.2]{Kock-bilinearity}. 
    However, there are non-hyperaffine monads whose Eilenberg--Moore categories are cartesian closed (such as $\mnd W^{\mon M}$ of \cref{ex:writer-monad} with non-trivial $\mon M$).
    See \cite[Thm.~1.2]{Johnstone-collapsed-toposes} and \cite[Thm.~5.5]{Garner-cartesian-closed-varieties-1} for characterizations of monads whose Eilenberg--Moore categories are cartesian closed.\lipicsEnd
\end{remark}

For any set $C$, the functor part $(-)^C\colon \Set\to \Set$ of the $C$-reader monad $\mnd R^C$ (\cref{ex:reader-monad}) preserves finite products (in fact, all limits). Thus $\mnd R^C$ is a hyperaffine monad (\cref{ex:hyperaff-mnd}; see also \cite[Lem.~4.2]{Johnstone-collapsed-toposes}).
When $|C|=2$, the Eilenberg--Moore algebras of $\mnd R^C$ are known as the \emph{rectangular bands} \cite[p.~25]{Clifford-Preston-semigroup-1}. 
See \cite[Prop.~4.2]{Garner-cartesian-closed-varieties-1} for a complete classification of hyperaffine monads.

\section{\texorpdfstring{$\Rf{\Fbij}{\mnd S}$}{Rf(S)} as an Intensional Refinement of \texorpdfstring{$\mnd S$}{S}}
\label{subsec:apx-Fbij-intensional}
Here, we continue the discussion in \cref{ex:PfinWrite}, and argue that 
  $\Rf{\Fbij}{\mnd S}$ is an intensional refinement of $\mnd S$ that is sensitive to the number of spawned processes.

The operational intuition of~\cref{eq:exPfinWrite} is as follows. An element 
\begin{math}
  \bigl(n,\,\bigl[p,(m_{i},x_{i})_{i\in n}\bigr]_{\sim_{n}}\bigr)
\end{math}
represents 1) spawning $n$ processes, 2) letting  Process $i$ (for each $i\in n$) conduct computation with $\mnd W^{\mon M}$-effect, 3) observing Process $i$ to write $m_{i}\in M$ and produce output $x_{i}\in X$, and 4) organizing their output $x_{0},\dotsc,x_{n-1}$ in the $\mnd \Pfin$-context $p=\{i_{0}, \dotsc, i_{k-1}\}\in \Pfin n$ (hence $i_{0}, \dotsc, i_{k-1}\in n$), finally obtaining $\{x_{i_{0}}, \dotsc, x_{i_{k-1}}\}$ as an output. 

Some remarks are in order.
\begin{itemize}
 \item In~\cref{eq:exPfinWrite}, we distinguish different numbers of processes, since $\sim_{n}$ is specific to $n$. This is not the case \emph{without} the $\Fbij$-refinement of $\mnd \Pfin$---in general, $\sim$ 
in~\cref{eq:refinementGenPres}
can relate different numbers of processes. 
 \item The $\mnd W^{\mon M}$-effect $m_{i}\in M$ generated by Process $i$ is never thrown away, even if Process $i$'s output $x_{i}$ is not used in the $\mnd \Pfin$-effect context  $p\in \Pfin n$ (that is, even if $i\not\in p$). This is seen in the occurrence of $\prod_{i\in n}m_{i}$ in~\cref{eq:exPfinWrite}. 
\end{itemize}
In this way, the $\Fbij$-refinement  $\Rf{\Fbij}{\mnd S}$ of $\mnd S$ can be seen as an intensional refinement of the computational effect modeled by $\mnd S$, where the refined effect is conscious about the number of spawned processes.

\section{Proofs}
\label{sec:apx-proof}

\subsection{Proof of \texorpdfstring{\cref{LanJO-simplification}}{Prop. 16}}
    We only show \cref{LanJO-simplification}(3). This is a consequence of the fact that for each $X\in \Set$, we have 
    \begin{equation}\label{eqn:J-left-multiadj}
    \Set(Jn,X)\cong \textstyle\sum_{S\in\Pfin X}\Fsurj(n,|S|)
    \end{equation}
    naturally in $n\in \Fsurj$. (This shows in particular that $J\colon \Fsurj\to \Set$ is a \emph{left multi-adjoint}.)
    To see that \cref{eqn:J-left-multiadj} implies \cref{LanJO-simplification}(3),
    recall that $(\Lan_JO)X$ is the weighted colimit $\Set(J-,X)\star O$ of $O\colon \Fsurj\to \Set$ with respect to the weight $\Set(J-,X)\colon \Fsurj^\op\to \Set$ \cite[(4.18)]{Kelly-BCECT}, and proceed as follows:
    \begin{flalign*}
        &&\Set(J-,X)\star O&\cong\bigl(\textstyle \sum_{S\in\Pfin X}\Fsurj(-,|S|)\bigr) \star O&\text{(by \cref{eqn:J-left-multiadj})}\\
        &&&\cong \textstyle \sum_{S\in\Pfin X}\bigl(\Fsurj(-,|S|) \star O\bigr)&\text{(by \cite[(3.23)]{Kelly-BCECT})}\\
        &&&\cong \textstyle \sum_{S\in\Pfin X}O_{|S|}.&\text{(by \cite[(3.10)]{Kelly-BCECT})}
    \end{flalign*}\qed

\subsection{Proof of \texorpdfstring{\cref{uniqueness-of-operads-inducing-monads}}{Prop. 17}}
\label{proof:uniqueness-of-operads-inducing-monads}
    First suppose that $\Ctxt$ is either $\Fbij$, $\Fsurj$, or $\F$.
    In these cases, 
    the functor $\Lan_J\colon [\Ctxt,\Set]\to[\Set,\Set]$ is faithful and restricts to a fully faithful functor $[\Ctxt,\Set]_{\iso}\to[\Set,\Set]_\iso$ between the core groupoids. (The \emph{core groupoid} $\cat C_\iso$ of a category $\cat C$ is the wide subcategory of $\cat C$ consisting of all isomorphisms in $\cat C$.) 
    These claims follow from \cite[Appendice]{Joyal-analytic-functor} 
    when $\Ctxt=\Fbij$, and from 
    \cite[Thm.~2.2]{Szawiel-Zawadowski-monads-of-regular-theories} when $\Ctxt=\Fsurj$;
    of course, when $\Ctxt=\F$, $J\colon \F\to \Set$ is fully faithful and hence so is $\Lan_J\colon [\F,\Set]\to[\Set,\Set]$.
    The first statement follows from this and the fact that $\Lan_J$, being the left adjoint of the lax monoidal adjunction \cref{eqn:LanJ-JSet-adjunction}, is strong monoidal \cite{Kelly-doctrinal}.
    To see this, suppose that we have $\Ctxt$-operads $\mnd O$ and $\mnd O'$ and an isomorphism $\phi\colon \Mnd{\mnd O}\to \Mnd{\mnd O'}$ of monads.
    Then there exists a unique isomorphism $\chi\colon O\to O'$ inducing $\phi\colon \Lan_JO\to \Lan_JO'$.
    Since $\Lan_J$ is strong monoidal, this $\chi$ is an isomorphism $\chi\colon\mnd O\to \mnd O'$ of $\Ctxt$-operads.
    
    Next suppose that $\Ctxt$ is either $\Fmonoinj$ or $\Finj$. 
    For any monoid $\mon M=(M,1,\cdot)$, we construct a $\Ctxt$-operad $\mnd O^{\mon M}$ as follows. Its underlying functor $O^{\mon M}\colon \Ctxt\to \Set$ is determined by
    \[
    O^{\mon M}_n=\begin{cases}
        \vN 0 &\text{if $n=0$,}\\
        M     & \text{if $n=1$, and}\\
        \vN 1 &\text{otherwise.}
    \end{cases}
    \]
    We set $\ido=1\in O^{\mon M}_1$. The morphisms $\subst_{n,\vec m}\colon O^{\mon M}_n\times \prod_{i\in\vN n} O^{\mon M}_{m_i}\to O^{\mon M}_m$ are almost always trivial; the only non-trivial case is where $n=1$ and $m_0=1$, and in this case we set it to be the multiplication $\cdot\colon M\times M\to M$ of $\mon M$. It is straightforward to see that this defines a $\Ctxt$-operad.
    Now, the functor  $\Lan_JO^{\mon M}\colon \Set\to \Set$ is given by 
    \[
    (\Lan_JO^{\mon M})X=
    \begin{cases}
        \vN 0 &\text{if $X$ is empty, and}\\
        \vN 1 &\text{otherwise;}
    \end{cases}
    \]
    This determines the monad $\Mnd{\mnd O^{\mon M}}$ uniquely.
    Thus $\Mnd{\mnd O^{\mon M}}$ does not depend on $\mon M$.
    This also shows that $\Lan_J\colon [\Ctxt,\Set]\to[\Ctxt,\Set]$ is not faithful when $\Ctxt$ is $\Fmonoinj$ or $\Finj$.\footnote{This shows that the claim in \cite[\S~5]{Curien-operads-clones} that $\Lan_J\colon[\Ctxt,\Set]\to[\Set,\Set]$ is faithful when $\Ctxt$ is any of the six categories in \cref{eqn:Ctxt-logically}, is incorrect.}

    Finally, see \cite{Leinster-are-operads-algebraic-theories} for a pair of non-isomorphic $\Fid$-operads inducing the same monad on $\Set$. \qed

\begin{remark}
    From a logical perspective, one may view the $\Finj$-operad $\mnd O^{\mon M}$ constructed in the above proof as showing the incompleteness (with respect to $\Set$-models) of ``equational logic without contraction'', as we now informally explain; a formal treatment is left for future work.
    
    Elements of $O^{\mon M}_1=M$ can be regarded as terms (in context) $x_0\vdash m\cdot x_0$ with $m\in M$, and if $m$ and $n$ are distinct elements of $M$, then the terms $x_0\vdash m\cdot x_0$ and $x_0\vdash n\cdot x_0$ represent distinct elements of $O^{\mon M}_1=M$.
    That is, we do not have 
    \begin{equation}\label{eqn:OM-unary-terms}
        x_0\vdash m\cdot x_0= n\cdot x_0
    \end{equation}
    unless $m=n$.    
    (The term $x_0\vdash x_0$ is identified with the term $x_0\vdash 1\cdot x_0$, where $1$ is the unit element of $\mon M$.)
    
    On the other hand, elements of $O^{\mon M}_2$ are represented by terms of the form $x_0,x_1\vdash m\cdot x_0$ or $x_0,x_1\vdash m\cdot x_1$ with $m\in M$. Now, since $O^{\mon M}_2$ is a singleton, these terms are all identified up to provable equality. 
    Hence we have e.g.
    \begin{equation}\label{eqn:OM-binary-terms}
        x_0,x_1\vdash m\cdot x_0=n\cdot x_1, \quad
        x_0,x_1\vdash m\cdot x_0=n\cdot x_0, \quad
        \text{and even}
        \quad
        x_0,x_1\vdash x_0= x_1.
    \end{equation}

    However, in $\Set$ (or more generally in any cartesian monoidal category), \eqref{eqn:OM-binary-terms} implies \eqref{eqn:OM-unary-terms}.\lipicsEnd
\end{remark}

\subsection{Proof of \texorpdfstring{\cref{strong-monad-monoidal}}{Prop. 30}}

For later reference, we note that the statement that $\eta$ and $\tau$ are monoidal includes that the following diagrams commute, respectively.
    \begin{equation}\label{eqn:eta-monoidal}
\begin{tikzpicture}[baseline=-\the\dimexpr\fontdimen22\textfont2\relax ]
      \node(01) at (1.5,0.5) {$X\times Y$};
      \node(11) at (0,-0.5) {$TX\times TY$};
      \node(12) at (3,-0.5) {$T(X\times Y)$};
      \draw [->] (01) to node[auto,labelsize] {$\eta_{X\times Y}$} (12);
      \draw [->] (01) to node[auto,swap,labelsize] {$\eta_X\times \eta_Y$} (11);
      \draw [->] (11) to node[auto,labelsize] {$\comm_{X,Y}$} (12);
\end{tikzpicture}
    \end{equation}
        \begin{equation}\label{eqn:str-monoidal}
\adjustbox{scale=0.85}{\begin{tikzpicture}[baseline=-\the\dimexpr\fontdimen22\textfont2\relax ]
      \node(01) at (0,0.5) {$(X\times TY)\times (X'\times TY')$};
      \node(02) at (5.5,0.5) {$(X\times X')\times (TY\times TY')$};
      \node(03) at (11,0.5) {$(X\times X')\times T(Y\times Y')$};
      \node(11) at (0,-0.5) {$T(X\times Y)\times T(X'\times Y')$};
      \node(12) at (5.5,-0.5) {$T\bigl((X\times Y)\times(X'\times Y')\bigr)$};
      \node(13) at (11,-0.5) {$T\bigl((X\times X')\times(Y\times Y')\bigr)$};
      \draw [->] (01) to node[auto,labelsize] {$\cong$} (02);
      \draw [->] (02) to node[auto,labelsize] {$1\times \comm_{Y,Y'}$} (03);
      \draw [->] (03) to node[auto,labelsize] {$\str_{X\times X',Y\times Y'}$} (13);
      \draw [->] (01) to node[auto,swap,labelsize] {$\str_{X,Y}\times \str_{X',Y'}$} (11);
      \draw [->] (11) to node[auto,labelsize] {$\comm_{X\times Y,X'\times Y'}$} (12);
      \draw [->] (12) to node[auto,labelsize] {$\cong$} (13);
\end{tikzpicture}}
    \end{equation}
    Here, the unnamed isomorphisms in \cref{eqn:str-monoidal} are the suitable composites of the structure isomorphisms for the symmetric monoidal category $(\Set,\vN 1,\times)$.
    
    The fact that $(T,\eta_1,\comm)$ is 
    a lax monoidal functor is well-known \cite[Thm.~2.1]{Kock-monads-on-SMCC}. 
    \cref{eqn:eta-monoidal} follows from \cite[proof of Thm.~3.2]{Kock-monads-on-SMCC}, in view of \cite[Rem.~2.4]{Kock-bilinearity}.
    Finally, \cref{eqn:str-monoidal} follows from a routine, though somewhat tedious, calculation. \qed

\subsection{Proof of \texorpdfstring{\cref{W-comm-general-characterization}}{Prop. 35}}
\label{proof:W-comm-general-characterization}
     \cref{W-comm-general-characterization}(2) is a special case of \cref{W-comm-general-characterization}(1), so the latter implies the former.
    To show the reverse implication, assume \cref{W-comm-general-characterization}(2). Take any $n$-tuple $\vec X=(X_j)_{j\in n}$ of sets, and let $X=\sum_{j\in n}X_j$.
    Now consider the following diagram.
    \begin{equation}
    \label{eqn:W-comm-alternative}
    \begin{tikzpicture}[baseline=-\the\dimexpr\fontdimen22\textfont2\relax ]
     \node(01o) at (-3,1.5) {$\prod_{j\in n}TX_j$};
     \node(02o) at (6,1.5) {$T\bigl(\prod_{j\in n}X_j\bigr)$};
     \node(11o) at (-3,-1.5) {$\prod_{i\in m}TX_{\alpha(i)}$};
      \node(12o) at (6,-1.5) {$T\bigl(\prod_{i\in m}X_{\alpha(i)}\bigr)$};
      \node(01) at (0,0.5) {$(TX)^n$};
      \node(02) at (3,0.5) {$T(X^n)$};
      \node(11) at (0,-0.5) {$(TX)^m$};
      \node(12) at (3,-0.5) {$T(X^m)$};
      \draw [->] (01) to node[auto,labelsize] {$\comm_{X}^{(n)}$} (02);
      \draw [->] (01) to node[auto,swap,labelsize] {$(TX)^\alpha$} (11);
      \draw [->] (11) to node[auto,labelsize] {$\comm_{X}^{(m)}$} (12);
      \draw [->] (02) to node[auto,labelsize] {$T(X^\alpha)$} (12);
      \draw [->] (01o) to node[auto,labelsize] {$\comm_{\vec X}$} (02o);
      \draw [->] (11o) to node[auto,labelsize] {$\comm_{\vec X\alpha}$} (12o);
      \draw [->] (01o) to node[auto,swap,labelsize] {$\langle \pi_{\alpha(i)}\rangle_{i\in m}$} (11o);
      \draw [->] (02o) to node[auto,labelsize] {$T\langle \pi_{\alpha(i)}\rangle_{i\in m}$} (12o);
      \draw [<-] (01) to node[auto,swap, near start, labelsize] {$\prod_{j\in n}T\iota_j$} (01o);
      \draw [<-] (02) to node[auto,near start, labelsize] {$T\bigl(\prod_{j\in n}\iota_j\bigr)$} (02o);
      \draw [<-] (11) to node[auto,near start, labelsize] {$\prod_{i\in m}T\iota_{\alpha(i)}$} (11o);
      \draw [<-] (12) to node[auto,swap,near start, labelsize] {$T\bigl(\prod_{i\in m}\iota_{\alpha(i)}\bigr)$} (12o);
\end{tikzpicture}
    \end{equation}
    Here, $\iota_j\colon X_j\to X$ denotes the $j$-th coprojection. 
    In \eqref{eqn:W-comm-alternative}, the five small squares commute. 
    The outer big square in \eqref{eqn:W-comm-alternative} also commutes since the morphism $T\bigl(\prod_{i\in m}\iota_{\alpha(i)}\bigr)$ is a monomorphism (the functor part of a monad (on $\Set$) always preserves monomorphisms; see e.g.\ \cite{Zhen-Lin}).
    Therefore \cref{W-comm-general-characterization}(1) holds. \qed

\subsection{Proof of \texorpdfstring{\cref{W-comm-mnd-characterizations}}{Prop. 36}}
\label{proof:W-comm-mnd-characterizations}

    The ``if'' part of each assertion is trivial. For the ``only if'' parts, see \cite[Lems~4.23, 4.27, and 4.33]{Parlant-thesis} for proofs of \cref{W-comm-mnd-characterizations}(3), (2), and (5), respectively. The remaining cases can be proved similarly.\qed

\subsection{Proof of \texorpdfstring{\cref{thm:main-detailed}}{Thm. 42}}
\label{proof:thm:main-detailed}
The naturality of $\delta$ is clear.
To say that $\delta$ is a distributive law of the monad $\Mnd{\mnd O}$ over the \emph{endofunctor} $T$ is equivalent to saying that the following diagrams commute for all $X\in\Set$:
\begin{equation}\label{eqn:dist-law-eta-LanJO}
\begin{tikzpicture}[baseline=-\the\dimexpr\fontdimen22\textfont2\relax ]
      \node(01) at (1.5,0.5) {$TX$};
      \node(11) at (0,-0.5) {$(\Lan_JO)TX$};
      \node(12) at (3,-0.5) {$T(\Lan_JO)X$};
      \draw [->] (01) to node[auto,swap,labelsize] {$\eta^{\Mnd{\mnd O}}_{TX}$} (11);
      \draw [->] (11) to node[auto,labelsize] {$\delta_X$} (12);
      \draw [->] (01) to node[auto,labelsize] {$T\eta^{\Mnd{\mnd O}}_X$} (12);
\end{tikzpicture}
\end{equation}
\begin{equation}\label{eqn:dist-law-mu-LanJO}
\begin{tikzpicture}[baseline=-\the\dimexpr\fontdimen22\textfont2\relax ]
      \node(01) at (0,0.5) {$(\Lan_JO)(\Lan_JO)TX$};
      \node(02) at (4.5,0.5) {$(\Lan_JO)T(\Lan_JO)X$};
      \node(03) at (9,0.5) {$T(\Lan_JO)(\Lan_JO)X$};
      \node(11) at (0,-0.5) {$(\Lan_JO)TX$};
      \node(12) at (9,-0.5) {$T(\Lan_JO)X$};
      \draw [->] (01) to node[auto,labelsize] {$(\Lan_JO)\delta_X$} (02);
      \draw [->] (02) to node[auto,labelsize] {$\delta_{(\Lan_JO)X}$} (03);
      \draw [->] (01) to node[auto,swap,labelsize] {$\mu^{\Mnd{\mnd O}}_{TX}$} (11);
      \draw [->] (11) to node[auto,labelsize] {$\delta_X$} (12);
      \draw [->] (03) to node[auto,labelsize] {$T\mu^{\Mnd{\mnd O}}_X$} (12);
\end{tikzpicture}
\end{equation}
The commutativity of \cref{eqn:dist-law-eta-LanJO} can be seen from the following.
\begin{equation*}
\begin{tikzpicture}[baseline=-\the\dimexpr\fontdimen22\textfont2\relax ]
      \node(01) at (1.25,1.5) {$TX$};
      \node(11) at (-0.5,0.5) {$\vN 1\times TX$};
      \node(12) at (3,0.5) {$T(\vN 1\times X)$};
      \node(21) at (-0.5,-0.5) {$O_1\times TX$};
      \node(22) at (3,-0.5) {$T(O_1\times X)$};
      \node(31) at (-0.5,-1.5) {$(\Lan_JO)TX$};
      \node(32) at (3,-1.5) {$T(\Lan_JO)X$};
      \draw[->, rounded corners=10pt] (01)--(-2.5,1.5)--(-2.5,-1.5)--(31) ;
      \draw[->, rounded corners=10pt] (01)--(5,1.5)--(5,-1.5)--(32) ;
      \draw [->] (01) to node[auto,swap,labelsize,xshift=6pt,yshift=-1pt] {$\ell^{-1}_{TX}$} (11);
      \draw [->] (11) to node[auto,labelsize] {$\str_{\vN 1,X}$} (12);
      \draw [->] (01) to node[auto,labelsize,xshift=-7pt,yshift=-1.5pt] {$T\ell^{-1}_X$} (12);
      \draw [->] (11) to node[auto,swap,labelsize] {$\nameof{\ido}\times 1$} (21);
      \draw [->] (21) to node[auto,swap,labelsize] {$\kappa_1$} (31);
      \draw [->] (12) to node[auto,labelsize] {$T(\nameof{\ido}\times 1)$} (22);
      \draw [->] (21) to node[auto,labelsize] {$\str_{O_1,X}$} (22);
      \draw [->] (22) to node[auto,labelsize] {$T\kappa_1$} (32);
      \draw [->] (31) to node[auto,labelsize] {$\delta_X$} (32);
      \node[labelsize] at (1.25,1) {\eqref{eqn:right-strength-unit}};
      \node[labelsize] at (1.25,0) {(naturality of $\tau$)};
      \node[labelsize] at (1.25,-1) {(definition of $\delta$)};
      \node[labelsize] at (-1.2,1) {(def.\ of $\eta^{\Mnd{\mnd O}}$)};
      \node[labelsize] at (-3.1,0) {$\eta^{\Mnd{\mnd O}}_{TX}$};
      \node[labelsize] at (3.7,1) {(def.\ of $\eta^{\Mnd{\mnd O}}$)};
      \node[labelsize] at (5.8,0) {$T\eta^{\Mnd{\mnd O}}_{X}$};
\end{tikzpicture}
\end{equation*}
To show the commutativity of  \cref{eqn:dist-law-mu-LanJO},
take any natural number $n$ and $n$-tuple $\vec m=(m_i)_{i\in\vN n}$ of natural numbers. Let $m=\sum_{i\in\vN n}m_i$. Then
\begin{equation*}
\adjustbox{scale=0.69}{
    \begin{tikzpicture}[baseline=-\the\dimexpr\fontdimen22\textfont2\relax ]
      \node(01) at (0,2.5) {$O_n\times \prod \bigl(O_{m_i}\times (TX)^{m_i}\bigr)$};
      \node(02) at (4,3.5) {$O_n\times \prod\bigl(O_{m_i}\times T(X^{m_i})\bigr)$};
      \node(03) at (8,4.5) {$O_n\times \prod T(O_{m_i}\times X^{m_i})$};
      \node(04) at (12,3.5) {$O_n\times T\bigl(\prod(O_{m_i}\times X^{m_i})\bigr)$};
      \node(05) at (16,2.5) {$T\bigl(O_n\times \prod(O_{m_i}\times X^{m_i})\bigr)$};
      \node(c) at (4,2) {$O_n\times (\prod O_{m_i})\times \prod T(X^{m_i})$};
      \node(d) at (12,2) {$O_n\times T\bigl((\prod O_{m_i})\times X^m\bigr)$};
      \node(11) at (0,0) {$O_n\times (\prod O_{m_i})\times (TX)^{m}$};
      \node(12) at (8,0) {$O_n\times (\prod O_{m_i})\times T(X^m)$};
      \node(13) at (16,0) {$T\bigl(O_n\times (\prod O_{m_i}) \times X^m\bigr)$};
      \node(21) at (0,-1.5) {$O_m\times (TX)^{m}$};
      \node(22) at (8,-1.5) {$O_m\times T(X^{m})$};
      \node(23) at (16,-1.5) {$T(O_m\times X^{m})$};
      \draw [->] (01) to node[auto,labelsize,yshift=-3pt,xshift=-3pt] {$1\times \prod(1\times\comm^{(m_i)}_{X})$} (02);
      \draw [->] (02) to node[auto,labelsize,yshift=-3pt] {$1\times \prod \str_{O_{m_i},X^{m_i}}$} (03);
      \draw [->] (03) to node[auto,labelsize,yshift=-3pt] {$1\times \comm_{\vec Y}$} (04);
      \draw [->] (04) to node[auto,labelsize,yshift=-3pt] {$\str_{O_n,\prod (O_{m_i}\times X^{m_i})}$} (05);
      \draw [->] (11) to node[midway,fill=white,labelsize] {$\subst_{n,\vec m}\times 1$} (21);
      \draw [->] (12) to node[midway,fill=white,labelsize] {$\subst_{n,\vec m}\times 1$} (22);
      \draw [->] (13) to node[midway,fill=white,labelsize] {$T(\subst_{n,\vec m}\times 1)$} (23);
      \draw [->] (21) to node[auto,labelsize] {$1\times \comm_{X}^{(m)}$} (22);
      \draw [->] (22) to node[auto,labelsize] {$\str_{O_m,X^m}$} (23);
      \draw [->] (05) to node[auto,labelsize] {$\cong$} (13);
      \draw [->] (11) to node[auto,labelsize] {$1\times 1\times \comm^{(m)}_X$} (12);
      \draw [->] (12) to node[auto,labelsize] {$\str_{O_n\times \bigl(\prod O_{m_i}\bigr),X^m}$} (13);
      \draw [->] (02) to node[auto,swap,labelsize] {$\cong$} (c);
      \draw [->] (04) to node[auto,labelsize] {$\cong$} (d);
      \draw [->] (c) to node[midway,fill=white,labelsize] {$1\times 1\times \comm_{\vec Z}$} (12);
      \draw [->] (11) to node[midway,fill=white,labelsize] {$1\times 1\times \prod\comm_{X}^{(m_i)}$} (c);
      \draw [->] (12) to node[midway,fill=white,labelsize] {$1\times \str_{\prod O_{m_i},X^m}$} (d);
      \draw[->] (d) to node[midway,fill=white,labelsize] {$\str_{O_n,\bigl(\prod O_{m_i}\bigr)\times X^m}$} (13);
      \draw [->] (01) to node[auto,swap,labelsize] {$\cong$} (11);
      \node[labelsize] at (8,2.5) {\cref{eqn:str-monoidal}};
      \node[labelsize] at (12,0.9) {\cref{eqn:right-strength-tensor}};
      \node[labelsize] at (4,0.9) {\cref{eqn:phi-decomposition}};
      \node[labelsize] at (14.7,1.55) {(naturality of $\str$)};
      \node[labelsize] at (1.3,1.55) {(naturality of $\cong$)};
      \node[labelsize] at (4,-0.7) {(bifunctoriality of $\times$)};
      \node[labelsize] at (12,-0.7) {(naturality of $\str$)};
\end{tikzpicture}}
\end{equation*}
completes the proof, where 
$\vec Y=(O_{m_i}\times X^{m_i})_{i\in\vN n}$ and 
$\vec Z=(X^{m_i})_{i\in\vN n}$.
This completes the proof of the first paragraph of \cref{thm:main-detailed}.

Next, to say that $\delta$ is a distributive law of the monad $\Mnd{\mnd O}$ over the \emph{monad} $\mnd T$ is equivalent to saying that, in addition to \eqref{eqn:dist-law-eta-LanJO} and \eqref{eqn:dist-law-mu-LanJO}, the following diagrams commute.

\begin{equation}\label{eqn:dist-law-eta-T}
\begin{tikzpicture}[baseline=-\the\dimexpr\fontdimen22\textfont2\relax ]
      \node(01) at (1.5,0.5) {$(\Lan_JO)X$};
      \node(11) at (0,-0.5) {$(\Lan_JO)TX$};
      \node(12) at (3,-0.5) {$T(\Lan_JO)X$};
      \draw [->] (01) to node[auto,swap,labelsize] {$(\Lan_JO)\eta^{\mnd T}_{X}$} (11);
      \draw [->] (11) to node[auto,labelsize] {$\delta_X$} (12);
      \draw [->] (01) to node[auto,labelsize] {$\eta^{\mnd T}_{(\Lan_JO)X}$} (12);
\end{tikzpicture}
\end{equation}
\begin{equation}\label{eqn:dist-law-mu-T}
\begin{tikzpicture}[baseline=-\the\dimexpr\fontdimen22\textfont2\relax ]
      \node(01) at (0,0.5) {$(\Lan_JO)TTX$};
      \node(02) at (3,0.5) {$T(\Lan_JO)TX$};
      \node(03) at (6,0.5) {$TT(\Lan_JO)X$};
      \node(11) at (0,-0.5) {$(\Lan_JO)TX$};
      \node(12) at (6,-0.5) {$T(\Lan_JO)X$};
      \draw [->] (01) to node[auto,labelsize] {$\delta_{TX}$} (02);
      \draw [->] (02) to node[auto,labelsize] {$T\delta_{X}$} (03);
      \draw [->] (01) to node[auto,swap,labelsize] {$(\Lan_JO)\mu^{\mnd T}_{X}$} (11);
      \draw [->] (11) to node[auto,labelsize] {$\delta_X$} (12);
      \draw [->] (03) to node[auto,labelsize] {$\mu^{\mnd T}_{(\Lan_JO)X}$} (12);
\end{tikzpicture}
\end{equation}

It turns out that the commutativity of \cref{eqn:dist-law-eta-T} can be proved without any additional assumptions. 
To this end, observe that the family $\bigl(\kappa_n\colon O_n\times X^n\to (\Lan_JO)X\bigr)_{n\in\NN}$ is jointly epimorphic.
Hence it suffices to check that for each $n\in\NN$, the morphism $\kappa_n\colon O_n\times X^n\to (\Lan_JO)X$ equalizes the two composites in \cref{eqn:dist-law-eta-T}:
\begin{equation*}
\begin{tikzpicture}[baseline=-\the\dimexpr\fontdimen22\textfont2\relax ]
      \node(01) at (5,1) {$O_n\times X^n$};
      \node(11) at (2,-0.25) {$O_n\times (TX)^n$};
      \node(13) at (5,-0.25) {$O_n\times T(X^n)$};
      \node(12) at (8,-0.25) {$T(O_n\times X^n)$};
      \node(21) at (0,-1.2) {$(\Lan_JO)TX$};
      \node(22) at (10,-1.2) {$T(\Lan_JO) X$};
      \node(31) at (0,1) {$(\Lan_JO)X$};
      \node(32) at (10,1) {$(\Lan_JO)X$};
      \draw [->] (01) to node[auto,swap,labelsize] {$\kappa_n$} (31);
      \draw [->] (01) to node[auto,labelsize] {$\kappa_n$} (32);
      \draw [->] (01) to node[auto,labelsize,xshift=-2pt,yshift=1pt] {$1\times \eta^{\mnd T}_{X^n}$} (13);
      \draw [->] (01) to node[auto,swap,labelsize,xshift=10pt,yshift=-2pt] {$1\times (\eta^{\mnd T}_X)^n$} (11);
      \draw [->] (11) to node[auto,swap,labelsize] {$1\times \comm_{X}^{(n)}$} (13);
      \draw [->] (13) to node[auto,swap,labelsize] {$\str_{O_n,X^n}$} (12);
      \draw [->] (01) to node[auto,labelsize,xshift=-10pt,yshift=-2pt] {$\eta^{\mnd T}_{O_n\times X^n}$} (12);
      \draw [->] (11) to node[auto,labelsize,xshift=-1pt,yshift=4pt] {$\kappa_n$} (21);
      \draw [->] (12) to node[auto,swap,labelsize,xshift=3pt,yshift=5pt] {$T\kappa_n$} (22);
      \draw [->] (21) to node[auto,swap,labelsize] {$\delta_{X}$} (22);
      \draw [->] (32) to node[auto,labelsize] {$\eta^{\mnd T}_{(\Lan_JO)X}$} (22);
      \draw [->] (31) to node[auto,swap,labelsize] {$(\Lan_JO)\eta^{\mnd T}_X$} (21);
      \node[labelsize] at (6.3,0.2) {\eqref{eqn:right-strength-eta}};
      \node[labelsize] at (4,0.2) {\eqref{eqn:eta-monoidal}};
      \node[labelsize] at (5,-0.8) {(definition of $\delta$)};
      \node[labelsize] at (1.5,0.3) {(def.\ of $\Lan_JO$)};
      \node[labelsize] at (8.5,0.4) {(naturality of $\eta^{\mnd T}$)};
\end{tikzpicture}
\end{equation*}

Finally we deal with \cref{eqn:dist-law-mu-T}.
The family $\bigl(\kappa_n\colon O_n\times (TTX)^n\to (\Lan_JO)TTX\bigr)_{n\in\NN}$ is jointly epimorphic. Consider the following.
\begin{equation*}
\adjustbox{scale=0.75}{
    \begin{tikzpicture}[baseline=-\the\dimexpr\fontdimen22\textfont2\relax ]
      \node(01) at (0,1) {$O_n\times (TTX)^n$};
      \node(02) at (4,1) {$O_n\times T((TX)^n)$};
      \node(03) at (8,1) {$T(O_n\times (TX)^n)$};
      \node(04) at (12,1) {$T(O_n\times T(X^n))$};
      \node(05) at (16,1) {$TT(O_n\times X^n)$};
      \node(c) at (8,0) {$O_n\times TT(X^n)$};
      \node(11) at (0,-1) {$O_n\times (TX)^n$};
      \node(12) at (8,-1) {$O_n\times T(X^n)$};
      \node(13) at (16,-1) {$T(O_n\times X^n)$};
      \draw [->] (01) to node[auto,labelsize,yshift=1pt] {$1\times \comm^{(n)}_{TX}$} (02);
      \draw [->] (02) to node[auto,labelsize,yshift=1pt] {$\str_{O_n,(TX)^n}$} (03);
      \draw [->] (03) to node[auto,labelsize,yshift=1pt] {$T(1\times \comm^{(n)}_X)$} (04);
      \draw [->] (04) to node[auto,labelsize,yshift=1pt] {$T\str_{O_n,X^n}$} (05);
      \draw [->] (01) to node[auto,labelsize] {$1\times (\mu^{\mnd T}_X)^n$} (11);
      \draw [->] (05) to node[auto,swap,labelsize] {$\mu^{\mnd T}_{O_n\times X^n}$} (13);
      \draw [->] (11) to node[auto,labelsize] {$1\times \comm^{(n)}_X$} (12);
      \draw [->] (12) to node[auto,labelsize] {$\str_{O_n,X^n}$} (13);
      \draw [->] (02) to node[auto,swap,near start,labelsize,yshift=3pt] {$1\times T\comm^{(n)}_X$} (c);
      \draw [->] (c) to node[auto,swap,labelsize] {$\str_{O_n,T(X^n)}$} (04);
      \draw [->] (c) to node[auto,labelsize] {$1\times \mu^{\mnd T}_{X^n}$} (12);
      \node[labelsize] at (4,-0) {(monad axiom when $n=0$;};
      \node[labelsize] at (4,-0.3) {trivial when $n=1$; \cref{eqn:mu-monoidal} when $n>1$)};
      \node[labelsize] at (12,-0) {\cref{eqn:right-strength-mu}};
      \node[labelsize] at (8,0.6) {(naturality of $\str$)};
\end{tikzpicture}}
\end{equation*}
Note that the monoidality \cref{eqn:mu-monoidal} of $\mu$ is equivalent to the commutativity of $\mnd T$ (see \cref{commutative-monad-characterization}(3)).
Also note that when $O_n=\vN 0$ for all $n>1$, we do not need to invoke \cref{eqn:mu-monoidal}.\qed

\subsection{Proof of \texorpdfstring{\cref{invariance-wrt-cob}}{Prop. 45}}\label{invariance-wrt-cob-proof}
Write the inclusion functors as follows.
\begin{equation*}
        \begin{tikzpicture}[baseline=-\the\dimexpr\fontdimen22\textfont2\relax ]
      \node(01) at (0,0.5) {$\W$};
      \node(02) at (2,0.5) {$\W'$};
      \node(11) at (1,-0.5) {$\Set$};
      \draw [->] (01) to node[auto,labelsize] {$K$} (02);
      \draw [->] (01) to node[auto,swap,labelsize] {$J$} (11);
      \draw [->] (02) to node[auto,labelsize] {$J'$} (11);
\end{tikzpicture}
    \end{equation*}
    We denote the underlying functor $\Lan_KO$ of $\mnd O'$ by $O'\colon\W'\to \Set$. 
    The unit of the left Kan extension $\Lan_KO$ is a natural transformation $\sigma\colon O\to O'K$. For each $n\in\NN$, its $n$-th component is a function $\sigma_n\colon O_n\to O'_n$.
For each set $Y$ and $n\in\NN$, denote by $\kappa_{Y,n}\colon O_n\times Y^n\to (\Lan_JO)Y=SY$ and $\kappa'_{Y,n}\colon O'_n\times Y^n\to (\Lan_{J'}O')Y=SY$ the coprojections associated with the coends.
Observe that the following triangle commutes.
\begin{equation}\label{eqn:sigma-kappa}
        \begin{tikzpicture}[baseline=-\the\dimexpr\fontdimen22\textfont2\relax ]
      \node(01) at (0,0.5) {$O_n\times Y^n$};
      \node(02) at (3,0.5) {$O_n'\times Y^n$};
      \node(11) at (1.5,-0.5) {$SY$};
      \draw [->] (01) to node[auto,labelsize] {$\sigma_n\times 1$} (02);
      \draw [->] (01) to node[auto,swap,labelsize] {$\kappa_{Y,n}$} (11);
      \draw [->] (02) to node[auto,labelsize] {$\kappa_{Y,n}'$} (11);
\end{tikzpicture}
    \end{equation}
    Now consider the following diagram.
    \begin{equation*}
    \begin{tikzpicture}[baseline=-\the\dimexpr\fontdimen22\textfont2\relax ]
        \node(91) at (0,1) {$O_n\times (TX)^n$};
      \node(92) at (3,1) {$O_n\times T(X^n)$};
      \node(93) at (6,1) {$T(O_n\times X^n)$};
      \node(01) at (0,0) {$O'_n\times (TX)^n$};
      \node(02) at (3,0) {$O'_n\times T(X^n)$};
      \node(03) at (6,0) {$T(O'_n\times X^n)$};
      \node(11) at (0,-1) {$STX$};
      \node(12) at (6,-1) {$TSX$};
      \draw [->] (91) to node[auto,labelsize] {$1\times \comm^{(n)}_X$} (92);
      \draw [->] (92) to node[auto,labelsize] {$\str_{O_n,X^n}$} (93);
      \draw [->] (01) to node[auto,swap,labelsize] {$1\times \comm^{(n)}_X$} (02);
      \draw [->] (02) to node[auto,swap,labelsize] {$\str_{O'_n,X^n}$} (03);
      \draw [->] (91) to node[auto,swap,labelsize] {$\sigma_{n}\times 1$} (01);
      \draw [->] (92) to node[auto,swap,labelsize] {$\sigma_{n}\times 1$} (02);
      \draw [->] (93) to node[auto,labelsize] {$T(\sigma_{n}\times 1)$} (03);
      \draw [->] (01) to node[auto,swap,labelsize] {$\kappa_{TX,n}'$} (11);
      \draw [->,transform canvas={yshift=-3pt}] (11) to node[auto,swap,labelsize] {$\delta_X$} (12);
      \draw [->,transform canvas={yshift=3pt}] (11) to node[auto,labelsize] {$\delta'_X$} (12);
      \draw [->] (03) to node[auto,labelsize] {$T\kappa'_{X,n}$} (12);
\end{tikzpicture}
    \end{equation*}
    The morphism $\delta'_X$ is determined by the commutativity of the lower rectangle, whereas $\delta_X$ is determined by the commutativity of the whole rectangle, thanks to \cref{eqn:sigma-kappa}.
    Thus we conclude $\delta=\delta'$.\qed

\subsection{Proof of \texorpdfstring{\cref{Mnd-W-Opd-adjunction}}{Prop. 57}}
This is immediate from \cref{rmk:Lan-restriction-lax-monoidal}, but  
here we sketch a direct proof not relying on the substitution monoidal structure.
Take any $\W$-operad $\mnd O$ and any monad $\mnd S$. 
By the universal property of $\Lan_J$, 
there is a bijective correspondence between natural transformations $\theta\colon O\to SJ$ and $\bar \theta\colon \Lan_JO\to S$.
Thus it suffices to show that $\theta$ is a morphism of $\W$-operads (i.e., it commutes with $\ido$ and $\subst$) iff $\bar \theta$ is a morphism of monads (i.e., it commutes with $\eta$ and $\mu$), which is not hard.\qed
\end{document}